%% 
%% This is a LaTeX document generated by automatic conversion of a
%% notebook-format document using Publicon 1.0.
%% 

\documentclass[amscd,amsmath,amssymb,verbatim, aps, prd, showpacs]{revtex4-2}
\usepackage{amsmath, amsthm, amssymb}

\usepackage{graphicx}
\usepackage{hyperref}
\usepackage{textcomp}
\usepackage[OT1,T1]{fontenc}
\usepackage{epstopdf}
\newtheorem{theorem}{Theorem}
\newtheorem{lemma}[theorem]{Lemma}
\newtheorem{corollary}[theorem]{Corollary}
\newtheorem{proposition}[theorem]{Proposition}

\newtheorem{definition}{Definition}
\newtheorem{remark}{Remark}
\newcommand{\wolframcharacter}[1]{}
\newcommand{\beginwidetext}{\begin{widetext}}
\newcommand{\mnras}{Monthly Notices of the Royal Astronomical Society}
\newcommand{\apjs}{Astrophysical Journal Supplement}
\newcommand{\aap}{Astronomy and Astrophysics}

\begin{document}

\title{Observables for the Effect of Gravity on Electromagnetic
Polarization}
\author{Kjell Tangen}
\email{kjell.tangen@gmail.com}
\affiliation{Oslo, Norway}\label{I1}
\date{\today}
\begin{abstract}
Does gravity affect the polarization of electromagnetic radiation
in an observable way? The effect of gravity on the observed polarization
of a ray of electromagnetic radiation is investigated for an arbitrary
4-dimensional spacetime and radiation with a frequency spectrum
within the geometric optics limit and with arbitrary state of polarization.
Focusing on effects observable by a single inertial observer, we
show how the presence of curvature along the null geodesic of polarized
electromagnetic radiation may induce observable changes in the state
of polarization. We find a set of scalars that quantify the effect
and derive their transport equations. Two of these scalars, the
{\itshape polarization degree} and the {\itshape circular polarization
degree}, are polarization state observables that are conserved along
the radiation geodesic. Four observables that quantify time rate
of change of the observed state of polarization are identified.
One observable, the {\itshape polarization axis rotation rate},
measures angular motion of the polarization axis. The presence of
curvature along the null geodesic of polarized electromagnetic radiation
may induce rotation of the polarization axis of the radiation, as
measured by an inertial observer. These observables quantify gravitational
effects on electromagnetic polarization unambiguously. They are
generally sensitive to gravity, which implies that polarized electromagnetic
radiation can potentially be used to measure the gravitational fields
it traverses. These observables and their corresponding transport
equations provide a complete representation of how gravity affects
the observed state of polarization of electromagnetic radiation
with frequencies above the geometric optics limit. Polarization
wiggling is sourced by curvature twist, which is a scalar derived
from the Riemann tensor. Curvature twist is closely related to the
magnetic part of the Weyl tensor, the second Weyl scalar as well
as the rotation of the rest frame geodesic congruence. The results
of this paper are valid for any metric theory of gravity.
\end{abstract}
\maketitle

 {\bfseries Keywords} \ Electromagnetic polarization, Observable,
Gravity, Gravitational Faraday Effect

\section{Introduction}

Does gravity affect the polarization of electromagnetic radiation
in a measurable way? Gravity is a complex phenomenon with widely
different physical manifestations on a wide range of length scales,
ranging from scalar gravity, which drives structure formation at
the largest observable scales in the universe, to tensorial gravity
causing displacements at the smallest of scales probed by human
experiment \cite{Schultz-2018}. Just as gravity's effects on matter
vary between the well-known, well described by Newtonian gravity,
to the faint effects of gravitational waves, just recently probed
experimentally \cite{Abbot-2016}, gravity's effects on electromagnetic
radiation vary between the well understood and the elusive. Among
the well understood effects are gravitational redshift and lensing.
Among the elusive effects is gravity's direct effect on electromagnetic
polarization. 

The effect of a gravitomagnetic field on electromagnetic polarization
has been studied extensively, and is known as the {\itshape Gravitational
Faraday Effect }\cite{Dehnen-1973} because of its analogy with the
Faraday Effect, which is the effect an ambient magnetic field has
on the polarization of a passing ray of electromagnetic radiation.
Still, from a theoretical perspective, an observable quantifying
the effect is missing. The observability of the effect has been
questioned \cite{Faraoni:2007uh}. Within cosmology, there is an
established presumption that gravity has no observable direct effect
on the polarized Cosmic Microwave Background (CMB) \cite{Dodelson-2003}.

The Gravitational Faraday Effect emerges when quantifying the evolution
of the polarization vector along the trajectory of a plane electromagnetic
wave propagating in a gravitational field. It appears as a rotation
of the polarization vector as it propagates along the radiation
trajectory. Quantifying this rotation requires a choice of reference
frame at each point along the radiation geodesic. Since there is
no canonical choice of reference frames in a curved spacetime, quantifying
the effect of a gravitational field on the polarization of electromagnetic
radiation by quantifying the rotation of the Gravitational Faraday
Effect yields ambiguous results.

From an observational perspective, the Gravitational Faraday Effect
on a single ray of radiation provides an insufficient view of the
effect of gravity on electromagnetic polarization since it lacks
unambiguous quantification. A different view and representation
of this effect may be needed to uncover observables for gravitational
effects on electromagnetic polarization. Such observables may instead
be obtained by studying how optical parameters vary over a cross
section of a bundle of rays of radiation. There are interesting
examples of observables quantifying gravitational effects on electromagnetic
radiation that were uncovered this way. One such example is how
gravity affects the {\itshape Orbital Angular Momentum} (OAM) of
a beam of electromagnetic radiation. OAM is the angular momentum
of a beam of electromagnetic radiation around the center of the
beam \cite{Allen-1992}. OAM is observable and can be quantified
from the phase variation of the radiation over a spatial cross-section
of the beam. It is sensitive to gravity \cite{Tamburini-2011}, and
OAM has recently been used to quantify the rotation parameters of
the M87 black hole \cite{Tamburini-2020,Tamburini-2021}. A second
example is how the variation over the sky of linear polarization
of the CMB has been used to probe the early universe \cite{WMAP-2013,Planck-2018}.
Also, gravitational vector modes may induce angular correlations
in the polarized CMB as well as in radiation from quasars \cite{Morales-2007}.
These are examples of how a gravitational field may induce variations
of optical parameters over a spatial cross section of a radiation
field and how these variations can be used to probe the gravitational
fields inducing them. 

In this paper, we will do something similar. But instead of considering
parameter variations over a spatial cross section of a congruence
of null geodesics, we will consider the intersection of a null geodesic
congruence of electromagnetic radiation by the timelike geodesic
of an inertial observer and quantify how observable optical parameters
vary with proper time of this observer. The present paper provides
a view and representation of the effect of gravity on electromagnetic
polarization that is different from that of the Gravitational Faraday
Effect. It represents the effect from the perspective of a single
inertial observer: Given a pointlike inertial emitter of polarized
electromagnetic radation, how is the presence of a gravitational
field along the trajectory of the radiation manifested in observations
made by this observer? Given timelike geodesics of the emitter and
observer, we will identify observables that quantify this effect
unambiguously. The effect is referred to as {\itshape polarization
wiggling}, which alludes to how the observed polarization axis of
any polarized electromagnetic radiation will wiggle in the presence
of gravitational fields along the radiation geodesics. 

The objective of this paper is simply to answer the question posed
initially for an arbitrary 4-dimensional spacetime. We will show
that gravity indeed has a real, direct physical effect on electromagnetic
polarization by identifying observables for this effect. By using
covariant techniques that are valid for any metric theory of gravity,
we make sure that the results are valid for any 4-dimensional spacetime
geometry defined by a pseudo-Riemannian metric. This effect opens
the prospect of measuring gravitational fields directly using polarized
electromagnetic radiation. 

A 4-dimensional metric theory of gravity is assumed. Our metric
convention is $(-+++)$, and relativistic units with $c=1$ are used.
The Einstein summation convention is assumed on repeated indices.
Throughout the paper, we follow a convention where, unless otherwise
stated, geodesics and geodesic congruences are referenced by their
tangent vector fields.

\subsection{The Gravitational Faraday Effect}\label{XRef-Subsection-66111547}

Based on theoretical work on geometric optics in gravitational fields
by Rytov \cite{rytov1937wave}, Skrotskii first described the effect
of a stationary gravitational field on the polarization of a plane
electromagnetic wave \cite{Skrotskii-1957}. This effect is now alternately
referred to as the {\itshape Skrotskii effect} or the {\itshape
Gravitational Faraday effect}. It should be mentioned that the term
Gravitational Farday Effect alludes to the analogy with the Faraday
Effect and the corresponding deep analogy between electromagnetism
and the gravitoelectromagnetic formulation of General Relativity
\cite{Poisson-Will-2014,Mashhoon:2003ax,Carrol-2004}. Since Skrotskii's
original paper, the effects of gravity on polarized electromagnetic
radiation has been studied by many authors \cite{Dehnen-1973,Mashhoon-1973,Perlick-1993,Balazs-1958,Plebanski-1960,Godfrey-1970,Pineault-1977,Su-Mallet-1980,Fayos-Llosa-1982,Ishihara-1988,Gnedin-Dymnikova-1988,Kobzarev-1988,Kopeikin-2001,Nouri-Zonoz-1999,Sereno-2004,Sereno-2005,Brodutch-2011-11,Brodutch-2011-12,Montanari-1998,Ruggiero-2007,Morales-2007}.
All of these papers study gravitational effects on plane electromagnetic
waves. Most papers focus on the effect of a stationary gravitational
field \cite{Mashhoon-1973,Balazs-1958,Plebanski-1960,Godfrey-1970,Pineault-1977,Su-Mallet-1980,Fayos-Llosa-1982,Ishihara-1988}\cite{Nouri-Zonoz-1999,Brodutch-2011-11,Brodutch-2011-12}.
In particular, Fayos and Llosa analyzed the Gravitational Faraday
effect for an arbitrary stationary spacetime \cite{Fayos-Llosa-1982}.
Their result is quite general and reproduces earlier results obtained
by Plebanski \cite{Plebanski-1960}, Godfrey \cite{Godfrey-1970}
and Pineault and Roeder \cite{Pineault-1977} for more specific stationary
metrics. More general gravitational systems in asymptotically flat
spacetimes have also been studied \cite{Plebanski-1960}\cite{Gnedin-Dymnikova-1988,Kobzarev-1988,Kopeikin-2001}\cite{Sereno-2004,Sereno-2005}.
Montanari studied the effect of a plane gravitational wave \cite{Montanari-1998}.
Perlick and Hasse studied the effect in conformally stationary spacetimes
\cite{Perlick-1993}. Ruggiero and Tartaglia analyzed the Gravitational
Faraday effect of the gravitational field of a binary pulsar system
\cite{Ruggiero-2007}. Morales and S\`aez studied the effect of gravitational
vector perturbations in a cosmology with a flat FLRW background
\cite{Morales-2007}.

Let us give a brief summary of the Gravitational Faraday effect.
The Gravitational Faraday effect appears in linear gravity when
expanding the parallel transport equation for the polarization vector
of the electromagnetic vector potential. Given a gravitational field
expressed as a metric perturbation $h_{\mu \nu }$ to a Minkowsski
background, this parallel transport equation can be rewritten as
a transport equation for the polarization vector of the electric
field. To first perturbative order, the transport equation for the
transverse polarization vector $\epsilon ^{i}$ is, stated in an
inertial frame and parametrized in terms of distance $s$ along the
radiation path \cite{Kopeikin-2001}: 
\begin{equation}
\frac{d\text{\boldmath $\epsilon $}}{ds}=\text{\boldmath $\Omega
$}\times \text{\boldmath $\epsilon $},%
\label{XRef-Equation-121771546}
\end{equation}

where $\Omega ^{i}\equiv \frac{1}{2p}\varepsilon _{\mathrm{ijk}}p^{\alpha
}h_{\alpha  j,k}$, $p$ is the photon energy, $\varepsilon _{\mathrm{ijk}}$
is the antisymmetric Levi-Civita symbol, and $p^{\alpha }$ is the
photon 4-momentum.

Eq. (\ref{XRef-Equation-121771546}) exposes the Gravitational Faraday
effect as a rotation of the polarization vector $\text{\boldmath
$\epsilon $}$ as the radiation propagates along its trajectory.
The angular velocity is $\text{\boldmath $\Omega $}$. Projecting
$\text{\boldmath $\Omega $}$ against the direction of propagation,
$\hat{\text{\boldmath $p$}}$, gives the rotation rate of the polarization
axis about the direction of propagation: 
\begin{equation}
\nu \equiv {\hat{p}}^{i}\Omega ^{i}=\frac{1}{2p}{\hat{p}}^{i}\varepsilon
_{\mathrm{ijk}}p^{\alpha }h_{\alpha  j,k}.%
\label{XRef-Equation-177445}
\end{equation}

If we consider the example of a stationary metric with $w_{i}\equiv
h_{0i}\neq 0$ and $h_{\mathrm{ij}}=2\Phi  \delta _{\mathrm{ij}}$,
representing the gravitational field of a rotating body, the rotation
rate $\nu $ is
\begin{equation}
\nu =-\frac{1}{2}\hat{\text{\boldmath $p$}}\cdot \text{\boldmath
$B$},%
\label{XRef-Equation-11972822}
\end{equation}

where $\text{\boldmath $B$}\equiv \nabla \times \text{\boldmath
$w$}$ is the gravitomagnetic field \cite{Poisson-Will-2014}. This
relates the rotation rate of the polarization axis to the strenght
of the gravitomagnetic field, which motivates naming the effect
the Gravitational Faraday Effect.

The Gravitational Faraday effect can easily be quantified, as was
done here, by choosing a polarization basis at each point along
the geodesic, then evaluating the rotation rate of the polarization
vector along the geodesic at each point and finally integrating
along the null geodesic to obtain the total rotation of the polarization
vector over the path. However, as was pointed out by Brodutch, Demarie
and Terno \cite{Brodutch-2011-11}, the resulting rotation is only
meaningful in the context of a particular choice of polarization
basis at each point along the geodesic. Since there is no canonical
choice of reference frames in a curved spacetime, different choices
of polarization basis along the null geodesic may yield different
results for the angle of rotation of the polarization vector. 

In the literature on the Gravitational Faraday effect, authors have
made different choices of polarization basis along the null geodesic.
While Skrotskii calculated the rotation rate relative to a Frenet
frame, some authors chose to analyze the Gravitational Faraday Effect
in the coordinate frame \cite{Plebanski-1960,Kopeikin-2001}. Other
authors have computed the rotation rate relative to a frame that
is parallel propagated along the spatial trajectory of the ray by
using the spatial metric \cite{Fayos-Llosa-1982,Nouri-Zonoz-1999,Brodutch-2011-11}.
Each of these approaches quantifies the Gravitational Faraday Effect
differently, and the results are therefore difficult to compare.

An additional complicating factor is that observable quantities
also will depend on the motions of the frames of emission and observation.
If we assume inertial frames of emission and observation, these
frames will for instance be subject to frame dragging in the presence
of gravitomagnetic fields, which will affect the observed rotation
rates of the polarization axis of the radiation. It is therefore
essential that the motions of the frames of observation and emission
are taken into account in the formulation of observables that quantify
the Gravitational Faraday effect.

\subsection{Paper Overview}

The main objective of this paper is to show that gravity indeed
has a real, physical effect on electromagnetic polarization by identifying
observables that quantify this effect unambiguously, independent
of any choice of local polarization basis, for any 4-dimensional
spacetime and for radiation with arbitrary frequency composition
and state of polarization.

To meet this objective, we have chosen a perspective on the effects
of gravity on electromagnetic polarization that differs from that
of previous publications on this topic. While previous authors have
restricted their attention to specific gravitational systems using
specific metrics, we will approach the effect geometrically and
in complete generality, using covariant constructs.

To present the results in a proper way, the paper gives a comprehensive
and self-contained account of gravity's effect on the polarization
of electromagnetic radiation. Each of the main constructs are defined
in a labeled Definition statement. The results are derived step
by step through a long series of minor propositions. The main results
are formulated as theorems. Although somewhat elaborate, this way
of presenting the results was chosen to make the derivations explicit
and easier to follow. 

First, the nomenclature and covariant constructs needed for our
study is introduced in Section \ref{XRef-Section-44183620}.

In sections \ref{XRef-Section-331204723} through \ref{XRef-Section-112692639},
the results are derived step by step. For readers just interested
in the results, a brief result summary can be found in Section \ref{XRef-Section-527172254}.\ \ 

In Section \ref{XRef-Section-331204723}, we define the problem formally
and covariantly for a plane electromagnetic wave, using standard
Riemannian geometry. We seek a unified treatment of the effect,
valid for any 4-dimensional spacetime. Two complex scalars, which
we refer to as the {\itshape polarization wiggle scalars}, are defined
as two independent transverse projections of the time derivative
of the transverse polarization vector. Their covariant transport
equations are derived. By performing a helicity decomposition of
the polarization wiggle scalars, a set of four observables that
quantify time rates of change of the state of polarization emerge.
One of these observables,\ \ the {\itshape polarization axis wiggle
rate}, is the angular speed of the polarization axis, as measured
by an inertial observer. Another observable, the {\itshape circular
polarization wiggle rate,} is proportional to the rate of change
of the circular polarization degree. At the end of this section,
we demonstrate how the polarization axis wiggle rate relates to
previous results on the Gravitational Faraday Effect in linear gravity.

The results are generalized to composite radiation with a frequency
spectrum within the geometric optics limit and arbitrary state of
polarization in Section \ref{XRef-Section-52794743}. We find that
the {\itshape total degree of polarization}, one of the observables
of the state of polarization, is conserved along the null geodesic
of the radiation.

In Section \ref{XRef-Section-112692639}, we study the curvature
twist scalar, which appears in the transport equation for the polarization
wiggle scalars. We relate it to several known quantities, such as
the second Weyl scalar, the magnetic part of the Weyl tensor as
well as the rotation of the rest frame geodesic congruence.

A summary of the results of this paper is given in Section \ref{XRef-Section-527172254}.
A discussion of the results can be found in Section \ref{XRef-Section-65111418},
and conclusions are drawn in Section \ref{XRef-Section-457621}.

\section{Definitions}\label{XRef-Section-44183620}

In this section, we will review the nomenclature and basic constructs
that will be applied in later sections. Since some of the constructs
used are new or not commonly used, they will be reviewed here before
we commence our analysis in the subsequent section. More details
can be found in the appendices.

\subsection{Spacetime Decomposition}\label{XRef-Subsection-2269858}

Throughout the paper, we will assume the presence of a congruence
of\ \ timelike geodesics $u^{\mu }$. These geodesics will be referred
to as the timelike{\itshape  reference geodesics} and the congruence
as the {\itshape reference geodesic congruence}. They can be viewed
as the geodesics of inertial objects {\itshape  at rest}, and an
inertial frame along one such geodesic is called a {\itshape rest
frame}. In Gaussian normal coordinates (synchronous coordinates)
adapted to the timelike reference geodesics, objects at rest are
stationary.

\,Given a spacetime with metric $g_{\mu \nu }$, the spatial projection
tensor is defined as
\begin{equation}
\gamma _{\mu \nu }\equiv g_{\mu \nu }+u_{\mu }u_{\nu }.%
\label{XRef-Equation-2268398}
\end{equation}

Then, assuming the presence of an electromagnetic plane wave with
wave vector $p^{\mu }$, at any point along its null geodesic, the
null vector $p^{\mu }$ can be decomposed into timelike and spacelike
vectors as follows: 
\begin{equation}
p^{\mu }=p( u^{\mu }+{\hat{p}}^{\mu }) ,%
\label{XRef-Equation-62782635}
\end{equation}

where $p\equiv -u_{\mu }p^{\mu }$ is the photon energy measured
in the rest frame, while ${\hat{p}}^{\mu }$ is the vector
\begin{equation}
{\hat{p}}^{\mu }\equiv \frac{p^{\mu }}{p}-u^{\mu }=\gamma _{\nu
}^{\mu }\frac{p^{\nu }}{p}.%
\label{XRef-Equation-910163241}
\end{equation}

${\hat{p}}^{\mu }$ is a spacelike unit vector. It is orthogonal
to $u^{\mu }$, and represents the spatial direction of propagation
of the radiation. We will refer to $p$ as the {\itshape rest frame
momentum}. The rest frame momentum can also be found by contracting
${\hat{p}}^{\mu }$ with the 4-momentum: $p=p_{\mu }{\hat{p}}^{\mu
}$. \,We will refer to the null vector $n^{\mu }\equiv p^{\mu }/p=u^{\mu
}+{\hat{p}}^{\mu }$ as the {\itshape scale invariant null vector}
associated with a null vector $p^{\mu }$.

The state of polarization of electromagnetic radiation can be quantified
in terms of the four Stokes parameters. They are measurable quantities
that can be expressed in terms of the electric field vector and
its projections against a {\itshape polarization basis}, which consists
of two mutually orthogonal unit vectors that both are transverse
to the direction of propagation of the radiation, $\hat{p}$. Let
the polarization basis vectors be denoted by $e_{A}^{\mu }, A=1,2$.
In the following, we will assume that the polarization basis is
parallel transported along the timelike reference geodesics. The
four unit vectors $u^{\mu }, {\hat{p}}^{\mu }$ and $e_{A}^{\mu },
A=1,2$ form a tetrad.

This decomposition can be made for any spacetime geometry, and does
not put any constraints on the spacetime geometry.

The {\itshape null tetrad formalism} introduced by Newman and Penrose
\cite{Newman-Penrose-1962} represents an alernative spacetime decomposition.
Here we will just capture the basic representations needed for this
paper. It is referred to the book by Stephani et. al for a complete
account of the formalism \cite{Stephani-2003}. A complex null tetrad
consists of four null vectors; two real and two complex. Our choice
of null tetrad can be expressed in terms of the $(u, \hat{p}, e_{A})$
tetrad as follows:
\begin{gather*}
k^{\mu }\equiv \frac{1}{\sqrt{2}}\left( u^{\mu }+{\hat{p}}^{\mu
}\right) , l^{\mu }\equiv \frac{1}{\sqrt{2}}\left( u^{\mu }-{\hat{p}}^{\mu
}\right) 
\\m_{+}^{\mu }\equiv \frac{1}{\sqrt{2}}\left( e_{1}^{\mu }+i e_{2}^{\mu
}\right) ,\ \ m_{-}^{\mu }\equiv \frac{1}{\sqrt{2}}\left( e_{1}^{\mu
}-i e_{2}^{\mu }\right) .
\end{gather*}

This basis satisfies $k^{\mu }k_{\mu }=l^{\mu }l_{\mu }=m_{+}^{\mu
}m_{\mu }^{+}=m_{-}^{\mu }m_{\mu }^{-}=0$, $k^{\mu }l_{\mu }=-1$
and $m_{+}^{\mu }m_{\mu }^{-}=1$.

Define differential operators $D\equiv k^{\alpha }\nabla _{\alpha
}$ and $\Delta \equiv l^{\alpha }\nabla _{\alpha }$. We notice that
\[
\frac{1}{\sqrt{2}}\left( D+\Delta \right) =\nabla _{u}.
\]

We will assume that the polarization basis $e_{A}^{\mu }$ is parallel
transported along the geodesic $u^{\mu }$ and therefore satisfies
$\nabla _{u}e_{A}^{\mu }=0$. Therefore, 
\begin{equation}
\left( D+\Delta \right) m_{\pm }^{\nu }=\sqrt{2}\nabla _{u}m_{\pm
}^{\nu }=0.%
\label{XRef-Equation-102565226}
\end{equation}

\subsection{Screen Projector}

The spatial projection tensor $\gamma _{\mu \nu }$ can be further
decomposed in the presence of a null congruence. 
\begin{definition}

Given a null vector $p^{\mu }$ and a timelike unit vector $u^{\mu
}$, define the {\bfseries screen projector} as the transverse projector\label{XRef-Definition-122074554}
\begin{equation}
S_{\mu \nu }\equiv \gamma _{\mu \nu }-{\hat{p}}_{\mu }{\hat{p}}_{\nu
}=g_{\mu \nu }+u_{\mu }u_{\nu }-{\hat{p}}_{\mu }{\hat{p}}_{\nu }.
\end{equation}
\end{definition}

$S_{\mu \nu }$ is referred to as the {\itshape screen projector
}\cite{Naruko-2013,Pitrou-2021}, because it can be used to project
any 4-field to a transverse 4-field. If $v^{\mu }$ is an arbitrary
4-vector, $S_{\nu }^{\mu }v^{\nu }$ will be transverse to $p^{\mu
}$, because $u_{\mu }S_{\nu }^{\mu }={\hat{p}}_{\mu }S_{\nu }^{\mu
}=0$. Furthermore, the screen projector satisfies
\[
S_{\alpha }^{\mu }S_{\nu }^{\alpha }=S_{\nu }^{\mu }.
\]

The screen projector has trace 2: $g^{\mu \nu }S_{\mu \nu }=2$.
This decomposition can be done for any spacetime geometry.

Notice that, if $u^{\mu }$ belongs to a timelike geodesic congruence
that is hypersurface orthogonal, $\gamma _{\mu \nu }$ represents
the metric of a 3-dimensional spatial hypersurface, while if $p^{\mu
}$ belongs to a hypersurface orthogonal null congruence, $S_{\mu
\nu }$ represents the metric of a 2-dimensional hypersurface everywhere
transverse to the direction of propagation ${\hat{p}}^{\mu }$.

In terms of the\ \ $(k,l,m_{\pm })$ null tetrad, the screen projector
takes the form
\[
S_{\nu }^{\mu }=\delta _{\nu }^{\mu }+k^{\mu }l_{\nu }+l^{\mu }k_{\nu
}.
\]

\subsection{Transverse Directional Derivative}
\begin{definition}

Given a null vector $p^{\mu }$ and its screen projector $S_{\mu
\nu }$, define the {\bfseries {\itshape transverse covariant derivative}}
of a vector $V^{\mu }$ as 
\[
\mathcal{D}_{\mu }V^{\nu }\equiv S_{\alpha }^{\nu }\nabla _{\mu
}V^{\alpha }.
\]
\end{definition}
\begin{definition}

Corresponding to the directional derivative $\nabla _{U}V$ of the
vector $V$ with respect to a vector $U$, define the {\bfseries {\itshape
transverse covariant directional derivative}} of $V$ with respect
to $U$ as
\[
\mathcal{D}_{U}V^{\mu }\equiv u^{\beta }\mathcal{D}_{\beta }V^{\mu
}=S_{\alpha }^{\mu }\nabla _{U}V^{\alpha }=S_{\alpha }^{\mu }U^{\beta
}\nabla _{\beta }V^{\alpha }.
\]
\end{definition}

These definitions trivially extend to tensors of higher rank and
scalars. For example, for a $(0,n)$ tensor $T$, the transverse covariant
derivative of $T$ is defined by
\[
\mathcal{D}_{\lambda }T_{\mu _{1}...\mu _{n}}\equiv S_{\mu _{1}}^{\alpha
_{1}}...S_{\mu _{n}}^{\alpha _{n}}\nabla _{\lambda }T_{\alpha _{1}...\alpha
_{n}},
\]

while the transverse covariant directional derivative of $T$ with
respect to $U$ is defined by
\[
\mathcal{D}_{U}T_{\mu _{1}...\mu _{n}}\equiv U^{\lambda }\mathcal{D}_{\lambda
}T_{\mu _{1}...\mu _{n}}= S_{\mu _{1}}^{\alpha _{1}}...S_{\mu _{n}}^{\alpha
_{n}}U^{\lambda }\nabla _{\lambda }T_{\alpha _{1}...\alpha _{n}}.
\]

The transverse covariant derivative of a scalar is simply defined
as its derivative: $\mathcal{D}_{\mu }S\equiv \nabla _{\mu }S$ for
scalar $S$.
\begin{proposition}

The transverse covariant derivative of the screen projector vanishes:\label{XRef-Proposition-11236541}
\begin{equation}
\mathcal{D}_{\mu }S_{\alpha \beta }\equiv S_{\alpha }^{\lambda }S_{\beta
}^{\gamma }\nabla _{\mu }S_{\lambda \gamma }=0.
\end{equation}
\end{proposition}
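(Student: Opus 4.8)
The plan is to expand the screen projector by its definition, differentiate once, and let the two transverse projectors annihilate every surviving term. Using $S_{\lambda\gamma}=g_{\lambda\gamma}+u_\lambda u_\gamma-\hat p_\lambda\hat p_\gamma$ from Definition \ref{XRef-Definition-122074554} and applying $\nabla_\mu$, metric compatibility $\nabla_\mu g_{\lambda\gamma}=0$ removes the metric contribution outright, so that by the Leibniz rule
\[
\nabla_\mu S_{\lambda\gamma}=(\nabla_\mu u_\lambda)u_\gamma+u_\lambda(\nabla_\mu u_\gamma)-(\nabla_\mu\hat p_\lambda)\hat p_\gamma-\hat p_\lambda(\nabla_\mu\hat p_\gamma).
\]
The key structural observation I would highlight is that each of these four terms carries exactly one \emph{undifferentiated} factor of either $u$ or $\hat p$ sitting in the $\lambda$ or $\gamma$ slot.

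Next I would contract with $S_\alpha^\lambda S_\beta^\gamma$ and invoke the orthogonality relations $u_\mu S_\nu^\mu=\hat p_\mu S_\nu^\mu=0$ recorded just after Definition \ref{XRef-Definition-122074554}. In the first term the factor $u_\gamma$ meets $S_\beta^\gamma$ to give $S_\beta^\gamma u_\gamma=0$; in the second term $u_\lambda$ meets $S_\alpha^\lambda$ to give $S_\alpha^\lambda u_\lambda=0$; the two $\hat p$ terms vanish in the same way through $S_\alpha^\lambda\hat p_\lambda=S_\beta^\gamma\hat p_\gamma=0$. Hence every term is killed and $\mathcal{D}_\mu S_{\alpha\beta}=S_\alpha^\lambda S_\beta^\gamma\nabla_\mu S_{\lambda\gamma}=0$, as claimed.

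There is no genuinely hard step here; the argument is pure bookkeeping once the right expansion is in hand. The only point worth flagging is what one does \emph{not} need: the covariant derivatives $\nabla_\mu u_\lambda$ and $\nabla_\mu\hat p_\lambda$ do not vanish in general, since no parallel-transport assumption is made along the arbitrary direction indexed by $\mu$, so the result cannot be obtained by killing those derivatives. The mechanism is instead that differentiating a rank-two projector built algebraically from $u$ and $\hat p$ always leaves one of the two generating covectors undifferentiated in each Leibniz term, and the transverse projectors are precisely orthogonal to both $u$ and $\hat p$. This is the same reason the transverse covariant derivative annihilates any tensor assembled algebraically from $g$, $u$, and $\hat p$, a fact I would expect to reuse freely in the subsequent transport-equation derivations.
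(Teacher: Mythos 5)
Your argument is correct and is exactly the paper's proof, merely spelled out: the paper's one-line justification invokes the same two ingredients (metric compatibility and the orthogonality $u^{\mu}S_{\mu\nu}=\hat{p}^{\mu}S_{\mu\nu}=0$) that you use to kill each Leibniz term. Your closing observation that the mechanism extends to any tensor built algebraically from $g$, $u$, and $\hat{p}$ is a nice bonus but not a departure from the paper's route.
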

\begin{proof}

Since the covariant derivative of the metric vanishes and $u^{\mu
}S_{\mu \nu }={\hat{p}}^{\mu }S_{\mu \nu }=0$, it follows directly
from its definition in Definition \ref{XRef-Definition-122074554}
that the transverse covariant derivative of the screen projector
vanishes.
\end{proof}

\subsection{Screen Rotator}\label{XRef-Subsection-43010640}
\begin{definition}

Given a null vector $p^{\mu }$ and a timelike unit vector $u^{\mu
}$, define the {\bfseries {\itshape screen rotator}} as the antisymmetric
tensor\label{XRef-Definition-112264556}
\begin{equation}
\epsilon _{\mu \nu }\equiv u^{\alpha }{\hat{p}}^{\beta }\epsilon
_{\alpha \beta \mu \nu },%
\label{XRef-Equation-11237352}
\end{equation}

where $\epsilon _{\alpha \beta \mu \nu }$ is the Levi-Civita tensor.
\end{definition}

$\epsilon _{\mu \nu }$ is transverse by construction, because $\epsilon
_{\mu \nu }u^{\nu }=\epsilon _{\mu \nu }{\hat{p}}^{\nu }=\epsilon
_{\mu \nu }p^{\mu }=0$. We refer to this tensor as the screen rotator,
because it is a generator of rotations in the transverse plane.
A small rotation $R_{\mu \nu }( \varphi ) $ an angle $\varphi $
about the propagation direction ${\hat{p}}^{\mu }$ can, when expanded
to first order in $\varphi $, be written as
\begin{equation}
R_{\mu \nu }( \varphi ) =S_{\mu \nu }-\varphi  \epsilon _{\mu \nu
}+\mathcal{O}( \varphi ^{2}) .%
\label{XRef-Equation-225155538}
\end{equation}

In terms of the\ \ $(k,l,m_{\pm })$ null tetrad, the screen rotator
takes the form
\[
\epsilon ^{\mu \nu }=i( m_{+}^{\mu }m_{-}^{\nu }-m_{-}^{\mu }m_{+}^{\nu
}) .
\]
\begin{proposition}

The screen rotator satisfies\label{XRef-Proposition-122083225}
\begin{equation}
\epsilon _{\mu \nu }\epsilon ^{\rho \gamma }=S_{\mu }^{\rho }S_{\nu
}^{\gamma }-S_{\mu }^{\gamma }S_{\nu }^{\rho }.%
\label{XRef-Equation-1220121228}
\end{equation}
\end{proposition}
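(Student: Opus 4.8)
The plan is to verify the identity by passing to the Newman--Penrose null tetrad, in which both the screen rotator and the screen projector have already been given simple bilinear expressions, and then to match the two sides term by term. Since both sides of (\ref{XRef-Equation-1220121228}) are tensors assembled algebraically from $g_{\mu \nu }$, $u^{\mu }$ and ${\hat{p}}^{\mu }$, it suffices to establish the equality of these tetrad expressions. First I would record the raised form of the screen projector: starting from $S_{\nu }^{\mu }=\delta _{\nu }^{\mu }+k^{\mu }l_{\nu }+l^{\mu }k_{\nu }$ and the completeness relation $g^{\mu \nu }=-k^{\mu }l^{\nu }-l^{\mu }k^{\nu }+m_{+}^{\mu }m_{-}^{\nu }+m_{-}^{\mu }m_{+}^{\nu }$---itself a consequence of $k^{\mu }l_{\mu }=-1$, $m_{+}^{\mu }m_{\mu }^{-}=1$ and the vanishing of all remaining inner products---the $k$--$l$ terms cancel and one is left with $S^{\mu \nu }=m_{+}^{\mu }m_{-}^{\nu }+m_{-}^{\mu }m_{+}^{\nu }$, equivalently $S_{\mu }^{\rho }=m_{+\mu }m_{-}^{\rho }+m_{-\mu }m_{+}^{\rho }$.

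Next I would substitute. Using $\epsilon _{\mu \nu }=i( m_{+\mu }m_{-\nu }-m_{-\mu }m_{+\nu }) $ and its raised counterpart, the left-hand side is $\epsilon _{\mu \nu }\epsilon ^{\rho \gamma }=-( m_{+\mu }m_{-\nu }-m_{-\mu }m_{+\nu }) ( m_{+}^{\rho }m_{-}^{\gamma }-m_{-}^{\rho }m_{+}^{\gamma }) $ after using $i^{2}=-1$. For the right-hand side I would insert the tetrad form of $S_{\mu }^{\rho }$ into $S_{\mu }^{\rho }S_{\nu }^{\gamma }-S_{\mu }^{\gamma }S_{\nu }^{\rho }$ and expand into eight terms. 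The decisive simplification is that the terms carrying a repeated helicity label on the index pair $(\mu ,\nu )$, namely those proportional to $m_{+\mu }m_{+\nu }$ or to $m_{-\mu }m_{-\nu }$, occur identically in $S_{\mu }^{\rho }S_{\nu }^{\gamma }$ and in $S_{\mu }^{\gamma }S_{\nu }^{\rho }$ and hence cancel in the difference; the surviving mixed-helicity terms reorganize precisely into the expansion of the left-hand side obtained above.

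The computation itself is elementary, so I do not expect a genuine obstacle, only two places that demand care. The first is fixing the sign in the completeness relation, and hence in $S^{\mu \nu }$, which rests on the Lorentzian normalization $k^{\mu }l_{\mu }=-1$; the second is tracking the overall sign produced by $i^{2}=-1$ against the antisymmetry in $(\rho ,\gamma )$ on the right-hand side. As an independent check---or an alternative proof that avoids the complex tetrad vectors---I would instead adopt an orthonormal frame with $u^{\mu }=e_{0}^{\mu }$ and ${\hat{p}}^{\mu }=e_{3}^{\mu }$; there $\epsilon _{\mu \nu }$ reduces to the two-dimensional Levi-Civita symbol on the transverse $(1,2)$-plane while $S_{\mu \nu }$ reduces to the identity on that plane, so the claim collapses to the familiar two-dimensional contraction $\varepsilon _{ab}\varepsilon ^{cd}=\delta _{a}^{c}\delta _{b}^{d}-\delta _{a}^{d}\delta _{b}^{c}$. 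Because both sides are tensors, verifying the relation in this adapted frame establishes it in full generality.
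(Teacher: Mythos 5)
Your proposal is correct, and your "independent check" at the end is in fact exactly the paper's own proof: the paper expands $\epsilon_{\mu\nu}$ in a real screen basis $e^{A}_{\mu}$, invokes the two-dimensional identity $\varepsilon_{\mathrm{AB}}\varepsilon^{\mathrm{CD}}=\delta_{A}^{C}\delta_{B}^{D}-\delta_{A}^{D}\delta_{B}^{C}$, and uses the completeness relation $e^{A}_{\mu}e_{A}^{\rho}=S_{\mu}^{\rho}$. Your primary route is a genuinely different (if closely parallel) computation: you work in the complex Newman--Penrose tetrad, first deriving $S^{\mu\nu}=m_{+}^{\mu}m_{-}^{\nu}+m_{-}^{\mu}m_{+}^{\nu}$ from the completeness relation with $k^{\mu}l_{\mu}=-1$, and then expanding both sides in the $m_{\pm}$ basis; I have checked the term-by-term cancellation you describe (the equal-helicity terms $m_{+\mu}m_{+\nu}$ and $m_{-\mu}m_{-\nu}$ drop out of the antisymmetrized product, and the mixed terms reassemble into $-(m_{+\mu}m_{-\nu}-m_{-\mu}m_{+\nu})(m_{+}^{\rho}m_{-}^{\gamma}-m_{-}^{\rho}m_{+}^{\gamma})$, matching $\epsilon_{\mu\nu}\epsilon^{\rho\gamma}$ after $i^{2}=-1$), and it goes through. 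The paper's real-basis argument is the more elementary of the two, reducing everything to the familiar two-dimensional Levi-Civita contraction with no sign subtleties from complex conjugate pairs; your NP version buys consistency with the null-tetrad expressions the paper later uses for $\kappa$ and $\Psi_{2}$, and doubles as a check that the stated tetrad forms of $S_{\mu\nu}$ and $\epsilon_{\mu\nu}$ are mutually consistent. Either route is a complete proof.
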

\begin{proof}

This follows from expanding the left-hand side of eq.(\ref{XRef-Equation-1220121228})
in a screen basis $e_{\mu }^{A}$, using that the 2-dimensional Levi-Civita
symbol satisfies $\varepsilon _{\mathrm{AB}}\varepsilon ^{\mathrm{CD}}=\delta
_{A}^{C}\delta _{B}^{D}-\delta _{A}^{D}\delta _{B}^{C}$ and finally
that the screen basis satisfies $e_{\mu }^{A}e_{A}^{\rho }=S_{\mu
}^{\rho }$.
\end{proof}
\begin{proposition}

The transverse covariant derivative of the screen rotator vanishes:\label{XRef-Proposition-112219738}
\begin{equation}
\mathcal{D}_{\alpha }\epsilon _{\mu \nu }=0.
\end{equation}
\end{proposition}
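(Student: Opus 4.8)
The plan is to avoid a direct tetrad computation and instead extract the result algebraically from the two preceding propositions, exploiting the identity $\epsilon_{\mu\nu}\epsilon^{\rho\gamma} = S_\mu^\rho S_\nu^\gamma - S_\mu^\gamma S_\nu^\rho$ of Proposition \ref{XRef-Proposition-122083225}. The preliminary step I would establish is that the transverse covariant derivative obeys a Leibniz rule on products of tensors that are \emph{already transverse}. Concretely, if $A$ and $B$ have all free indices annihilated by contraction with $u^\mu$ and $\hat{p}^\mu$ (as holds for $\epsilon_{\mu\nu}$ and $S_{\mu\nu}$), then projecting every free index of $\nabla_\alpha(AB) = (\nabla_\alpha A)B + A(\nabla_\alpha B)$ and using $S_\mu^\rho B_{\rho\cdots} = B_{\mu\cdots}$ gives $\mathcal{D}_\alpha(AB) = (\mathcal{D}_\alpha A)B + A(\mathcal{D}_\alpha B)$. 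This is the one genuinely delicate point, since $\mathcal{D}$ is defined through projection and the naive product rule can fail for non-transverse factors.

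With this in hand, I would apply $\mathcal{D}_\alpha$ to both sides of the Proposition \ref{XRef-Proposition-122083225} identity. The right-hand side is a sum of products of screen projectors, each transverse, so by the restricted Leibniz rule together with Proposition \ref{XRef-Proposition-11236541} ($\mathcal{D}_\alpha S_{\mu\nu}=0$) it is annihilated. The left-hand side produces $(\mathcal{D}_\alpha\epsilon_{\mu\nu})\epsilon^{\rho\gamma} + \epsilon_{\mu\nu}(\mathcal{D}_\alpha\epsilon^{\rho\gamma})$, so I obtain
\[
(\mathcal{D}_\alpha\epsilon_{\mu\nu})\epsilon^{\rho\gamma} + \epsilon_{\mu\nu}(\mathcal{D}_\alpha\epsilon^{\rho\gamma}) = 0.
\]

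To isolate $\mathcal{D}_\alpha\epsilon_{\mu\nu}$ I would contract with $\epsilon_{\rho\gamma}$. Tracing the Proposition \ref{XRef-Proposition-122083225} identity (set $\rho=\mu$, $\gamma=\nu$ and use $S^\mu_\mu=2$ and $S^\nu_\mu S^\mu_\nu = S^\mu_\mu = 2$) yields $\epsilon_{\rho\gamma}\epsilon^{\rho\gamma}=2$, so the first term becomes $2\,\mathcal{D}_\alpha\epsilon_{\mu\nu}$. The second term carries the factor $(\mathcal{D}_\alpha\epsilon^{\rho\gamma})\epsilon_{\rho\gamma}$; since $\epsilon$ is transverse this equals $\epsilon_{\rho\gamma}\nabla_\alpha\epsilon^{\rho\gamma} = \tfrac12\nabla_\alpha(\epsilon_{\rho\gamma}\epsilon^{\rho\gamma}) = \tfrac12\nabla_\alpha(2)=0$, because the contracted scalar is constant. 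Hence $2\,\mathcal{D}_\alpha\epsilon_{\mu\nu}=0$, giving the claim.

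As a self-contained fallback that sidesteps the Leibniz subtlety, I would instead write $\mathcal{D}_\alpha\epsilon_{\mu\nu} = S_\mu^\rho S_\nu^\sigma\nabla_\alpha(u^\gamma\hat{p}^\delta\epsilon_{\gamma\delta\rho\sigma})$; since the Levi-Civita tensor is covariantly constant, the derivative falls only on $u^\gamma$ and $\hat{p}^\delta$. Decomposing $\nabla_\alpha u^\gamma$ and $\nabla_\alpha\hat{p}^\gamma$ in the tetrad $(u,\hat{p},e_A)$, every surviving contribution vanishes: components along $\hat{p}$ (resp. $u$) drop by antisymmetry of $\epsilon_{\gamma\delta\rho\sigma}$ against the $\hat{p}^\delta$ (resp. $u^\gamma$) already present; components along $e_A$ force three indices into the two-dimensional screen plane and vanish; and the components along $u$ (resp. $\hat{p}$) carry the coefficients $u_\gamma\nabla_\alpha u^\gamma = \tfrac12\nabla_\alpha(u_\gamma u^\gamma)$ and $\hat{p}_\gamma\nabla_\alpha\hat{p}^\gamma = \tfrac12\nabla_\alpha(\hat{p}_\gamma\hat{p}^\gamma)$, both zero by constancy of the norms. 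I expect the restricted Leibniz-rule justification to be the main obstacle in the primary route; the direct computation is longer but bypasses it entirely.
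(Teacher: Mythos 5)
Your proposal is correct, and your primary route is genuinely different from the paper's. The paper proves Proposition \ref{XRef-Proposition-112219738} by the direct computation you relegate to your fallback: it expands $\mathcal{D}_{\rho}\epsilon_{\mu\nu}$ using the covariant constancy of the Levi-Civita tensor, so that the derivative falls only on $u^{\alpha}$ and $\hat{p}^{\beta}$, and then observes that each resulting term is proportional to $u_{\alpha}\nabla_{\rho}u^{\alpha}$ or $\hat{p}_{\beta}\nabla_{\rho}\hat{p}^{\beta}$ (the other tetrad components being killed by antisymmetry or by having three indices confined to the two-dimensional screen), both of which vanish by constancy of the norms. Your fallback reproduces this argument, in fact spelling out the tetrad decomposition more explicitly than the paper does. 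Your primary route instead derives the result algebraically from Proposition \ref{XRef-Proposition-122083225} and Proposition \ref{XRef-Proposition-11236541}: applying $\mathcal{D}_{\alpha}$ to the identity $\epsilon_{\mu\nu}\epsilon^{\rho\gamma}=S_{\mu}^{\rho}S_{\nu}^{\gamma}-S_{\mu}^{\gamma}S_{\nu}^{\rho}$, annihilating the right-hand side via $\mathcal{D}_{\alpha}S_{\mu\nu}=0$, and contracting with $\epsilon_{\rho\gamma}$ using $\epsilon_{\rho\gamma}\epsilon^{\rho\gamma}=2$. The one delicate point, the Leibniz rule for $\mathcal{D}$ on products of transverse tensors, you identify and justify correctly; it holds here because no indices are contracted across the factors and the projector acts as the identity on each transverse factor. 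What the algebraic route buys is independence from any tetrad bookkeeping and a reusable observation (the restricted Leibniz rule); what the paper's computation buys is self-containedness, needing neither Proposition \ref{XRef-Proposition-122083225} nor the Leibniz lemma. Both are sound.
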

\begin{proof}

The proof is given in Appendix \ref{XRef-Subsection-22575619}.
\end{proof}

\subsection{The Polarization of a Plane Electromagnetic Wave}\label{XRef-Subsection-527133422}

Let us review the basic constructs of electromagnetism that we will
build upon. We will assume that the emitted electromagnetic radiation
satisfies the geometric optics approximation, which means that the
wavelength of this\ \ radiation is much smaller than any relevant
length scale of the gravitational field. In this approximation,
referred to as the {\itshape geometric optics limit}, a plane electromagnetic
wave can be represented by a vector potential \cite{Misner}
\begin{equation}
A_{\mu }=a_{\mu }e^{i \vartheta },%
\label{XRef-Equation-224184018}
\end{equation}

where $a_{\mu }$ is a slowly varying complex amplitude, while $\vartheta
$ is a rapidly varying real phase. By using a complex expression
for the vector potential in eq. (\ref{XRef-Equation-224184018}),
we apply the\ \ common convention that the physical quantity can
be found by taking the real part of the expression. This convention
also applies to any quantity derived linearly from\ \ the vector
potential. The wave vector, corresponding to the photon 4-momentum,
is defined as $p_{\mu }=\nabla _{\mu }\vartheta $. By applying the
vector potential of eq. (\ref{XRef-Equation-224184018}) to the Maxwell
equations, it can be shown that $p^{\mu }$ is a null vector. Furthermore,
$p^{\mu }$ satisfies the geodesic equation: $p^{\alpha }\nabla _{\alpha
}p^{\mu }=0$. The scalar amplitude is $a\equiv \sqrt{a_{\mu }^{*}a^{\mu
}}$. $a$ is referred to as the {\itshape magnitude} of the field.
The polarization vector of the vector potential is $f^{\mu }\equiv
a^{\mu }/a$. Hence, $f^{\mu }$ is a complex unit vector $(f_{\mu
}^{*}f^{\mu }=1)$, and it satisfies $p^{\mu }f_{\mu }=0$ \cite{Misner}.
The polarization vector $f^{\mu }$ is parallel transported along
the null geodesic and therefore satisfies $p^{\alpha }\nabla _{\alpha
}f^{\mu }=0$.

In the geometric optics limit, the field tensor for a single plane
wave takes the form
\begin{equation}
F_{\mu \nu }= i a( p_{\mu }f_{\nu }-p_{\nu }f_{\mu }) e^{i \vartheta
}.
\end{equation}
\begin{remark}

\end{remark}

Given a timelike geodesic vector $u^{\mu }$, the electric vector
$E^{\mu }$ is defined as
\begin{equation}
E^{\mu }\equiv u_{\alpha }F^{\alpha \mu }.%
\label{XRef-Equation-51511389}
\end{equation}
\begin{definition}

Given an electromagnetic plane wave in the geometric optics limit
with a vector potential represented by eq. (\ref{XRef-Equation-224184018}),
define the {\bfseries transverse polarization vector} $\epsilon
^{\mu }$ as the transverse projection of the vector potential polarization
vector, $f^{\mu }$:\label{XRef-Definition-121084119}
\[
\epsilon ^{\mu }\equiv  S_{\nu }^{\mu }f^{\nu }.
\]
\end{definition}

$\epsilon ^{\mu }$ is a complex, spacelike unit vector: $\epsilon
_{\mu }^{*}\epsilon ^{\mu }=1$. Hence, the electric field of an
electromagnetic plane wave can then be expressed in terms of the
transverse polarization vector as
\[
E^{\mu }=-i a p \epsilon ^{\mu }e^{i \vartheta }.
\]
\begin{proposition}

The transverse covariant directional derivative of the transverse
polarization vector $\epsilon ^{\mu }$ along the null geodesic $p^{\mu
}$ vanishes:\label{XRef-Proposition-112363540}
\begin{equation}
\mathcal{D}_{p}\epsilon ^{\mu }=0.%
\label{XRef-Equation-1122174843}
\end{equation}
\end{proposition}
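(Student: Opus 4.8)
The plan is to expand the defining expression $\mathcal{D}_p\epsilon^\mu = S^\mu_\alpha\, p^\beta\nabla_\beta(S^\alpha_\nu f^\nu)$ by the Leibniz rule and to show that every resulting piece either vanishes identically or is annihilated by the screen projector. First I would split it into $S^\mu_\alpha S^\alpha_\nu\,\nabla_p f^\nu$ plus $S^\mu_\alpha(\nabla_p S^\alpha_\nu)f^\nu$, writing $\nabla_p \equiv p^\beta\nabla_\beta$. The first term is immediate: since $f^\mu$ is parallel transported along the null geodesic we have $\nabla_p f^\nu=0$, so that term drops out (using also the idempotency $S^\mu_\alpha S^\alpha_\nu=S^\mu_\nu$). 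All the content therefore sits in the second term, $S^\mu_\alpha(\nabla_p S^\alpha_\nu)f^\nu$.

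The tempting move---to invoke Proposition \ref{XRef-Proposition-11236541}, which states that the transverse covariant derivative of $S$ vanishes---does not apply directly, and this is the main obstacle: that proposition requires \emph{both} free indices of $\nabla S$ to carry screen projectors, whereas the factor $f^\nu$ is not transverse. One must instead track the longitudinal components explicitly. I would substitute $S^\alpha_\nu=\delta^\alpha_\nu+u^\alpha u_\nu-\hat{p}^\alpha \hat{p}_\nu$ and differentiate. The $\delta^\alpha_\nu$ piece is constant, and the two terms in which the derivative falls on the contravariant factors $u^\alpha$ or $\hat{p}^\alpha$ are killed on contraction with $S^\mu_\alpha$, since $S^\mu_\alpha u^\alpha=S^\mu_\alpha \hat{p}^\alpha=0$. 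What survives is $S^\mu_\alpha(\nabla_p u^\alpha)(u_\nu f^\nu)-S^\mu_\alpha(\nabla_p \hat{p}^\alpha)(\hat{p}_\nu f^\nu)$.

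The final step is to collapse this remainder. The null condition $p_\mu f^\mu=0$, combined with the decomposition $p_\mu=p(u_\mu+\hat{p}_\mu)$, forces $u_\nu f^\nu=-\hat{p}_\nu f^\nu$; calling this common scalar $\phi$, the remainder becomes $\phi\,S^\mu_\alpha\nabla_p(u^\alpha+\hat{p}^\alpha)=\phi\,S^\mu_\alpha\nabla_p(p^\alpha/p)$. A short computation using the geodesic equation $p^\beta\nabla_\beta p^\alpha=0$ gives $\nabla_p(p^\alpha/p)=-(p^{-2}\nabla_p p)\,p^\alpha$, which is proportional to $p^\alpha$ and hence annihilated by the screen projector, $S^\mu_\alpha p^\alpha=0$. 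The surviving term therefore vanishes and $\mathcal{D}_p\epsilon^\mu=0$. As an independent check I would note the slicker equivalent route: rewriting $\epsilon^\mu=S^\mu_\nu f^\nu=f^\mu+\phi\,n^\mu$ with $n^\mu=p^\mu/p$, one finds $\mathcal{D}_p\epsilon^\mu=(\nabla_p\phi)\,S^\mu_\alpha n^\alpha+\phi\,S^\mu_\alpha\nabla_p n^\alpha$, which vanishes term by term because $S$ annihilates both $n^\alpha$ and $\nabla_p n^\alpha$.
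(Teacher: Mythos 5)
Your proposal is correct and is essentially the paper's own argument: the paper likewise uses parallel transport of $f^{\mu}$, the geodesic equation and the transversality $p_{\mu}f^{\mu}=0$ to show that $\nabla_{p}\epsilon^{\mu}$ is proportional to $p^{\mu}$ and hence annihilated by the screen projector. Your ``slicker equivalent route'' via $\epsilon^{\mu}=f^{\mu}+\phi\,p^{\mu}/p$ is precisely the computation behind the paper's intermediate equation $p^{\alpha}\nabla_{\alpha}\epsilon^{\mu}=\frac{p^{\mu}}{p}\epsilon_{\beta}p^{\alpha}\nabla_{\alpha}u^{\beta}$.
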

\begin{proof}

From parallel propagation of the polarization vector $f^{\mu }$
follows that the transverse polarization vector $\epsilon ^{\mu
}$ evolves according to the equation
\[
p^{\alpha }\nabla _{\alpha }\epsilon ^{\mu }=\frac{p^{\mu }}{p}\epsilon
_{\beta }p^{\alpha }\nabla _{\alpha }u^{\beta }.
\]By applying the screen projector to this equation, we find that
the transverse directional derivative of $\epsilon ^{\mu }$ along
the null geodesic $p^{\mu }$ vanishes.
\end{proof}

\subsection{Covariant Representation of Electromagnetic Polarization}\label{XRef-Subsection-52815235}

The polarization state of a classical field of electromagnetic radiation
is completely characterized by the four Stokes parameters \cite{Born-Wolf-Optics,Jackson-Electrodynamics}.
They are measurable quantities that can be expressed in terms of
the electric field vector and its projections against a polarization
basis consisting of two transverse, mutually orthogonal unit vectors.
We will apply a covariant representation of the state of polarization
of a ray of electromagnetic radiation. This covariant representation
of polarization is reviewed in Appendix \ref{XRef-AppendixSection-430133648},
so more detail can be found there. 
\begin{definition}

Given an electromagnetic radiation field with electric vector $E^{\mu
}$, the {\bfseries coherency tensor} is defined as \cite{Born-Wolf-Optics}
\[
I_{\mu \nu }\equiv \left\langle  E_{\mu }E_{\nu }^{*}\right\rangle
,
\]

where the expectation value $\langle E_{\mu }E_{\nu }^{*}\rangle
$ is assumed to be a time average over a large number of radiation
cycles.
\end{definition}

The {\itshape radiation intensity} is the trace of the coherency
tensor: $I\equiv g^{\mu \nu }I_{\mu \nu }=S^{\mu \nu }I_{\mu \nu
}$. The $Q$, $U$ and $V$ Stokes parameters can be extracted from
the coherency tensor by making different projections of it against
the polarization basis vectors. See Appendix \ref{XRef-AppendixSection-430133648}
for details.
\begin{definition}

Define the {\bfseries {\itshape relative coherency tensor}} as $i_{\mu
\nu }\equiv I_{\mu \nu }/I$, where $I\equiv g^{\mu \nu }I_{\mu \nu
}=S^{\mu \nu }I_{\mu \nu }$ is the radiation intensity.\label{XRef-Definition-121091947}
\end{definition}
\begin{proposition}

For a plane wave with transverse polarization vector $\epsilon ^{\mu
}$, the relative coherency tensor is.\label{XRef-Proposition-12764641}
\[
i_{\mu \nu }^{\mathrm{pw}}( \epsilon ) =\epsilon _{\mu }\epsilon
_{\nu }^{*}.
\]
\end{proposition}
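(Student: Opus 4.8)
The plan is to compute the relative coherency tensor $i_{\mu\nu} = I_{\mu\nu}/I$ directly from its definition for the case of a single plane wave, starting from the explicit form of the electric vector. From the excerpt, the electric field of a plane electromagnetic wave is $E^{\mu} = -iap\,\epsilon^{\mu}e^{i\vartheta}$, where $a$ is the (slowly varying) scalar magnitude, $p$ is the rest frame momentum, $\epsilon^{\mu}$ is the transverse polarization vector (a complex spacelike unit vector with $\epsilon_{\mu}^{*}\epsilon^{\mu}=1$), and $\vartheta$ is the rapidly varying real phase. I would substitute this into the coherency tensor $I_{\mu\nu} \equiv \langle E_{\mu}E_{\nu}^{*}\rangle$ and carry out the time average.

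First I would form the product
\[
E_{\mu}E_{\nu}^{*} = (-iap\,\epsilon_{\mu}e^{i\vartheta})(\,\overline{-iap\,\epsilon_{\nu}e^{i\vartheta}}\,) = a^{2}p^{2}\,\epsilon_{\mu}\epsilon_{\nu}^{*},
\]
where the two phase factors $e^{i\vartheta}$ and $e^{-i\vartheta}$ cancel and the factors $-i$ and its conjugate $+i$ multiply to $1$. The key observation is that for a single plane wave the rapidly oscillating phase drops out entirely from the modulus-squared combination, so the quantity inside the time average is $a^{2}p^{2}\,\epsilon_{\mu}\epsilon_{\nu}^{*}$, whose slowly varying envelope $a^{2}p^{2}$ is essentially constant over the averaging window of many radiation cycles. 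Hence the time average yields $I_{\mu\nu} = a^{2}p^{2}\,\epsilon_{\mu}\epsilon_{\nu}^{*}$ (up to the constant envelope factor).

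Next I would compute the intensity $I = S^{\mu\nu}I_{\mu\nu}$ using the screen projection, since the trace against the full metric and against $S^{\mu\nu}$ agree here by the stated identity $I \equiv g^{\mu\nu}I_{\mu\nu} = S^{\mu\nu}I_{\mu\nu}$. Because $\epsilon^{\mu}$ is already transverse, $S_{\alpha}^{\mu}\epsilon^{\alpha}=\epsilon^{\mu}$, so $S^{\mu\nu}\epsilon_{\mu}\epsilon_{\nu}^{*} = \epsilon^{\nu}\epsilon_{\nu}^{*} = 1$ by the unit-norm condition $\epsilon_{\mu}^{*}\epsilon^{\mu}=1$. Therefore $I = a^{2}p^{2}$, and dividing gives
\[
i_{\mu\nu}^{\mathrm{pw}}(\epsilon) = \frac{I_{\mu\nu}}{I} = \frac{a^{2}p^{2}\,\epsilon_{\mu}\epsilon_{\nu}^{*}}{a^{2}p^{2}} = \epsilon_{\mu}\epsilon_{\nu}^{*},
\]
which is exactly the claimed result. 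The normalization factor cancels cleanly precisely because the polarization vector is a unit vector in the transverse subspace.

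I do not expect a genuine obstacle here, since the computation is short; the only point requiring care is the treatment of the time average. The subtlety is justifying that $\langle a^{2}p^{2}\,\epsilon_{\mu}\epsilon_{\nu}^{*}\rangle = a^{2}p^{2}\,\epsilon_{\mu}\epsilon_{\nu}^{*}$, i.e. that the slowly varying amplitude and polarization structure pass through the time average unchanged while only the fast phase $\vartheta$ would have been averaged. This is exactly the content of the geometric optics limit assumed throughout: the amplitude $a_{\mu}$ varies slowly compared to the rapidly oscillating phase, so over the averaging window of many cycles the envelope is effectively constant and factors out of $\langle\cdot\rangle$. For a single coherent plane wave there is no decoherence and no depolarization, so the average is trivial and the relative coherency tensor reduces to the rank-one outer product $\epsilon_{\mu}\epsilon_{\nu}^{*}$, representing a fully polarized state. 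I would note that this rank-one form is consistent with full polarization, which serves as a useful sanity check on the result.
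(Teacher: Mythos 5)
Your computation is correct and is exactly the intended verification: the paper itself states this proposition without proof (both in the main text and again in Appendix B), treating it as immediate from $E^{\mu}=-iap\,\epsilon^{\mu}e^{i\vartheta}$, the definition $I_{\mu\nu}=\langle E_{\mu}E_{\nu}^{*}\rangle$, and the unit-norm condition $\epsilon_{\mu}^{*}\epsilon^{\mu}=1$. Your handling of the time average (the fast phase cancels in $E_{\mu}E_{\nu}^{*}$ and the slowly varying envelope $a^{2}p^{2}$ factors out and then cancels in $I_{\mu\nu}/I$) supplies precisely the details the paper leaves implicit.
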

\begin{definition}

 The {\bfseries {\itshape polarization tensor}} is defined as the
traceless part of the relative coherency tensor: \label{XRef-Definition-12771128}
\begin{equation}
P_{\mu \nu }\equiv i_{\mu \nu }-\frac{1}{2}S_{\mu \nu }.%
\label{XRef-Equation-112318139}
\end{equation}
\end{definition}
\begin{definition}

The {\bfseries {\itshape polarization degree }}{\bfseries  }$\mathcal{P}$
is defined from the polarization tensor as\label{XRef-Definition-12773823}
\[
\mathcal{P}^{2}=2P^{\nu \mu }P_{\mu \nu }=2P^{\mu \nu }P_{\mu \nu
}^{*}.
\]
\end{definition}

In Appendix \ref{XRef-AppendixSection-430133648} it is shown that
Definition \ref{XRef-Definition-12773823} is equivalent to the conventional
definition of polarization degree. 
\begin{definition}

Define the {\bfseries {\itshape circular polarization degree}} $\mathcal{V}$
from the antisymmetric part of the polarization tensor as\label{XRef-Definition-11674543}
\begin{equation}
\mathcal{V}=i \epsilon ^{\mu \nu }P_{\mu \nu }.%
\label{XRef-Equation-103064113}
\end{equation}
\end{definition}

 $\mathcal{V}$ is the relative Stokes parameter measuring the degree
of circular polarization of electromagnetic radiation. See Section
\ref{XRef-Subsection-4309528} of Appendix \ref{XRef-AppendixSection-430133648}
for details.
\begin{proposition}

The circular polarization degree of a plane electromagnetic wave
is conserved along its null geodesic:\label{XRef-Proposition-112225615}
\begin{equation}
\nabla _{p}\mathcal{V}^{\mathrm{pw}}=0.%
\label{XRef-Equation-11222535}
\end{equation}
\end{proposition}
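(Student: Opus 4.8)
The plan is to first reduce $\mathcal{V}^{\mathrm{pw}}$ to a single fully-contracted scalar built only from the screen rotator and the transverse polarization vector, and then to differentiate it term by term, converting every full covariant derivative into a transverse covariant derivative so that Propositions \ref{XRef-Proposition-112363540} and \ref{XRef-Proposition-112219738} apply directly.

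First I would combine Proposition \ref{XRef-Proposition-12764641}, Definition \ref{XRef-Definition-12771128} and Definition \ref{XRef-Definition-11674543}. Substituting the plane-wave relative coherency tensor $i_{\mu\nu}^{\mathrm{pw}}=\epsilon_{\mu}\epsilon_{\nu}^{*}$ gives $P_{\mu\nu}^{\mathrm{pw}}=\epsilon_{\mu}\epsilon_{\nu}^{*}-\tfrac{1}{2}S_{\mu\nu}$, so that
\[
\mathcal{V}^{\mathrm{pw}}=i\,\epsilon^{\mu\nu}\!\left(\epsilon_{\mu}\epsilon_{\nu}^{*}-\tfrac{1}{2}S_{\mu\nu}\right)=i\,\epsilon^{\mu\nu}\epsilon_{\mu}\epsilon_{\nu}^{*},
\]
where the $S_{\mu\nu}$ contribution drops out because the screen rotator $\epsilon_{\mu\nu}$ is antisymmetric while $S_{\mu\nu}$ is symmetric. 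This isolates $\mathcal{V}^{\mathrm{pw}}$ as a scalar formed by fully contracting three transverse objects.

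Next, since $\mathcal{V}^{\mathrm{pw}}$ is a scalar, $\nabla_{p}\mathcal{V}^{\mathrm{pw}}=p^{\alpha}\nabla_{\alpha}\mathcal{V}^{\mathrm{pw}}$, and I would apply the Leibniz rule to $\epsilon^{\mu\nu}\epsilon_{\mu}\epsilon_{\nu}^{*}$, producing three terms in which the derivative acts respectively on the screen rotator, on $\epsilon_{\mu}$, and on $\epsilon_{\nu}^{*}$. The key manoeuvre is then to exploit transversality: the screen rotator obeys $\epsilon_{\mu\nu}u^{\nu}=\epsilon_{\mu\nu}{\hat{p}}^{\nu}=0$ in both indices, and the transverse polarization vector obeys $\epsilon^{\mu}=S_{\nu}^{\mu}\epsilon^{\nu}$. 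In each Leibniz term the factor carrying the differentiated index is contracted against objects that are transverse in that index, so one may insert a screen projector $S$ between the derivative and its partner at no cost. This promotes each $\nabla_{\alpha}$ to the corresponding transverse covariant derivative $\mathcal{D}_{\alpha}$ acting on the differentiated factor, e.g. $\epsilon^{\mu\nu}(\nabla_{\alpha}\epsilon_{\mu})=\epsilon^{\mu\nu}(\mathcal{D}_{\alpha}\epsilon_{\mu})$ and $\epsilon_{\mu}\epsilon_{\nu}^{*}(\nabla_{\alpha}\epsilon^{\mu\nu})=\epsilon_{\mu}\epsilon_{\nu}^{*}(\mathcal{D}_{\alpha}\epsilon^{\mu\nu})$.

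Finally I would contract with $p^{\alpha}$ and invoke the two propositions: Proposition \ref{XRef-Proposition-112363540} gives $\mathcal{D}_{p}\epsilon^{\mu}=0$, and its complex conjugate gives $\mathcal{D}_{p}\epsilon^{*\mu}=0$, killing the two terms in which the derivative falls on the polarization vector; Proposition \ref{XRef-Proposition-112219738}, contracted with $p^{\alpha}$, gives $\mathcal{D}_{p}\epsilon_{\mu\nu}=0$ (and hence $\mathcal{D}_{p}\epsilon^{\mu\nu}=0$ since $\nabla g=0$), killing the remaining term. All three terms therefore vanish and $\nabla_{p}\mathcal{V}^{\mathrm{pw}}=0$. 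I expect the main obstacle to be the justification of replacing the full covariant derivatives by transverse ones: this rests on verifying that each undifferentiated factor is genuinely transverse in the contracted index, and that the symmetric screen projector commutes appropriately with raising and lowering so that the inserted projectors assemble exactly into the $\mathcal{D}$ operators appearing in Propositions \ref{XRef-Proposition-112363540} and \ref{XRef-Proposition-112219738}.
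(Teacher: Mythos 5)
Your proposal is correct and follows essentially the same route as the paper's own proof: both express $\mathcal{V}^{\mathrm{pw}}=i\,\epsilon^{\mu\nu}P_{\mu\nu}$ with $P_{\mu\nu}=\epsilon_{\mu}\epsilon_{\nu}^{*}-\tfrac{1}{2}S_{\mu\nu}$ via Proposition \ref{XRef-Proposition-12764641}, note that the derivative of the scalar equals its transverse derivative, and then kill every Leibniz term using $\mathcal{D}_{p}\epsilon^{\mu}=0$ together with the vanishing transverse derivatives of the screen rotator and projector. Your version merely makes explicit the projector-insertion step that the paper leaves implicit.
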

\begin{proof}

Since $\nabla _{p}\mathcal{V}=\mathcal{D}_{p}\mathcal{V}$, the result
follows by differentiating the right-hand side of eq. (\ref{XRef-Equation-103064113})
with $\mathcal{D}_{p}$, then applying Proposition \ref{XRef-Proposition-112363540}
and Proposition \ref{XRef-Proposition-12764641}.
\end{proof}
\begin{definition}

We will refer to frequencies above the geometric optics limit as
{\bfseries {\itshape geometric optics frequencies}}, and a frequency
spectrum with negligible power below the geometric optics limit
will be referred to as a {\bfseries {\itshape geometric optics frequency
spectrum}}.
\end{definition}

Now, consider a ray of composite electromagnetic radiation with
a geometric optics spectrum and in an arbitrary state of polarization.
We will model this radiation as an ensemble of plane wave components
traversing the same null geodesic, hence sharing the same scale
invariant null vector $n^{\mu }=u^{\mu }+{\hat{p}}^{\mu }$, but
with different polarization vector $\epsilon ^{\mu }$. It is known
that for two different wave components with electric vectors $E_{(j)}^{\mu
}$ and $E_{(k)}^{\mu }$, when averaged over a large number of radiation
cycles, the expectation value $\langle E_{\mu }^{(j)}E_{\nu }^{(k)*}\rangle
=0$ for $j\neq k$ \cite{Born-Wolf-Optics}. Therefore, the coherency
tensor for the ensemble is
\begin{equation}
I_{\mu \nu }=\sum \limits_{j}I_{\mu \nu }^{\mathrm{pw}}( \epsilon
_{\left( j\right) }) =\sum \limits_{j}I_{\left( j\right) }^{\mathrm{pw}}i_{\mu
\nu }^{\mathrm{pw}}( \epsilon _{\left( j\right) }) ,%
\label{XRef-Equation-5232107}
\end{equation}

where $j$ is the ensemble index. $I_{(j)}^{\mathrm{pw}}\equiv {(a_{(j)}p_{(j)})}^{2}$
is the intensity of plane wave component $j$, with $a_{(j)}$ and
$p_{(j)}$ denoting the magnitude and rest frame momentum, respectively,
of plane wave component $j$ of the ensemble, while $i_{\mu \nu }^{\mathrm{pw}}(
\epsilon _{(j)}) =\epsilon _{\mu }^{(j)}\epsilon _{\nu }^{(j)*}$
represents the relative coherency tensor of plane wave component
$j$ with transverse polarization $\epsilon _{\mu }^{(j)}$. The total
intensity of the ray is the trace of the coherency tensor:
\begin{equation}
I=S^{\mu \nu }I_{\mu \nu }=\sum \limits_{j}I_{\left( j\right) }^{\mathrm{pw}}.%
\label{XRef-Equation-126172427}
\end{equation}

Hence, the total intensity of the ray is the sum of the intensities
of its components, as expected.
\begin{proposition}

Given a ray of electromagnetic radiation with a geometric optics
spectrum and arbitrary state of polarization, the coherency tensor
$I_{\mu \nu }$, the relative coherency tensor $i_{\mu \nu }$ and
the polarization tensor $P_{\mu \nu }$ are all transverse.\label{XRef-Proposition-121094841}
\end{proposition}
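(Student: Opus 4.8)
The plan is to reduce the transversality of all three tensors to the single fact that the transverse polarization vector $\epsilon^\mu$ is itself transverse, and then to exploit that transversality is a linear property preserved under the operations (summation, rescaling by a scalar, subtraction of a transverse tensor) out of which the three tensors are built. Throughout, I would phrase transversality as the idempotency condition $S_\mu^\alpha S_\nu^\beta T_{\alpha\beta} = T_{\mu\nu}$, equivalently annihilation by $u^\mu$ and $\hat{p}^\mu$ on each slot, so that its stability under linear operations is manifest.

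First I would record that $\epsilon^\mu$ is transverse by construction. Since $\epsilon^\mu \equiv S_\nu^\mu f^\nu$ (Definition \ref{XRef-Definition-121084119}) and the screen projector annihilates both $u^\mu$ and $\hat{p}^\mu$ — indeed $u_\mu S^\mu_\nu = \hat{p}_\mu S^\mu_\nu = 0$ by Definition \ref{XRef-Definition-122074554} — it follows that $u_\mu \epsilon^\mu = \hat{p}_\mu \epsilon^\mu = 0$, equivalently $S^\mu_\nu \epsilon^\nu = \epsilon^\mu$. Consequently the plane-wave relative coherency tensor of Proposition \ref{XRef-Proposition-12764641}, namely $i_{\mu\nu}^{\mathrm{pw}}(\epsilon_{(j)}) = \epsilon_\mu^{(j)} \epsilon_\nu^{(j)*}$, is transverse on both indices, since $S_\mu^\alpha S_\nu^\beta \epsilon_\alpha^{(j)} \epsilon_\beta^{(j)*} = \epsilon_\mu^{(j)} \epsilon_\nu^{(j)*}$.

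Next I would assemble the composite tensors. The coherency tensor is the weighted sum $I_{\mu\nu} = \sum_j I_{(j)}^{\mathrm{pw}}\, i_{\mu\nu}^{\mathrm{pw}}(\epsilon_{(j)})$ from eq. (\ref{XRef-Equation-5232107}), with real scalar weights $I_{(j)}^{\mathrm{pw}}$; since each summand is transverse and the projection $S_\mu^\alpha S_\nu^\beta(\cdot)$ passes linearly through the sum, $I_{\mu\nu}$ is transverse. The relative coherency tensor $i_{\mu\nu} = I_{\mu\nu}/I$ (Definition \ref{XRef-Definition-121091947}) differs from $I_{\mu\nu}$ only by the scalar factor $1/I$, so it inherits transversality. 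Finally, for the polarization tensor $P_{\mu\nu} = i_{\mu\nu} - \tfrac{1}{2} S_{\mu\nu}$ of eq. (\ref{XRef-Equation-112318139}), I would note that the screen projector $S_{\mu\nu}$ is transverse by construction (again Definition \ref{XRef-Definition-122074554}), so $P_{\mu\nu}$ is a difference of two transverse tensors and is therefore transverse.

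I do not anticipate any genuine obstacle: the entire content is that every constituent is built from the transverse vector $\epsilon^\mu$ and the transverse tensor $S_{\mu\nu}$, together with the elementary observation that transverse projection commutes with summation, scalar rescaling, and subtraction. The only point requiring care is to fix the idempotency characterization of transversality at the outset, so that its preservation under each of these linear operations needs no separate verification.
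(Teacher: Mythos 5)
Your proof is correct and follows essentially the same route as the paper's: represent the ray as an ensemble of plane waves, note that each component's relative coherency tensor $\epsilon_\mu^{(j)}\epsilon_\nu^{(j)*}$ is transverse because $\epsilon^\mu$ is a screen projection, and then propagate transversality through the sum, the scalar rescaling, and the subtraction of $\tfrac{1}{2}S_{\mu\nu}$. The only difference is that you make explicit the preliminary step that $\epsilon^\mu$ is transverse, which the paper treats as already established.
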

\begin{proof}

By representing the ray as an ensemble of plane waves, the coherency
tensor $I_{\mu \nu }$ is transverse by eq. (\ref{XRef-Equation-5232107})
and the fact that $i_{\mu \nu }^{\mathrm{pw}}( \epsilon _{(j)})
=\epsilon _{\mu }^{(i)}\epsilon _{\nu }^{(i)*}$ of each plane wave
component is transverse. It then follows from Definition \ref{XRef-Definition-121091947}
that the relative coherency tensor $i_{\mu \nu }$ is transverse.\ \ Finally,
the polarization tensor $P_{\mu \nu }$, as defined in Definition
\ref{XRef-Definition-12771128},\ \ is transverse, since $i_{\mu
\nu }$ and $S_{\mu \nu }$ are both transverse.
\end{proof}
\begin{proposition}

Given a ray of electromagnetic radiation with a geometric optics
frequency spectrum and arbitrary state of polarization, eigenvectors
of its coherency tensor $I_{\mu \nu }$ are transverse.\label{XRef-Proposition-121095312}
\end{proposition}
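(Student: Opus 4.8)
The plan is to reduce everything to the transversality of the coherency tensor already established in Proposition \ref{XRef-Proposition-121094841}. Transversality means $I_{\mu\nu} = S_{\mu}^{\alpha} S_{\nu}^{\beta} I_{\alpha\beta}$, or equivalently that $I_{\mu\nu}$ is annihilated by $u$ and ${\hat{p}}$ on either slot: $u^{\mu} I_{\mu\nu} = {\hat{p}}^{\mu} I_{\mu\nu} = 0$ and likewise $I_{\mu\nu} u^{\nu} = I_{\mu\nu} {\hat{p}}^{\nu} = 0$. To speak of eigenvectors I would first raise an index to form the mixed tensor $I^{\mu}{}_{\nu} \equiv g^{\mu\alpha} I_{\alpha\nu}$, whose eigenvectors $v^{\nu}$ satisfy $I^{\mu}{}_{\nu} v^{\nu} = \lambda v^{\mu}$. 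Since $I_{\mu\nu}$ is Hermitian ($I_{\mu\nu} = I_{\nu\mu}^{*}$), the eigenvalues are real and a complete eigenbasis exists, so this is well posed.

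The central observation is that the image of $I^{\mu}{}_{\nu}$ is automatically transverse. Contracting with $u_{\mu}$ gives $u_{\mu} I^{\mu}{}_{\nu} v^{\nu} = u^{\alpha} I_{\alpha\nu} v^{\nu} = 0$, and contracting with ${\hat{p}}_{\mu}$ gives ${\hat{p}}^{\alpha} I_{\alpha\nu} v^{\nu} = 0$, both by transversality in the first index. Hence $I^{\mu}{}_{\nu} v^{\nu}$ is orthogonal to $u^{\mu}$ and to ${\hat{p}}^{\mu}$, i.e. it lies in the screen plane, $S^{\mu}_{\nu} (I^{\nu}{}_{\rho} v^{\rho}) = I^{\mu}{}_{\rho} v^{\rho}$. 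Now if $v^{\nu}$ is an eigenvector with eigenvalue $\lambda \neq 0$, the eigenvalue equation rearranges to $v^{\mu} = \lambda^{-1} I^{\mu}{}_{\nu} v^{\nu}$, and since the right-hand side is transverse, so is $v^{\mu}$. This establishes the claim for every eigenvector carrying a nonzero eigenvalue, which are precisely the eigenvectors encoding the polarization content of the ray.

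The one point requiring care—and what I expect to be the main obstacle—is the zero eigenvalue. Because $I_{\mu\nu}$ is transverse it also annihilates $u^{\nu}$ and ${\hat{p}}^{\nu}$ on the contracted slot, so $u^{\mu}$ and ${\hat{p}}^{\mu}$ are themselves eigenvectors with eigenvalue zero and are manifestly \emph{not} transverse. Thus the kernel of $I^{\mu}{}_{\nu}$ always contains the longitudinal-timelike plane spanned by $u^{\mu}$ and ${\hat{p}}^{\mu}$, and the statement must be read as applying to eigenvectors with nonzero eigenvalue (equivalently, to the two screen directions that diagonalize the restriction of $I_{\mu\nu}$ to the transverse plane); in the fully polarized limit one of those screen directions also degenerates into the kernel. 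Handling this bookkeeping of the degenerate zero-eigenvalue subspace cleanly is the only subtlety—once the nonzero eigenvalues are singled out, transversality of the corresponding eigenvectors is the one-line consequence above.
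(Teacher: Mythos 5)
Your proof is correct and follows essentially the same route as the paper: the paper's own one-line argument is that an eigenvector satisfies $X^{\mu}\propto I^{\mu\nu}X_{\nu}$, which is transverse because $I^{\mu\nu}$ is transverse by Proposition \ref{XRef-Proposition-121094841}. Your explicit caveat about the zero-eigenvalue subspace spanned by $u^{\mu}$ and ${\hat{p}}^{\mu}$ is a genuine refinement that the paper's proof silently glosses over, but it does not change the substance of the argument.
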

\begin{proof}

Given an eigenvector $X^{\mu }$ of $I_{\mu \nu }$, we have $X^{\mu
}\propto I^{\mu \nu }X_{\nu }$. Since $I^{\mu \nu }$ is transverse
by Proposition \ref{XRef-Proposition-121094841}, it follows that
$X^{\mu }$ also must be transverse.
\end{proof}

\section{The Effect of Gravity on the Polarization of a Plane Electromagnetic
Wave}\label{XRef-Section-331204723}

Let us turn our attention to the problem at hand. The aim of this
section is to quantify effects of gravity on the polarization of
a plane electromagnetic wave in general terms, valid for any 4-dimensional
spacetime. We will do this using basic Riemannian geometry.

\subsection{The Polarization Wiggle Equations}\label{XRef-Subsection-58122719}

One of the basic tenets of Riemannian geometry states that, given
three vector fields $X$, $Y$ and $Z$, these fields couple through
the Riemann tensor as follows \cite{Carrol-2004}:
\begin{equation}
R( X,Y) Z=\nabla _{X}\nabla _{Y}Z-\nabla _{Y}\nabla _{X}Z-\nabla
_{\left[ X,Y\right] }Z,%
\label{XRef-Equation-6692314}
\end{equation}

or written in terms of components:
\begin{equation}
{R^{\mu }}_{\nu \alpha \beta }X^{\alpha }Y^{\beta }Z^{\nu }=\nabla
_{X}\nabla _{Y}Z^{\mu }-\nabla _{Y}\nabla _{X}Z^{\mu }-\nabla _{\left[
X,Y\right] }Z^{\mu }.%
\label{XRef-Equation-22592559}
\end{equation}

Assuming the presence of a plane electromagnetic wave with wave
vector $p^{\mu }$, let us apply this equation to the transverse
electromagnetic polarization vector $\epsilon ^{\mu }$ with $X^{\alpha
}=p^{\alpha }$, $Y^{\beta }=u^{\beta }$ and $Z^{\nu }=\epsilon ^{\nu
}$. $u^{\beta }$ is assumed to be a timelike reference geodesic,
cf. Section \ref{XRef-Subsection-2269858}. The resulting equation
is
\[
\nabla _{p}\nabla _{u}\epsilon _{\mu }-\nabla _{u}\nabla _{p}\epsilon
_{\mu }-\nabla _{\left[ p,u\right] }\epsilon _{\mu }=R_{\mu \nu
\alpha \beta }p^{\alpha }u^{\beta }\epsilon ^{\nu }.
\]

Transverse projection of this equation yields
\begin{equation}
\mathcal{D}_{p}\mathcal{D}_{u}\epsilon _{\mu }-\mathcal{D}_{u}\mathcal{D}_{p}\epsilon
_{\mu }-\mathcal{D}_{\left[ p,u\right] }\epsilon _{\mu }=S_{\mu
}^{\lambda }R_{\lambda \nu \alpha \beta }p^{\alpha }u^{\beta }\epsilon
^{\nu }.%
\label{XRef-Equation-22593926}
\end{equation}

Let us evaluate $\mathcal{D}_{[p,u]}\epsilon _{\mu }$ first. From
eq. (\ref{XRef-Equation-22593549}) of Appendix \ref{XRef-Subsection-22593618},
the commutator $[p,u]$ is
\[
{\left[ p,u\right] }^{\beta }=-\frac{p^{\beta }}{p}\left( \nabla
_{u}p+\frac{\nabla _{p}p}{p}\right) +u^{\beta }\frac{\nabla _{p}p}{p}.
\]

Then, the transverse directional derivative of the transverse polarization
vector is
\[
\mathcal{D}_{\left[ p,u\right] }\epsilon ^{\mu }=-\frac{1}{p}\left(
\nabla _{u}p+\frac{\nabla _{p}p}{p}\right) \mathcal{D}_{p}\epsilon
^{\mu }+\frac{\nabla _{p}p}{p}\mathcal{D}_{u}\epsilon ^{\mu }.
\]

By Proposition \ref{XRef-Proposition-112363540}, $\mathcal{D}_{p}\epsilon
^{\mu }=0$, which implies that
\[
\mathcal{D}_{\left[ p,u\right] }\epsilon ^{\mu }=\frac{\nabla _{p}p}{p}\mathcal{D}_{u}\epsilon
^{\mu }.
\]

Eq. (\ref{XRef-Equation-22593926}) then takes the form
\begin{equation}
\mathcal{D}_{p}\mathcal{D}_{u}\epsilon _{\mu }-\frac{\nabla _{p}p}{p}\mathcal{D}_{u}\epsilon
_{\mu }=S_{\mu }^{\lambda }R_{\lambda \nu \alpha \beta }p^{\alpha
}u^{\beta }\epsilon ^{\nu }.%
\label{XRef-Equation-22594812}
\end{equation}

Before we proceed, let us introduce three new scalars.
\begin{definition}

Given a timelike unit vector $u^{\mu }$ and a space-like unit vector
${\hat{p}}^{\mu }$, define the {\bfseries {\itshape curvature twist}}
about $\hat{p}$ as the scalar\label{XRef-Definition-121263034}
\begin{equation}
\mathcal{Z}\equiv u^{\gamma }{\hat{p}}^{\beta }\epsilon ^{\rho \sigma
}R_{\gamma \rho \beta \sigma }.%
\label{XRef-Equation-113074020}
\end{equation}
\end{definition}

 An analysis of curvature twist will be given in Section \ref{XRef-Section-112692639}.
\begin{definition}

Given a plane electromagnetic wave with wave vector $p^{\mu }$ and
transverse polarization vector $\epsilon ^{\mu }$, define the {\bfseries
{\itshape polarization wiggle rate}} as the scalar\label{XRef-Definition-121073534}
\begin{equation}
\psi \equiv \epsilon ^{\gamma *}\epsilon _{\gamma \mu }\mathcal{D}_{u}\epsilon
^{\mu }.%
\label{XRef-Equation-112964427}
\end{equation}
\end{definition}

Let us stop for a moment and try to obtain an intuitive understanding
of this quantity. If $\epsilon ^{\mu }$ were a real, transverse
unit vector, $\mathcal{D}_{u}\epsilon ^{\mu }$ would be the transverse
component of its covariant time derivative, generating a rotation
of the polarization vector in the screen. Furthermore, its contraction
with the orthogonal vector $\epsilon ^{\gamma *}\epsilon _{\gamma
\mu }$ would yield the angular speed of this unit vector, as measured
in an inertial frame with\ \ geodesic $u^{\mu }$. However, since
$\epsilon ^{\mu }$ is a complex vector, rather than generating a
rotation, $\mathcal{D}_{u}\epsilon ^{\mu }$ generates a unitary
transformation of $\epsilon ^{\mu }$ in the screen. 

From its definition, $\psi $ appears to encode a rate of change
of the state of polarization in an inertial frame.
\begin{definition}

Given a plane electromagnetic wave with wave vector $p^{\mu }$ and
transverse polarization vector $\epsilon ^{\mu }$, define the {\bfseries
{\itshape mean helicity phase wiggle rate}} as the scalar
\begin{equation}
\chi \equiv -i \epsilon ^{\mu *}\mathcal{D}_{u}\epsilon _{\mu }.%
\label{XRef-Equation-103165626}
\end{equation}
\end{definition}

We will refer to $\psi $ and $\chi $ as the {\itshape polarization
wiggle scalars}.
\begin{remark}

While $\nabla _{u}\epsilon ^{\mu }=D \epsilon ^{\mu }/d\tau $ is
the time derivative of the transverse polarization vector, the polarization
wiggle scalars $\psi $ and $\chi $ are two independent transverse
projections of it quantifying proper time rates of change of the
state of polarization in an inertial frame. The two vectors used
in the definitions of the polarization wiggle scalars, $\epsilon
^{\mu \gamma }\epsilon _{\gamma }^{*}$ and $-i \epsilon ^{\mu *}$,
are complex unit vectors that form an orthonormal basis for a $\mathbb{C}^{2}$
representation of the vector space of transverse complex vectors.
Consequently, the two polarization wiggle scalars constitute a complete
representation of the time rate of change of the observed state
of polarization.\ \ 
\end{remark}

Let us contract eq. (\ref{XRef-Equation-22594812}) with the vectors
$\epsilon ^{\mu \gamma }\epsilon _{\gamma }^{*}$ and $-i \epsilon
^{\mu *}$. These vectors are by definition transverse. The resulting
transport equations for the polarization wiggle scalars are given
by the following lemma.
\begin{lemma}

For a given plane electromagnetic wave with a corresponding null
geodesic $p^{\mu }$ and rest frame momentum $p$, the transport equations
for the polarization wiggle scalars $\psi $ and $\chi $ along the
null geodesic are\label{XRef-Lemma-122013152}
\begin{gather}
\nabla _{p}\psi -\frac{\nabla _{p}p}{p}\psi =p \mathcal{Z}%
\label{XRef-Equation-112871124}
\\\nabla _{p}\chi -\frac{\nabla _{p}p}{p}\chi =-p \mathcal{V} \mathcal{Z},%
\label{XRef-Equation-1029172928}
\end{gather}

where $\mathcal{V}=i \epsilon ^{\mu \nu }P_{\mu \nu }$ is the circular
polarization degree, cf. Definition \ref{XRef-Definition-11674543}.
\end{lemma}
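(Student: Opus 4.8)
The plan is to obtain each transport equation by contracting the projected Riemann identity (\ref{XRef-Equation-22594812}) with the appropriate transverse unit covector and reading off $\nabla_{p}$ of the corresponding wiggle scalar. Since $\psi$ and $\chi$ are scalars, $\nabla_{p}\psi=\mathcal{D}_{p}\psi$ and $\nabla_{p}\chi=\mathcal{D}_{p}\chi$. The first task is to show that differentiating $\psi=\epsilon^{*\gamma}\epsilon_{\gamma\mu}\mathcal{D}_{u}\epsilon^{\mu}$ along $p$ leaves only the term in which the derivative lands on $\mathcal{D}_{u}\epsilon^{\mu}$. I would expand with the Leibniz rule and treat the three resulting terms one at a time: in each, the factors surrounding the differentiated tensor are transverse, so their contraction turns $\nabla_{p}$ into $\mathcal{D}_{p}$ acting on that tensor. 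The term where the derivative hits $\epsilon^{*\gamma}$ then vanishes by the complex conjugate of Proposition \ref{XRef-Proposition-112363540} ($\mathcal{D}_{p}\epsilon^{*\mu}=0$), the term hitting $\epsilon_{\gamma\mu}$ vanishes by Proposition \ref{XRef-Proposition-112219738} ($\mathcal{D}_{p}\epsilon_{\mu\nu}=0$), and only $\epsilon^{*\gamma}\epsilon_{\gamma\mu}\mathcal{D}_{p}\mathcal{D}_{u}\epsilon^{\mu}$ survives. The identical argument applied to $\chi=-i\,\epsilon^{*\mu}\mathcal{D}_{u}\epsilon_{\mu}$ gives $\nabla_{p}\chi=-i\,\epsilon^{*\mu}\mathcal{D}_{p}\mathcal{D}_{u}\epsilon_{\mu}$.

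Next I would substitute (\ref{XRef-Equation-22594812}) for $\mathcal{D}_{p}\mathcal{D}_{u}\epsilon_{\mu}$. The piece proportional to $\tfrac{\nabla_{p}p}{p}\mathcal{D}_{u}\epsilon_{\mu}$ reproduces $\tfrac{\nabla_{p}p}{p}\psi$ and $\tfrac{\nabla_{p}p}{p}\chi$ respectively, which move to the left-hand sides and match the stated equations. What remains is the curvature source, and the whole content of the lemma is to show it equals $p\mathcal{Z}$ for $\psi$ and $-p\mathcal{V}\mathcal{Z}$ for $\chi$. Here I would first use $p^{\alpha}=p(u^{\alpha}+\hat{p}^{\alpha})$ together with the antisymmetry $R_{\lambda\nu\alpha\beta}u^{\alpha}u^{\beta}=0$ to replace $p^{\alpha}u^{\beta}$ by $p\,\hat{p}^{\alpha}u^{\beta}$.

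The key algebraic step is to collapse the polarization factors multiplying the Riemann tensor onto the screen rotator. Because $R_{\lambda\nu\alpha\beta}$ is antisymmetric in $\lambda\nu$, only the antisymmetric transverse part of the coefficient survives, and since the space of antisymmetric transverse $2$-tensors is one-dimensional this part is a scalar multiple of $\epsilon^{\lambda\nu}$. Evaluating that scalar by contracting against $\epsilon_{\lambda\nu}$ and using Proposition \ref{XRef-Proposition-122083225}, I expect $\epsilon^{*\gamma}\epsilon_{\gamma}{}^{[\lambda}\epsilon^{\nu]}=-\tfrac12\epsilon^{\lambda\nu}$ for $\psi$ (the relevant contraction reducing to $\epsilon^{*\gamma}\epsilon_{\gamma}=1$), while for $\chi$ the coefficient is $\epsilon^{*[\lambda}\epsilon^{\nu]}=\tfrac{i}{2}\mathcal{V}\,\epsilon^{\lambda\nu}$, where the factor $\mathcal{V}$ emerges from $\epsilon_{\rho\sigma}\epsilon^{*\rho}\epsilon^{\sigma}$ via the plane-wave polarization tensor $P_{\mu\nu}=\epsilon_{\mu}\epsilon_{\nu}^{*}-\tfrac12 S_{\mu\nu}$ (Proposition \ref{XRef-Proposition-12764641}) and Definition \ref{XRef-Definition-11674543}. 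This is precisely what introduces the circular polarization degree into the $\chi$ equation.

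The final and most delicate step is to identify $-\tfrac12\epsilon^{\lambda\nu}R_{\lambda\nu\alpha\beta}\hat{p}^{\alpha}u^{\beta}$ with the curvature twist $\mathcal{Z}=u^{\gamma}\hat{p}^{\beta}\epsilon^{\rho\sigma}R_{\gamma\rho\beta\sigma}$ of Definition \ref{XRef-Definition-121263034}, since in $\mathcal{Z}$ the rotator straddles the two Riemann index pairs rather than sitting on a single pair. I would carry this out by expanding $\epsilon^{\mu\nu}=\varepsilon^{AB}e_{A}^{\mu}e_{B}^{\nu}$ in the screen basis and rewriting both expressions as $R$ evaluated on the tetrad $(u,\hat{p},e_{1},e_{2})$: the pair symmetry $R_{\lambda\nu\alpha\beta}=R_{\alpha\beta\lambda\nu}$ reduces the first to $-R(\hat{p},u,e_{1},e_{2})$, while a single application of the first Bianchi identity reduces $\mathcal{Z}$ to $R(u,\hat{p},e_{1},e_{2})$, and the two agree by antisymmetry in the first pair. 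This Riemann-symmetry bookkeeping is the main obstacle; once it establishes $\epsilon^{\lambda\nu}R_{\lambda\nu\alpha\beta}\hat{p}^{\alpha}u^{\beta}=-2\mathcal{Z}$, the two curvature sources evaluate to $p\mathcal{Z}$ and $-p\mathcal{V}\mathcal{Z}$, completing both transport equations.
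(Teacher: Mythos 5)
Your proposal is correct and follows essentially the same route as the paper: contract the projected Riemann identity (\ref{XRef-Equation-22594812}) with $\epsilon^{\gamma *}\epsilon_{\gamma}{}^{\mu}$ and $-i\epsilon^{\mu *}$, kill the extra Leibniz terms via Propositions \ref{XRef-Proposition-112363540} and \ref{XRef-Proposition-112219738}, and reduce the curvature source to $p\mathcal{Z}$ (resp.\ $-p\mathcal{V}\mathcal{Z}$) by projecting onto the screen rotator and invoking the pair-switching and first Bianchi symmetries, exactly as the paper does in Appendix \ref{XRef-Subsection-9108468} via Proposition \ref{XRef-Proposition-122083225}. Your explicit evaluation of the antisymmetric part $\epsilon^{*[\lambda}\epsilon^{\nu]}=\tfrac{i}{2}\mathcal{V}\,\epsilon^{\lambda\nu}$ for the $\chi$ equation is a welcome elaboration of a step the paper only labels ``similarly.''
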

\begin{proof}

When contracting\ \ the left-hand side of eq. (\ref{XRef-Equation-22594812})
with $\epsilon ^{\mu \gamma }\epsilon _{\gamma }^{*}$, we may use
that\ \ $\mathcal{D}_{p}\epsilon ^{\mu \gamma }=0$ by\ \ Proposition
\ref{XRef-Proposition-112219738}. Since $\mathcal{D}_{p}\epsilon
^{\mu }=0$ by Proposition \ref{XRef-Proposition-112363540}, the
left-hand side of eq. (\ref{XRef-Equation-22594812}) transforms
to 
\[
\nabla _{p}\psi -\frac{\nabla _{p}p}{p}\psi .
\]

In Appendix \ref{XRef-Subsection-9108468} , we find that contracting
the right-hand side of eq. (\ref{XRef-Equation-22594812}) with $\epsilon
^{\mu \gamma }\epsilon _{\gamma }^{*}$ yields
\[
\epsilon _{\gamma }^{*}\epsilon ^{\gamma \mu }S_{\mu }^{\lambda
}R_{\lambda \nu \alpha \beta }p^{\alpha }u^{\beta }\epsilon ^{\nu
}=\epsilon _{\gamma }^{*}\epsilon ^{\mu \gamma }R_{\mu \lambda \alpha
\beta }p^{\alpha }u^{\beta }\epsilon ^{\lambda }=p \mathcal{Z}.
\]

Similarly, the second polarization wiggle equation, eq. (\ref{XRef-Equation-1029172928}),
can be proved by contracting eq. (\ref{XRef-Equation-22594812})
with $-i \epsilon ^{\mu *}$.
\end{proof}

It should be noted that eqs. (\ref{XRef-Equation-112871124}) and
(\ref{XRef-Equation-1029172928}) can be rewritten as
\begin{gather}
\nabla _{p}\left( \frac{\psi }{p}\right) =\mathcal{Z}%
\label{XRef-Equation-225164552}
\\\nabla _{p}\left( \frac{\chi }{p}\right) =-\mathcal{V}\mathcal{Z}.%
\label{XRef-Equation-1029173641}
\end{gather}

We will refer to eqs. (\ref{XRef-Equation-112871124})-(\ref{XRef-Equation-1029173641})
as the {\itshape polarization wiggle equations}. Their physical
interpretation will be given shortly, but first we will prove that
$\psi $ and $\chi $ are {\itshape frequency independent}, meaning
that they do not depend on the photon energy $p$ and thereby radiation
frequency. This is not apparent, since eqs. (\ref{XRef-Equation-112871124})
and (\ref{XRef-Equation-1029172928}) clearly depend on $p$.

Let $n^{\mu }\equiv p^{\mu }/p=u^{\mu }+{\hat{p}}^{\mu }$ be a scale
invariant null vector. Eq. (\ref{XRef-Equation-112871124}) can then
be rewritten as
\[
\nabla _{n}\psi -\frac{\nabla _{n}p}{p}\psi =\mathcal{Z}.
\]

The right-hand side of this equation is frequency independent. What
about the factor $\nabla _{n}p/p$? By using the definition of $p$
as $p\equiv -u_{\mu }p^{\mu }$, we find that $\nabla _{n}p/p$ can
be expressed in terms of the expansion tensor $\theta _{\mu \nu
}$ of the reference geodesic congruence as
\begin{equation}
\frac{1}{p}\nabla _{n}p=\frac{1}{p^{2}}\nabla _{p}p=-{\hat{p}}^{\mu
}{\hat{p}}^{\nu }{\theta }_{\mu \nu },%
\label{XRef-Equation-58123722}
\end{equation}

where
\begin{equation}
\theta _{\mu \nu }\equiv \nabla _{\left( \mu \right. }u_{\left.
\nu \right) }.%
\label{XRef-Equation-1227228}
\end{equation}

The expansion tensor can be decomposed as \cite{Poisson-2004}
\[
\theta _{\mu \nu }=\frac{1}{3}\theta  \gamma _{\mu \nu }+\sigma
_{\mu \nu },
\]

where $\sigma _{\mu \nu }$ is the symmetric, trace-free shear tensor
and $\theta \equiv \nabla _{\alpha }u^{\alpha }$ is the expansion
scalar of the tmelike geodesic. This proves the following result:
\begin{corollary}

Given a plane electromagnetic wave with wave vector $p^{\mu }$,
the polarization wiggle equations, eqs. (\ref{XRef-Equation-112871124})
and (\ref{XRef-Equation-1029172928}), can be written in the manifestly
frequency independent form\label{XRef-Corollary-12485247}
\begin{gather}
\nabla _{n}\psi +\kappa  \psi =\mathcal{Z}%
\label{XRef-Equation-1130185728}
\\\nabla _{n}\chi +\kappa  \chi =-\mathcal{V}\mathcal{Z},%
\label{XRef-Equation-1029174955}
\end{gather}

where $n^{\mu }\equiv p^{\mu }/p$ is the scale invariant null vector,
$\kappa $ is the projection of the expansion tensor of the timelike
reference geodesic congruence along the direction of propagation
defined as
\begin{equation}
\kappa \equiv {\hat{p}}^{\mu }{\hat{p}}^{\nu }\theta _{\mu \nu }=\frac{1}{3}\theta
+\Sigma ,%
\label{XRef-Equation-12272315}
\end{equation}

$\theta $ is the expansion and $\Sigma \equiv {\hat{p}}^{\mu }{\hat{p}}^{\nu
}\sigma _{\mu \nu }$ is the shear of the reference geodesic congruence
along the direction of propagation, $\hat{p}$. 
\end{corollary}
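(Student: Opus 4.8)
The plan is to start from the two transport equations established in Lemma \ref{XRef-Lemma-122013152}, namely $\nabla_p\psi - \tfrac{\nabla_p p}{p}\psi = p\mathcal{Z}$ and $\nabla_p\chi - \tfrac{\nabla_p p}{p}\chi = -p\mathcal{V}\mathcal{Z}$, and simply recast each directional derivative along $p^\mu$ into one along the scale invariant null vector $n^\mu \equiv p^\mu/p$. Because $p^\mu = p\,n^\mu$, every directional derivative obeys $\nabla_p = p^\alpha\nabla_\alpha = p\,n^\alpha\nabla_\alpha = p\nabla_n$. Applying this to $\psi$ and to the scalar $p$ itself, then dividing the whole equation through by the overall factor $p$, turns the first transport equation into $\nabla_n\psi - \tfrac{\nabla_n p}{p}\psi = \mathcal{Z}$ and the second into $\nabla_n\chi - \tfrac{\nabla_n p}{p}\chi = -\mathcal{V}\mathcal{Z}$. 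This step is purely algebraic and presents no obstacle.

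The substance of the argument is evaluating the coefficient $\nabla_n p/p$ and showing it equals $-\kappa$. I would compute this from the defining relation $p = -u_\mu p^\mu$. Differentiating along $p$ gives $\nabla_p p = -(\nabla_p u_\mu)p^\mu - u_\mu\nabla_p p^\mu$, and the second term vanishes because $p^\mu$ is a null geodesic, $\nabla_p p^\mu = 0$. Substituting the decomposition $p^\mu = p(u^\mu + \hat{p}^\mu)$ into both slots and expanding, the unwanted cross terms are disposed of by two facts: $u^\mu\nabla_\alpha u_\mu = \tfrac{1}{2}\nabla_\alpha(u_\mu u^\mu)=0$, which annihilates every term contracting $\nabla u$ against $u^\mu$, and $\nabla_u u^\mu = 0$, the geodesic equation for the reference congruence, which annihilates the surviving $u^\alpha$ term. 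What remains is $\nabla_p p = -p^2\,\hat{p}^\alpha\hat{p}^\mu\nabla_\alpha u_\mu$, and since the contraction is against the symmetric pair $\hat{p}^\alpha\hat{p}^\mu$ one may replace $\nabla_\alpha u_\mu$ by its symmetric part $\theta_{\alpha\mu}$. This gives $\tfrac{1}{p^2}\nabla_p p = -\hat{p}^\mu\hat{p}^\nu\theta_{\mu\nu}$, the relation recorded in eq. (\ref{XRef-Equation-58123722}); noting $\nabla_n p/p = \nabla_p p/p^2$ then identifies the coefficient as $-\kappa$ with $\kappa \equiv \hat{p}^\mu\hat{p}^\nu\theta_{\mu\nu}$.

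I expect this curvature-free computation of $\nabla_p p$ to be the only place requiring care: one must track which cross terms vanish and remember that the geodesic property of both $p^\mu$ and $u^\mu$ is essential, along with the orthogonality of $\hat{p}^\mu$ to $u^\mu$. With the coefficient in hand, the two equations immediately assume the stated form $\nabla_n\psi + \kappa\psi = \mathcal{Z}$ and $\nabla_n\chi + \kappa\chi = -\mathcal{V}\mathcal{Z}$, which are manifestly frequency independent since neither $n^\mu$, $\kappa$, $\mathcal{Z}$, nor $\mathcal{V}$ carries any dependence on the photon energy $p$. Finally, to exhibit $\kappa = \tfrac{1}{3}\theta + \Sigma$, I would insert the standard decomposition $\theta_{\mu\nu} = \tfrac{1}{3}\theta\gamma_{\mu\nu} + \sigma_{\mu\nu}$ and contract with $\hat{p}^\mu\hat{p}^\nu$, using that $\hat{p}$ is a spatial unit vector orthogonal to $u^\mu$ so that $\gamma_{\mu\nu}\hat{p}^\mu\hat{p}^\nu = \hat{p}_\mu\hat{p}^\mu = 1$; this leaves $\kappa = \tfrac{1}{3}\theta + \hat{p}^\mu\hat{p}^\nu\sigma_{\mu\nu} = \tfrac{1}{3}\theta + \Sigma$, completing the proof.
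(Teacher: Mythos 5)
Your proposal is correct and follows essentially the same route as the paper: rewrite the Lemma \ref{XRef-Lemma-122013152} equations in terms of $n^\mu = p^\mu/p$, evaluate $\nabla_p p/p^2$ from $p=-u_\mu p^\mu$ using the geodesic equations for $p^\mu$ and $u^\mu$ to obtain eq. (\ref{XRef-Equation-58123722}), and decompose $\theta_{\mu\nu}$ into expansion and shear. Your explicit verification of the cross-term cancellations simply fills in details the paper asserts without elaboration.
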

\begin{remark}

The transverse polarization vector no longer appears in the polarization
wiggle equations, eqs. (\ref{XRef-Equation-112871124}) and (\ref{XRef-Equation-1029172928}).
Eq. (\ref{XRef-Equation-112871124}) is completely independent of
the state of polarization, while eq. (\ref{XRef-Equation-1029172928})
only depends on the circular polarization degree, $\mathcal{V}$.
It implies that the polarization wiggle rate $\psi $ is independent
of\ \ the state of polarization, while the mean helicity wiggle
rate $\chi $ depends on the circular polarization degree.
\end{remark}
\begin{remark}

Eqs. (\ref{XRef-Equation-1130185728}) and (\ref{XRef-Equation-1029174955})
are frequency independent and do not depend on the the photon energy
$p$. Therefore, the evolution of the polarization wiggle scalars
along a given null geodesic are independent of radiation frequency.
As we will see later, this result allows the polarization wiggle
equations to be generalized to rays of composite electromagnetic
radiation with a geometric optics frequency spectrum and arbitrary
state of polarization.
\end{remark}

\subsection{The Polarization Wiggle Equations in Newman-Penrose
Formalism}

When expressed in terms of the complex null tetrad of Section \ref{XRef-Subsection-2269858},
the $\kappa $ scalar is
\[
\kappa =\frac{1}{\sqrt{2}}k^{\nu }( D-\Delta )  l_{\nu }.
\]

The terms on the right-hand side of this equation can be expressed
in terms of Newman-Penrose coefficients $\varepsilon $ and $\gamma
$ \cite{Newman-Penrose-1962,Stephani-2003}:
\begin{gather*}
k^{\nu }D l_{\nu }=-l^{\nu }D k_{\nu }=2\varepsilon -m_{\nu }^{+}D
m_{-}^{\nu }
\\k^{\nu }\Delta  l_{\nu }=2\gamma +m_{\nu }^{-}\Delta  m_{+}^{\nu
}.
\end{gather*}

Then
\[
k^{\nu }( D-\Delta )  l_{\nu }=2\varepsilon -2\gamma ^{*}-m_{\nu
}^{+}( D +\Delta ) m_{-}^{\nu }.
\]

The last term on the right-hand side of this equation vanishes by
eq. (\ref{XRef-Equation-102565226}), which gives $\kappa $ in terms
of Newman-Penrose spin coefficients:
\begin{equation}
\kappa =\sqrt{2}\left( \varepsilon -\gamma ^{*}\right) .
\end{equation}

The polarization wiggle equations, eq. (\ref{XRef-Equation-1130185728})
and (\ref{XRef-Equation-1029174955}) , then take the following form
in the Newman-Penrose formalism:
\begin{gather}
D \psi +\left( \varepsilon -\gamma ^{*}\right)  \psi =\frac{1}{\sqrt{2}}\mathcal{Z}
\\D \chi +\left( \varepsilon -\gamma ^{*}\right)  \chi =-\frac{1}{\sqrt{2}}\mathcal{V}\mathcal{Z}.
\end{gather}

\subsection{Physical Interpretation of the Polarization Wiggle Scalars}\label{XRef-Subsection-51152910}

Now, what quantities do the polarization wiggle scalars $\psi $
and $\chi $ represent, and can they be related to observable quantities?
Since both are complex quantities, we should consider the real and
imaginary parts separately. To gain an understanding of these quantities,
we will expand them in the null tetrad formalism introduced in Section
\ref{XRef-Subsection-2269858}. We will also use the tensor representation
of electromagnetic polarization introduced in Section \ref{XRef-Subsection-52815235}
above. More detail can be found in Appendix \ref{XRef-AppendixSection-430133648},
where we review the tensor representation of an arbitrary state
of electromagnetic polarization. 

Let us consider the imaginary part of $\psi $ first. From Proposition
\ref{XRef-Proposition-112219738}, we have $\mathcal{D}_{\alpha }\epsilon
_{\mu \nu }=0$. Then, from its definition in eq. (\ref{XRef-Equation-112964427})
follows that
\[
2i \operatorname{Im}[ \psi ] =-\mathcal{D}_{u}( \epsilon ^{\mu \nu
}i_{\mu \nu }^{\mathrm{pw}}) ,
\]

where $i_{\mu \nu }^{\mathrm{pw}}\equiv \epsilon _{\mu }\epsilon
_{\nu }^{*}$ is the relative coherency tensor for a plane electromagnetic
wave. Using the relationship between the relative coherency tensor
$i_{\mu \nu }$ and the polarization tensor $P_{\mu \nu }$ defined
in eq. (\ref{XRef-Equation-112318139}), we find that 
\[
\mathcal{D}_{u}( \epsilon ^{\mu \nu }i_{\mu \nu }^{\mathrm{pw}})
=\mathcal{D}_{u}( \epsilon ^{\mu \nu }P_{\left[ \mu \nu \right]
}) .
\]

 Then, by using the decomposition of the polarization tensor into
symmetric and antisymmetric parts of eq. (\ref{XRef-Equation-430131121}),
we obtain
\begin{equation}
\alpha \equiv \operatorname{Im}[ \psi ] =\frac{1}{2}\mathcal{D}_{u}\mathcal{V}=\frac{1}{2}\overset{\cdot
}{\mathcal{V}},%
\label{XRef-Equation-430141134}
\end{equation}

where $\mathcal{V}$ is the circular polarization degree, cf. Definition
\ref{XRef-Definition-11674543}. Overdot denotes differentiation
with respect to proper time. Thus, $\alpha \equiv \operatorname{Im}[
\psi ] $ is half the temporal rate of change of the circular polarization
degree of the radiation, as measured by an inertial observer. In
the following, we will refer to $\alpha $ as the {\itshape circular
polarization wiggle rate}.

Next, consider the real part of $\psi $. Define the {\itshape polarization
axis wiggle rate} as $\omega \equiv \operatorname{Re}[ \psi ] $.
The transverse polarization vector $\epsilon ^{\mu }$ can be expanded
in terms of the transverse vectors of the complex null tetrad as
\[
\epsilon ^{\mu }=\epsilon _{+}m_{+}^{\mu }+\epsilon _{-}m_{-}^{\mu
}.
\]

In this representation, we immediately recognize the helicity decomposition
of the polarization vector: $m_{+}^{\mu }$ and $m_{-}^{\mu }$ are
basis vectors with positive and negative helicity, respectively,
while $\epsilon _{\pm }$ are the helicity components of the polarization
vector. Since $\epsilon ^{\mu }$ is a complex unit vector, the helicity
components must satisfy $\epsilon _{+}^{*}\epsilon _{+}+\epsilon
_{-}^{*}\epsilon _{-}=1$. Using the expressions for the screen projector
and the screen rotator in the complex null tetrad basis, we find
that the polarization wiggle rate can be written
\begin{equation}
\psi =\frac{i}{\sqrt{2}}\left( \epsilon _{+}^{*}( D+\Delta ) \epsilon
_{+}-\epsilon _{-}^{*}( D+\Delta ) \epsilon _{-}+m_{\nu }^{-}( D+\Delta
) m_{+}^{\nu }\right) .%
\label{XRef-Equation-102572926}
\end{equation}

By eq. (\ref{XRef-Equation-102565226}), the last term on the right-hand
side of eq. (\ref{XRef-Equation-102572926}) vanishes. Furthermore,
the helicity components $\epsilon _{\pm }$ can be written as
\[
\epsilon _{+}=q_{+}e^{i \phi _{+}}, \epsilon _{-}=q_{-}e^{i \phi
_{-}},
\]

where $q_{\pm }$ are real amplitudes that must satisfy $q_{+}^{2}+q_{-}^{2}=1$
and $\phi _{\pm }$ are arbitrary real phases. $q_{\pm }^{2}$ are
relative intensities of the helicity components, and $\phi _{\pm
}$ are the helicity phases. We will use the term {\itshape helicity
phase wiggling }to refer to helicity phases that change with time,
and ${\overset{\cdot }{\phi }}_{\pm }$ will be denoted {\itshape
helicity phase wiggle rates}. The\ \ polarization wiggle rate now
takes the form
\begin{equation}
\psi =q_{-}^{2}{\overset{\cdot }{\phi }}_{-}-q_{+}^{2}{\overset{\cdot
}{\phi }}_{+}+\frac{i}{2}\nabla _{u}\left( q_{+}^{2}-q_{-}^{2}\right)
.%
\label{XRef-Equation-1029201849}
\end{equation}

We recognize $q_{+}^{2}-q_{-}^{2}$ as the circular polarization
degree $\mathcal{V}$ in the helicity basis \cite{Jackson-Electrodynamics},
and we see that $\operatorname{Im}[ \psi ]  =\overset{\cdot }{\mathcal{V}}/2
$, which is consistent with eq. (\ref{XRef-Equation-430141134})
above. Now, 
\begin{equation}
\omega \equiv \operatorname{Re}[ \psi ] =q_{-}^{2}{\overset{\cdot
}{\phi }}_{-}-q_{+}^{2}{\overset{\cdot }{\phi }}_{+}.%
\label{XRef-Equation-1029203639}
\end{equation}

Using the constraint $q_{+}^{2}+q_{-}^{2}=1$ allows us to express
the helicity intensities in terms of the the circular polarization
degree: 
\[
q_{\pm }^{2}=\frac{1}{2}\left( 1\pm \mathcal{V}\right) .
\]

By similar derivations, the mean helicity phase wiggle rate can
be expressed in terms of $q_{\pm }$ and $\phi _{\pm }$ as
\begin{equation}
\chi =q_{-}^{2}{\overset{\cdot }{\phi }}_{-}+q_{+}^{2}{\overset{\cdot
}{\phi }}_{+}.%
\label{XRef-Equation-1029203653}
\end{equation}

We see from eq. (\ref{XRef-Equation-1029203653}) that the mean helicity
phase wiggle rate $\chi $ can be interpreted as the mean of the
two helicity phase wiggle rates. Adding and subtracting eqs. (\ref{XRef-Equation-1029203639})
and (\ref{XRef-Equation-1029203653}) yields the the helicity phase
wiggle rates in terms of the wiggle rates $\omega $ and $\chi $:
\begin{gather}
 q_{-}^{2}{\overset{\cdot }{\phi }}_{-}=\frac{1}{2}\left( \chi +\omega
\right) %
\label{XRef-Equation-1029205237}
\\q_{+}^{2}{\overset{\cdot }{\phi }}_{+}=\frac{1}{2}\left( \chi
-\omega \right) .%
\label{XRef-Equation-1029205312}
\end{gather}

Next, let $\Omega $ denote the angle of the polarization axis in
the frame of observation. It is undefined for purely circular polarization
($\mathcal{V}=\pm 1$), but it is an observable for linear and elliptical
polarization. It is related to the Stokes parameters as follows
\cite{Born-Wolf-Optics}:
\begin{equation}
\tan  2\Omega =\frac{U}{Q}=\frac{\mathcal{U}}{\mathcal{Q}}.%
\label{XRef-Equation-10237735}
\end{equation}

Furthermore, the ratio on the right-hand side of eq. (\ref{XRef-Equation-10237735})
can be expressed in terms of the helicity phase difference \cite{Jackson-Electrodynamics}:
\[
\frac{U}{Q}=\tan ( \phi _{-}-\phi _{+}) .
\]

This lets us relate the helicity phase difference to the polarization
angle $\Omega $:
\[
\phi _{-}-\phi _{+}=2\Omega .
\]

The rotation rate of the polarization axis, $\overset{\cdot }{\Omega
}$, can now be expressed in terms of the polarization wiggle rates
$\omega $ and $\chi $:
\begin{equation}
\overset{\cdot }{\Omega }=\frac{1}{2}\left( {\overset{\cdot }{\phi
}}_{-}-{\overset{\cdot }{\phi }}_{+}\right) =\frac{\omega +\mathcal{V}
\chi }{1-\mathcal{V}^{2}}.%
\label{XRef-Equation-11721125}
\end{equation}

We see that $\omega =\overset{\cdot }{\Omega }$ for linear polarization
($\mathcal{V}=0$). 

\subsection{Polarization Wiggling Observables}

We have identified quantities that quantify effects of gravity on
the polarization of a plane electromagnetic wave. The following
quantities are observables of rates of change of the state of polarization
in the frame of observation: 
\begin{itemize}
\item ${\overset{\cdot }{\phi }}_{\pm }$: The {\itshape helicity
phase wiggle rates. }They quantify the rate of change of the helicity
phases
\item $\overset{\cdot }{\mathcal{V}}$: The rate of change of the
{\itshape circular polarization degree }
\item $\overset{\cdot }{\Omega }$: The angular rotation rate of
the polarization axis
\end{itemize}

We will refer to these observables as {\itshape polarization wiggling
observables}. The Stokes parameters and the helicity phases $\phi
_{\pm }$ are {\itshape polarization state observables}.
\begin{remark}

If we assume that the radiation is emitted with constant polarization,
the polarization state observables must be constant in the frame
of emission. This implies that all the polarization wiggling observables
are zero in the frame of emission in this case.
\end{remark}

The two polarization wiggle scalars $\psi $ and $\chi $ can be expressed
in terms of these observables. We can summarize these results as
follows:
\begin{lemma}

For a given plane electromagnetic wave, the polarization wiggle
rate $\psi $ encodes two scalar quantities; the {\bfseries polarization
axis wiggle rate} $\omega $ and the {\bfseries circular polarization
wiggle rate} $\alpha $. Expressed in terms of the polarization wiggling
observables, they are\label{XRef-Lemma-1227028}
\begin{gather}
\omega \equiv \operatorname{Re}[ \psi ] =\frac{1}{2}\left( \left(
1-\mathcal{V}\right) {\overset{\cdot }{\phi }}_{-}-\left( 1+\mathcal{V}\right)
{\overset{\cdot }{\phi }}_{+}\right) =\overset{\cdot }{\Omega }-\frac{\mathcal{V}}{2}\left(
{\overset{\cdot }{\phi }}_{+}+{\overset{\cdot }{\phi }}_{-}\right)
\label{XRef-Equation-111165030}
\\\alpha \equiv \operatorname{Im}[ \psi ] =\frac{1}{2}\overset{\cdot
}{\mathcal{V}},
\end{gather}

where $\overset{\cdot }{\Omega }$ is the rotation rate of the polarization
axis and $\mathcal{V}$ is circular polarization degree, both measured
in an inertial frame. $\phi _{+}$ and $\phi _{-}$ are the phases
of the helicity components. The mean helicity phase wiggle rate
$\chi $ is
\begin{equation}
\chi =\frac{1}{2}\left( \left( 1-\mathcal{V}\right) {\overset{\cdot
}{\phi }}_{-}+\left( 1+\mathcal{V}\right) {\overset{\cdot }{\phi
}}_{+}\right) .
\end{equation}

The inverse relationships are
\begin{gather}
\overset{\cdot }{\Omega }=\frac{\omega +\mathcal{V} \chi }{1-\mathcal{V}^{2}}
\\{\overset{\cdot }{\phi }}_{+}=\frac{\chi -\omega }{1+\mathcal{V}}
\\{\overset{\cdot }{\phi }}_{-}=\frac{\chi +\omega }{1-\mathcal{V}}.
\end{gather}
\end{lemma}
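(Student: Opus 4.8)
The plan is to treat this lemma as a consolidation of the helicity-basis identities already established in the preceding subsection, so the proof will consist almost entirely of algebraic substitution rather than any fresh geometric computation. The two primitive quantities are $\omega \equiv \operatorname{Re}[\psi]$ and $\alpha \equiv \operatorname{Im}[\psi]$. From eq.~(\ref{XRef-Equation-1029201849}) we already have the helicity-basis expansion $\psi = q_{-}^{2}\dot{\phi}_{-} - q_{+}^{2}\dot{\phi}_{+} + \tfrac{i}{2}\nabla_{u}(q_{+}^{2}-q_{-}^{2})$, so taking real and imaginary parts immediately reproduces $\omega = q_{-}^{2}\dot{\phi}_{-} - q_{+}^{2}\dot{\phi}_{+}$ (eq.~(\ref{XRef-Equation-1029203639})) and, upon identifying $q_{+}^{2}-q_{-}^{2}=\mathcal{V}$, the result $\alpha = \tfrac{1}{2}\dot{\mathcal{V}}$ of eq.~(\ref{XRef-Equation-430141134}).

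First I would substitute the helicity-intensity constraint $q_{\pm}^{2}=\tfrac{1}{2}(1\pm\mathcal{V})$ into $\omega$ and into the mean helicity phase wiggle rate $\chi = q_{-}^{2}\dot{\phi}_{-} + q_{+}^{2}\dot{\phi}_{+}$ from eq.~(\ref{XRef-Equation-1029203653}). This produces the two displayed forms $\omega = \tfrac{1}{2}\big((1-\mathcal{V})\dot{\phi}_{-} - (1+\mathcal{V})\dot{\phi}_{+}\big)$ and $\chi = \tfrac{1}{2}\big((1-\mathcal{V})\dot{\phi}_{-} + (1+\mathcal{V})\dot{\phi}_{+}\big)$ directly. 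To recover the equivalent form $\omega = \dot{\Omega} - \tfrac{\mathcal{V}}{2}(\dot{\phi}_{+}+\dot{\phi}_{-})$ I would insert $\dot{\Omega}=\tfrac{1}{2}(\dot{\phi}_{-}-\dot{\phi}_{+})$ from eq.~(\ref{XRef-Equation-11721125}) and verify that the two expressions collapse to the same combination of $(1\pm\mathcal{V})$-weighted phase rates.

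The inverse relations then follow by solving the resulting $2\times 2$ linear system. Adding and subtracting the expressions for $\omega$ and $\chi$ recovers $q_{-}^{2}\dot{\phi}_{-}=\tfrac{1}{2}(\chi+\omega)$ and $q_{+}^{2}\dot{\phi}_{+}=\tfrac{1}{2}(\chi-\omega)$, which are precisely eqs.~(\ref{XRef-Equation-1029205237})--(\ref{XRef-Equation-1029205312}); dividing through by $q_{\mp}^{2}=\tfrac{1}{2}(1\mp\mathcal{V})$ gives $\dot{\phi}_{\pm}=(\chi\mp\omega)/(1\pm\mathcal{V})$, while $\dot{\Omega}=(\omega+\mathcal{V}\chi)/(1-\mathcal{V}^{2})$ is already available as eq.~(\ref{XRef-Equation-11721125}).

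I do not anticipate a genuine conceptual obstacle here, since every component identity has been derived line-by-line above; the only care required is sign bookkeeping in the $(1\pm\mathcal{V})$ factors and confirming that the two equivalent expressions for $\omega$ agree. The one point I would flag explicitly is the domain of validity: the polarization angle $\Omega$, and hence the inverse relations, are defined only when $\mathcal{V}^{2}\neq 1$, i.e.\ away from purely circular polarization, and I would state this restriction rather than treat it as a difficulty.
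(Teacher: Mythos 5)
Your proposal is correct and follows essentially the same route as the paper: the lemma has no separate proof there because it is precisely a consolidation of eqs.~(\ref{XRef-Equation-1029201849}), (\ref{XRef-Equation-1029203639}), (\ref{XRef-Equation-1029203653}), (\ref{XRef-Equation-430141134}) and (\ref{XRef-Equation-11721125}) together with the constraint $q_{\pm}^{2}=\tfrac{1}{2}(1\pm\mathcal{V})$, exactly as you assemble it. Your explicit flag that the inverse relations require $\mathcal{V}^{2}\neq 1$ is a worthwhile addition the paper only makes implicitly.
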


The polarization wiggle scalars $\psi $ and $\chi $ are quantities
with simple transport equations given by Lemma \ref{XRef-Lemma-122013152}.
They are defined for any state of polarization. These scalars are
not measurable quantities, but they are linear combinations of such
quantities. Through the linear relationships stated in Lemma \ref{XRef-Lemma-122013152}
to the polarization wiggling observables, it is straight forward
to transform between polarization wiggle scalars and polarization
wiggling observables.

\subsection{Circular Polarization Wiggle Rate}

Taking the imaginary part of the polarization wiggle equation, eq.
(\ref{XRef-Equation-225164552}), gives a transport equation for
the circular polarization wiggle rate. By Lemma \ref{XRef-Lemma-1227028},
the left-hand side of the equation can be expressed in terms of
the circular polarization wiggle rate $\alpha $. The right-hand
side of eq. (\ref{XRef-Equation-225164552}) is real and does not
contribute to $\alpha $. This yields the {\itshape circular polarization
wiggle equation, }a transport equation for the circular polarization
wiggle rate $\alpha $:
\begin{lemma}

For a given plane electromagnetic wave with wave vector $p^{\mu
}$ and rest frame momentum $p$, the quantity $\alpha /p$ is conserved
along the null geodesic:
\begin{equation}
\nabla _{p}\left( \frac{\alpha }{p}\right) =0,%
\label{XRef-Equation-124101546}
\end{equation}

where $\alpha $ is the circular polarization wiggle rate.
\end{lemma}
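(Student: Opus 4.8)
The plan is to take the imaginary part of the polarization wiggle equation in its scale-invariant form, eq. (\ref{XRef-Equation-225164552}), namely $\nabla_p(\psi/p) = \mathcal{Z}$, and simply read off the transport equation for $\alpha = \operatorname{Im}[\psi]$. The whole argument rests on the reality of the curvature source, so that is the point I would pin down first.

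First I would establish that the curvature twist $\mathcal{Z}$ is a \emph{real} scalar. From its definition (Definition \ref{XRef-Definition-121263034}), $\mathcal{Z} = u^\gamma \hat{p}^\beta \epsilon^{\rho\sigma} R_{\gamma\rho\beta\sigma}$, every constituent is real: the reference tangent $u^\mu$ and the propagation direction $\hat{p}^\mu$ are real, the Riemann tensor is real, and the screen rotator $\epsilon^{\rho\sigma}$ is built in Definition \ref{XRef-Definition-112264556} by contracting the real Levi-Civita tensor with $u$ and $\hat{p}$. Hence $\operatorname{Im}[\mathcal{Z}] = 0$, and the right-hand side of eq. (\ref{XRef-Equation-225164552}) makes no contribution to the imaginary sector.

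Next I would commute the imaginary part past the differential operator. Since $p = -u_\mu p^\mu$ is a real scalar and $\nabla_p = p^\alpha \nabla_\alpha$ acts along the real vector $p^\mu$ with the real Levi-Civita connection, the operator $\nabla_p$ sends real scalars to real scalars and therefore commutes with $\operatorname{Im}[\cdot]$. Applying this to $\psi/p$ gives $\operatorname{Im}[\nabla_p(\psi/p)] = \nabla_p(\operatorname{Im}[\psi]/p) = \nabla_p(\alpha/p)$, where the last step uses $\alpha \equiv \operatorname{Im}[\psi]$ together with the reality of $p$, as recorded in Lemma \ref{XRef-Lemma-1227028}. Combining this with the vanishing imaginary part of the right-hand side yields $\nabla_p(\alpha/p) = 0$, which is the claim.

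I do not expect any substantial obstacle: the statement is an immediate structural consequence of eq. (\ref{XRef-Equation-225164552}) once one observes that the source term is purely real. The single point that genuinely needs care — and which I would state explicitly rather than leave implicit — is the reality of $\mathcal{Z}$, since it is precisely this that decouples the imaginary part of the wiggle equation from the curvature term and forces $\alpha/p$ to be conserved.
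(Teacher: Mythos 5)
Your proposal is correct and follows essentially the same route as the paper, which likewise obtains the result by taking the imaginary part of eq. (\ref{XRef-Equation-225164552}) and noting that the real curvature-twist source does not contribute to $\alpha=\operatorname{Im}[\psi]$. Your explicit verification that $\mathcal{Z}$ is real (being a contraction of the real Riemann tensor with the real vectors $u^\mu$, $\hat{p}^\mu$ and the real screen rotator) is a detail the paper leaves implicit but is entirely consistent with its argument.
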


The circular polarization wiggle equation can easily be solved,
yielding the following result.
\begin{theorem}

For a given plane electromagnetic wave with wave vector $p^{\mu
}$ and rest frame momentum $p$, the circular polarization wiggle
rate $\alpha $ evolves as\label{XRef-Theorem-12481042}
\begin{equation}
\alpha ( \lambda ) =\frac{p( \lambda ) }{p( \lambda _{*}) }\left.
\alpha (\lambda _{*}\right) ,%
\label{XRef-Equation-12481212}
\end{equation}

where $\lambda $ is the affine parameter along the null geodesic
and $\lambda _{*}$ labels the emission event.
\end{theorem}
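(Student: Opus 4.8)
The plan is to observe that this theorem is simply the integrated form of the conservation law established in the immediately preceding lemma, so the entire argument reduces to recognizing what the operator $\nabla_p$ does to a scalar along the null geodesic and then integrating once.

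First I would recall that $p^{\mu}$ is the affinely parametrized tangent vector of the null geodesic, so that $p^{\mu}=dx^{\mu}/d\lambda$. For any scalar function $f$ defined along the geodesic, the directional covariant derivative therefore collapses to an ordinary derivative with respect to the affine parameter:
\[
\nabla_p f = p^{\alpha}\nabla_{\alpha}f = \frac{df}{d\lambda}.
\]
Applying this to the scalar $f=\alpha/p$ and invoking the preceding lemma, which states $\nabla_p(\alpha/p)=0$, I obtain
\[
\frac{d}{d\lambda}\left(\frac{\alpha}{p}\right)=0.
\]

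Next I would integrate this trivially along the geodesic: a vanishing affine-parameter derivative means $\alpha/p$ is constant in $\lambda$. Evaluating the constant at the emission event $\lambda_{*}$ gives
\[
\frac{\alpha(\lambda)}{p(\lambda)}=\frac{\alpha(\lambda_{*})}{p(\lambda_{*})},
\]
and rearranging for $\alpha(\lambda)$ yields exactly eq.~(\ref{XRef-Equation-12481212}).

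There is essentially no hard step here; the content has already been absorbed into the lemma. The only point requiring a word of care is the identification of $\nabla_p$ acting on a scalar with $d/d\lambda$, which relies on the fact (established earlier in the excerpt) that $p^{\mu}$ satisfies the geodesic equation and is the affine tangent. Everything else is a single integration of a total derivative.
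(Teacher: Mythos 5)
Your proposal is correct and matches the paper's approach: the paper simply states that the preceding lemma's conservation law $\nabla_p(\alpha/p)=0$ "can easily be solved," and your integration of $d(\alpha/p)/d\lambda=0$ along the affinely parametrized geodesic is exactly that solution. The identification of $\nabla_p$ on a scalar with $d/d\lambda$ is the right (and only) point of care, and you handle it properly.
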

\begin{remark}

From Theorem \ref{XRef-Theorem-12481042} follows that, if the circular
polarization degree is constant in the inertial frame of emission
($\overset{\cdot }{\mathcal{V}}(\lambda _{*})=0$), it will remain
constant ($\overset{\cdot }{\mathcal{V}}( \lambda ) =0$) at any
point along the null geodesic. \label{XRef-Remark-11127112}
\end{remark}

Eq. (\ref{XRef-Equation-12481212}) can be expressed in terms of
the circular polarization degree $\mathcal{V}$ and the redshift
$z( \lambda _{*}) $ of the source:
\begin{equation}
\overset{\cdot }{\mathcal{V}}( \lambda ) ={\left( 1+z( \lambda _{*})
\right) }^{-1} \overset{\cdot }{\mathcal{V}}\left(\lambda _{*}\right)
.%
\label{XRef-Equation-5110724}
\end{equation}

Thus, a non-zero circular polarization wiggle rate of radiation
is redshifted with the same factor as radiation frequencies. 

\subsection{Polarization Axis Wiggle Rate}

From eq. (\ref{XRef-Equation-111165030}) of Lemma \ref{XRef-Lemma-1227028},
we see that the polarization axis wiggle rate $\omega $ is related
to the rotation rate of the polarization axis, $\overset{\cdot }{\Omega
}$. For linear polarization ($\mathcal{V}=0$), the relationship
is an identity; $\omega |_{\mathcal{V}=0}=\overset{\cdot }{\Omega
}.$ For elliptical polarization ($\mathcal{V}\neq 0$), $\omega $
also gets contributions from the two helicity phase wiggle rates
${\overset{\cdot }{\phi }}_{\pm }$. For positive circular polarization
($\mathcal{V}=1$), $\omega |_{\mathcal{V}=1}=-{\overset{\cdot }{\phi
}}_{+}$, while for negative circular polarization ($\mathcal{V}=-1$),
$\omega |_{\mathcal{V}=-1}={\overset{\cdot }{\phi }}_{-}$.

Taking the real part of the polarization wiggle equation, eq. (\ref{XRef-Equation-112871124})
of Lemma \ref{XRef-Lemma-122013152} and the equivalent eq. (\ref{XRef-Equation-1029172928}),
yields the {\itshape polarization axis wiggle equation}; a transport
equation for the {\itshape polarization axis wiggle rate} $\omega
$:
\begin{lemma}

For a given plane electromagnetic wave with wave vector $p^{\mu
}$ and rest frame momentum $p$, the transport equation for the polarization
axis wiggle rate $\omega $ is\label{XRef-Lemma-12147341}
\begin{equation}
\nabla _{p}\omega -\frac{\nabla _{p}p}{p}\omega =p \mathcal{Z},%
\label{XRef-Equation-1249368}
\end{equation}

equivalent to the equation
\begin{equation}
\nabla _{p}\left( \frac{\omega }{p}\right) =\mathcal{Z}.
\end{equation}
\end{lemma}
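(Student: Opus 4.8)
The plan is to read off the transport equation for $\omega$ directly from the transport equation for the complex scalar $\psi$ already established in Lemma~\ref{XRef-Lemma-122013152}, by taking its real part. The starting point is the first polarization wiggle equation
\begin{equation}
\nabla_p \psi - \frac{\nabla_p p}{p}\,\psi = p\,\mathcal{Z},
\end{equation}
together with the defining relation $\omega \equiv \operatorname{Re}[\psi]$ introduced in the preceding subsection.

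First I would check that every quantity appearing on the right-hand side, as well as the coefficient multiplying $\psi$, is real. The rest frame momentum $p \equiv -u_\mu p^\mu$ is a contraction of the real vectors $u^\mu$ and $p^\mu$, so $p$, $\nabla_p p$, and the factor $\nabla_p p / p$ are all real. The curvature twist $\mathcal{Z} = u^\gamma \hat{p}^\beta \epsilon^{\rho\sigma} R_{\gamma\rho\beta\sigma}$ is likewise real: the vectors $u^\mu$ and $\hat{p}^\mu$ are real, the Levi-Civita tensor entering the screen rotator $\epsilon_{\mu\nu}$ of Definition~\ref{XRef-Definition-112264556} is real, and the Riemann tensor is real. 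Hence the source term $p\,\mathcal{Z}$ carries no imaginary part.

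Next I would use that $\nabla_p = p^\alpha \nabla_\alpha$ is a real directional-derivative operator, so it commutes with taking the real part: $\operatorname{Re}[\nabla_p \psi] = \nabla_p \operatorname{Re}[\psi] = \nabla_p \omega$. Taking the real part of the displayed equation then yields
\begin{equation}
\nabla_p \omega - \frac{\nabla_p p}{p}\,\omega = p\,\mathcal{Z},
\end{equation}
which is the asserted transport equation. The equivalent form follows exactly as in eq.~(\ref{XRef-Equation-225164552}): since $\nabla_p(\omega/p) = p^{-1}\nabla_p\omega - p^{-2}\,\omega\,\nabla_p p$, dividing the equation above by $p$ gives $\nabla_p(\omega/p) = \mathcal{Z}$.

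There is no real obstacle here; the only point that genuinely needs care is establishing the reality of $\mathcal{Z}$, on which the whole argument hinges. It is worth emphasizing that this result is the exact complement of the circular polarization wiggle equation obtained earlier by taking the \emph{imaginary} part of the same equation: there, the purely real right-hand side contributed nothing and forced $\nabla_p(\alpha/p)=0$, whereas here the real part captures the full curvature source $p\,\mathcal{Z}$.
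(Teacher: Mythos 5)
Your proposal is correct and follows exactly the route the paper takes: the text preceding the lemma states that the transport equation for $\omega$ is obtained by taking the real part of eq.~(\ref{XRef-Equation-112871124}), using $\omega\equiv\operatorname{Re}[\psi]$ and the reality of the source term $p\,\mathcal{Z}$. Your additional explicit verification that $p$, $\nabla_p p/p$ and $\mathcal{Z}$ are real, and that $\nabla_p$ commutes with $\operatorname{Re}$, simply spells out what the paper leaves implicit.
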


Polarization axis wiggling is induced by a non-zero curvature twist
along the radiation geodesic. The polarization axis wiggle equation
can be solved by integration along the null geodesic, yielding the
following result.
\begin{theorem}

For a given plane electromagnetic wave with wave vector $p^{\mu
}$ and rest frame momentum $p$, the general line-of-sight solution
to the polarization axis wiggle equation is\label{XRef-Theorem-124922}
\begin{equation}
\omega ( \lambda ) =\frac{p( \lambda ) }{p( \lambda _{*}) }\omega
( \lambda _{*}) +p( \lambda ) \text{}\operatorname*{\int }\limits_{\lambda
_{*}}^{\lambda }d\lambda ^{\prime }\ \ \mathcal{Z}( \lambda ^{\prime
}) ,%
\label{XRef-Equation-1249239}
\end{equation}

where $\lambda $ is the affine parameter along the null geodesic
and $\lambda _{*}$ labels the emission event.
\end{theorem}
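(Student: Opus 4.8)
The plan is to recognize the transport equation of Lemma~\ref{XRef-Lemma-12147341}, in its second (divergence) form $\nabla_p(\omega/p)=\mathcal{Z}$, as an exact first-order ordinary differential equation along the null geodesic and to integrate it directly. The key preliminary observation is that $p^\mu$ is the affinely parametrized tangent to the null geodesic: as established in Section~\ref{XRef-Subsection-527133422}, $p^\mu$ satisfies the affine geodesic equation $p^\alpha\nabla_\alpha p^\mu=0$, so with $\lambda$ the affine parameter and $p^\mu=dx^\mu/d\lambda$, the directional derivative of any scalar $f$ collapses to an ordinary derivative, $\nabla_p f=p^\alpha\nabla_\alpha f=df/d\lambda$. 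Since $\omega\equiv\operatorname{Re}[\psi]$, $\mathcal{Z}$, and $p\equiv-u_\mu p^\mu$ are all scalars, the combination $\omega/p$ is a scalar, and this reduction applies to it.

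First I would apply the reduction with $f=\omega/p$, so that the equation becomes $\tfrac{d}{d\lambda}\bigl(\omega/p\bigr)=\mathcal{Z}(\lambda)$. Next I would integrate both sides in the affine parameter from the emission event $\lambda_*$ to an arbitrary point $\lambda$; the fundamental theorem of calculus on the left gives $\frac{\omega(\lambda)}{p(\lambda)}-\frac{\omega(\lambda_*)}{p(\lambda_*)}=\int_{\lambda_*}^{\lambda}\mathcal{Z}(\lambda')\,d\lambda'$. Finally, multiplying through by $p(\lambda)$ produces exactly eq.~(\ref{XRef-Equation-1249239}), which completes the argument.

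The computation is a routine integration, so there is no substantive analytic obstacle; the only point demanding care is the identification $\nabla_p=d/d\lambda$ on scalars, which holds \emph{precisely because} the parametrization is affine and matched to the tangent normalization $p^\mu=dx^\mu/d\lambda$. Were a nonaffine parameter used instead, an extra inhomogeneous factor would enter and the clean integrating-factor structure already exhibited in Lemma~\ref{XRef-Lemma-12147341} (the passage from $\nabla_p\omega-\tfrac{\nabla_p p}{p}\omega=p\mathcal{Z}$ to the exact-derivative form) would be obscured. Thus the essential work has in effect been front-loaded into the rewriting carried out in that lemma, and the theorem follows by a single direct integration.
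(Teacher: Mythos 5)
Your proposal is correct and follows the same route as the paper, which simply integrates the exact-derivative form $\nabla_p(\omega/p)=\mathcal{Z}$ of Lemma \ref{XRef-Lemma-12147341} along the null geodesic and multiplies by $p(\lambda)$. Your explicit justification that $\nabla_p$ reduces to $d/d\lambda$ on scalars for the affine parametrization is a useful detail the paper leaves implicit.
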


If the emitted radiation has constant polarization in the inertial
frame of emission, $\omega ( \lambda _{*}) =0$, and the wiggle rate
simplifies to
\begin{equation}
\omega ( \lambda ) =p( \lambda ) \text{}\operatorname*{\int }\limits_{\lambda
_{*}}^{\lambda }d\lambda ^{\prime }\ \ \mathcal{Z}( \lambda ^{\prime
}) .%
\label{XRef-Equation-22775842}
\end{equation}

The polarization axis wiggle rate $\omega $ is independent of intrinsic
properties of the radiation, such as frequency and state of polarization.
Eqs. (\ref{XRef-Equation-1249239}) and (\ref{XRef-Equation-22775842})
give covariant expressions for the polarization axis wiggle rate.
These expressions can be plugged into Lemma \ref{XRef-Lemma-1227028}
to obtain expressions for the polarization change observables. 

Within the geometric optics limit, Theorem \ref{XRef-Theorem-12481042}
and Theorem \ref{XRef-Theorem-124922} are completely general; valid
for any 4-dimensional spacetime geometry.

\subsection{Mean Helicity Phase Wiggle Rate}

The transport equation for the mean helicity phase wiggle rate $\chi
$ is given by eq. (\ref{XRef-Equation-1029172928}) of Lemma \ref{XRef-Lemma-122013152}.
It can be solved by integration along the null geodesic, yielding
the following result.
\begin{theorem}

For a given plane electromagnetic wave with wave vector $p^{\mu
}$ and rest frame momentum $p$, the general line-of-sight solution
to the transport equation for the mean helicity phase wiggle rate
is
\begin{equation}
\chi ( \lambda ) =\frac{p( \lambda ) }{p( \lambda _{*}) }\chi (
\lambda _{*}) -p( \lambda ) \text{}\mathcal{V}\operatorname*{\int
}\limits_{\lambda _{*}}^{\lambda }d\lambda ^{\prime }\ \ \mathcal{Z}(
\lambda ^{\prime }) ,%
\label{XRef-Equation-112225647}
\end{equation}

where $\lambda $ is the affine parameter along the null geodesic
and $\lambda _{*}$ labels the emission event.
\end{theorem}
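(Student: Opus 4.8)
The plan is to integrate the transport equation for $\chi$ directly, exploiting the fact that the circular polarization degree $\mathcal{V}$ is conserved along the null geodesic. The structure of the argument mirrors the proof of Theorem \ref{XRef-Theorem-124922} for the polarization axis wiggle rate; the only new ingredient is the treatment of the constant factor $\mathcal{V}$ appearing on the right-hand side.

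First I would start from the manifestly integrable form of the transport equation, eq. (\ref{XRef-Equation-1029173641}) of Lemma \ref{XRef-Lemma-122013152},
\[
\nabla_p\left(\frac{\chi}{p}\right) = -\mathcal{V}\mathcal{Z}.
\]
Parametrizing the null geodesic by its affine parameter $\lambda$, the directional derivative along $p^{\mu}$ acting on a scalar reduces to the ordinary derivative, $\nabla_p = d/d\lambda$, so the equation reads $\frac{d}{d\lambda}\left(\chi/p\right) = -\mathcal{V}\mathcal{Z}$.

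The key step is to observe that $\mathcal{V}$ is constant along the geodesic. By Proposition \ref{XRef-Proposition-112225615}, the circular polarization degree of a plane electromagnetic wave satisfies $\nabla_p\mathcal{V} = 0$, so $\mathcal{V}$ may be pulled outside the integral. Integrating both sides from the emission event $\lambda_*$ to a general point $\lambda$ then gives
\[
\frac{\chi(\lambda)}{p(\lambda)} - \frac{\chi(\lambda_*)}{p(\lambda_*)} = -\mathcal{V}\int_{\lambda_*}^{\lambda} d\lambda'\, \mathcal{Z}(\lambda').
\]
Multiplying through by $p(\lambda)$ yields eq. (\ref{XRef-Equation-112225647}), which completes the proof.

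Since every step is an elementary integration, there is no substantive obstacle; the only point requiring care is the justification that $\mathcal{V}$ can be extracted from the integrand, which is supplied by the conservation result of Proposition \ref{XRef-Proposition-112225615}, together with the recognition that Lemma \ref{XRef-Lemma-122013152} has already recast the transport equation in total-derivative form eq. (\ref{XRef-Equation-1029173641}) precisely so that this integration is immediate. Were $\mathcal{V}$ not conserved, the solution would instead involve the integral of the product $\mathcal{V}(\lambda')\mathcal{Z}(\lambda')$, so it is worth emphasizing in the write-up that the simple closed form hinges on this conservation law.
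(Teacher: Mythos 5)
Your proposal is correct and follows the same route as the paper: the paper's own (very terse) proof simply notes that eq. (\ref{XRef-Equation-112225647}) follows by integrating the transport equation and using $d\mathcal{V}/d\lambda=0$ from Proposition \ref{XRef-Proposition-112225615}, which is exactly the conservation argument you spell out. Your write-up just makes the elementary integration explicit.
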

\begin{proof}

Eq. (\ref{XRef-Equation-112225647}) follows by using that $d\mathcal{V}/d\lambda
=0$ by eq. (\ref{XRef-Equation-11222535}) of Proposition \ref{XRef-Proposition-112225615}.
\end{proof}

If the emitted radiation has constant polarization in the inertial
frame of emission, $\chi ( \lambda _{*}) =0$, and the mean hellicity
wiggle rate simplifies to
\begin{equation}
\chi ( \lambda ) =-p( \lambda ) \mathcal{V}\text{}\operatorname*{\int
}\limits_{\lambda _{*}}^{\lambda }d\lambda ^{\prime }\ \ \mathcal{Z}(
\lambda ^{\prime }) .%
\label{XRef-Equation-11363322}
\end{equation}

\subsection{Radiation Emitted with Constant Polarization}

Let us consider the special case for electromagnetic radiation emitted
with constant polarization. In this case, $\omega ( \lambda _{*})
=\chi ( \lambda _{*}) =0$. From eqs. (\ref{XRef-Equation-22775842})
and (\ref{XRef-Equation-11363322}) follows that
\begin{equation}
\chi ( \lambda ) =-\omega ( \lambda ) \mathcal{V}.%
\label{XRef-Equation-11364957}
\end{equation}

The relationship between the polarization wiggling observables and
the polarization axis wiggle rate simplifies considerably in this
case. The following corollary follows directly from Lemma \ref{XRef-Lemma-1227028}
by the use of eq. (\ref{XRef-Equation-11364957}).
\begin{corollary}

For a plane electromagnetic wave emitted with constant linear or
elliptical polarization ($\mathcal{V}\neq \pm 1$), the polarization
wiggling observables can be expressed in terms of the polarization
axis wiggle rate $\omega $ as follows:\label{XRef-Corollary-11721050}
\begin{gather}
\overset{\cdot }{\Omega }=\omega 
\\{\overset{\cdot }{\phi }}_{+}=-\omega 
\\{\overset{\cdot }{\phi }}_{-}=\omega 
\\\alpha =0.
\end{gather}
\end{corollary}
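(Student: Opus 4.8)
The plan is to read off the result as a purely algebraic corollary of the inverse relationships in Lemma~\ref{XRef-Lemma-1227028}, after invoking the constant-polarization boundary data. The constant-polarization hypothesis fixes $\omega(\lambda_*) = \chi(\lambda_*) = 0$, and eq.~(\ref{XRef-Equation-11364957}) — derived from exactly these boundary conditions — already records the consequence $\chi(\lambda) = -\omega(\lambda)\mathcal{V}$. So the substance of the proof is to substitute $\chi = -\omega\mathcal{V}$ into the three inverse formulas of Lemma~\ref{XRef-Lemma-1227028} and to handle the circular polarization wiggle rate separately.

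For $\alpha$, I would argue it vanishes identically along the geodesic rather than only at emission. Since $\alpha \equiv \operatorname{Im}[\psi] = \tfrac{1}{2}\overset{\cdot}{\mathcal{V}}$ and the emitter has constant polarization, $\alpha(\lambda_*) = \tfrac{1}{2}\overset{\cdot}{\mathcal{V}}(\lambda_*) = 0$; Theorem~\ref{XRef-Theorem-12481042} then gives $\alpha(\lambda) = \frac{p(\lambda)}{p(\lambda_*)}\alpha(\lambda_*) = 0$, which is the fourth claimed identity. The remaining three follow by direct substitution: the polarization axis rotation rate is
\[
\overset{\cdot}{\Omega} = \frac{\omega + \mathcal{V}\chi}{1 - \mathcal{V}^2} = \frac{\omega - \omega\mathcal{V}^2}{1 - \mathcal{V}^2} = \omega,
\]
and likewise ${\overset{\cdot}{\phi}}_{+} = (\chi - \omega)/(1 + \mathcal{V}) = -\omega$ and ${\overset{\cdot}{\phi}}_{-} = (\chi + \omega)/(1 - \mathcal{V}) = \omega$ after cancelling the common factors $1 + \mathcal{V}$ and $1 - \mathcal{V}$ in numerator and denominator.

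There is no genuine analytic obstacle; the only step that needs care is verifying that the cancellations are legitimate. This is exactly where the hypothesis $\mathcal{V} \neq \pm 1$ enters: it guarantees that the denominators $1 - \mathcal{V}^2$, $1 + \mathcal{V}$ and $1 - \mathcal{V}$ are all nonzero. At $\mathcal{V} = \pm 1$ the polarization is purely circular and the axis angle $\Omega$ is undefined, so excluding those values is both necessary for the algebra and consistent with the physical content of the statement.
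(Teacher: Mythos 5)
Your proposal is correct and follows the paper's own route: the paper likewise derives the corollary by substituting the relation $\chi=-\omega\mathcal{V}$ of eq.~(\ref{XRef-Equation-11364957}) into the inverse formulas of Lemma~\ref{XRef-Lemma-1227028}, with the condition $\mathcal{V}\neq\pm 1$ guaranteeing the nonvanishing denominators. Your explicit treatment of $\alpha$ via Theorem~\ref{XRef-Theorem-12481042} matches Remark~\ref{XRef-Remark-11127112}, so nothing is missing.
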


In this case, the helicity phase wiggle rates can be expressed in
terms of the polarization axis wiggle rate:
\begin{gather*}
{\overset{\cdot }{\phi }}_{+}=-\overset{\cdot }{\Omega }
\\{\overset{\cdot }{\phi }}_{-}=\overset{\cdot }{\Omega }.
\end{gather*}

\subsection{Relating the Polarization Axis Wiggle Rate to the Gravitational
Faraday Effect}

In this section, we will demonstrate how the polarization axis wiggle
rate $\omega $ can be related to the rotation rate of the Gravitational
Faraday Effect. We will relate our results to the results of Kopeikin
and Mashoon referenced in Section \ref{XRef-Subsection-66111547}.
Linear gravity on a Minkowski background is assumed, and we will
assume that the radiation can be represented by a linearly polarized
plane wave. For reasons of simplicity, the demonstration will be
limited to the case of an observer and emitter with infinitesimal
separation.

Our derivation of the transport equation for the polarization wiggle
rate $\psi $ starts with eq. (\ref{XRef-Equation-6692314}). A common
way of proving this equation is to define an infinitesimal loop
as a quadrilateral spanned by the two vectors $X$ and $Y$ and parallel
transport a third vector, $Z$, around that loop \cite{Misner}. In
this paper, we let $Z$ be the transverse polarization vector $\epsilon
^{\mu }$, while $X$ and $Y$ were the null and timelike geodesics
$p^{\mu }$ and $u^{\mu }$, respectively. Assuming an inertial emitter
and observer with infintesimal separation, a single infinitesimal
loop can then be defined as a spacetime quadrilateral defined by
two emission events $\mathfrak{E}_{1}$ and $\mathfrak{E}_{2}$ and
two observation events $\mathfrak{O}_{1}$ and $\mathfrak{O}_{2}$
with four infinitesimal geodesic segments linking them: $\mathfrak{U}_{E}:\mathfrak{E}_{1}\rightarrow
\mathfrak{E}_{2}$, $\mathfrak{U}_{O}: \mathfrak{O}_{1}\rightarrow
\mathfrak{O}_{2}$, $\mathfrak{P}_{1}:\mathfrak{E}_{1}\rightarrow
\mathfrak{O}_{1}$ and $\mathfrak{P}_{2}:\mathfrak{E}_{2}\rightarrow
\mathfrak{O}_{2}$. Let $\Omega $ denote the observed angle of the
polarization axis in a given polarization basis in the frame of
observation. Assuming that the polarization basis is stationary
in the frame of the observer, the angular change $\Delta \Omega
$ in orientation of the polarization axis between the two measurement
events is an observable. 

Next, let us assume that the value of $\epsilon ^{\mu }$ at the
first measurement event, $\epsilon ^{\mu }( \mathfrak{O}_{1}) $,
is parallel transported along $\mathfrak{U}_{O}$ to be compared
with the value at the second measurement, $\epsilon ^{\mu }( \mathfrak{O}_{2})
$. Assume, furthermore, that the emitted radiation has constant
polarization, which means that the second emitted value, $\epsilon
^{\mu }( \mathfrak{E}_{2}) $, is equal to the first emitted value,
$\epsilon ^{\mu }( \mathfrak{E}_{1}) $, parallel transported along
$\mathfrak{U}_{E}$. By integrating the transport equation of the
transverse polarization vector $\epsilon ^{\mu }$, eq. (\ref{XRef-Equation-1122174843}),
along the null geodesic segments $\mathfrak{P}_{1}$ and $\mathfrak{P}_{2}$,
we obtain the following expression for $\Delta \Omega $, using linear
gravity with a metric perturbation $h_{\mu \nu }$ to a Minkowski
background:
\begin{equation}
\Delta \Omega =N^{\left( p\right) }( \mathfrak{P}_{2}) -N^{\left(
p\right) }( \mathfrak{P}_{1}) -\left( N^{\left( u\right) }( \mathfrak{U}_{O})
-N^{\left( u\right) }( \mathfrak{U}_{E}) \right) .%
\label{XRef-Equation-66111845}
\end{equation}

$N^{(q)}( \mathfrak{S}) $ is the segment integral
\[
N^{\left( q\right) }( \mathfrak{S}) \equiv \operatorname*{\int }\limits_{\mathfrak{S}}ds
\nu ^{\left( q\right) }
\]

for a geodesic segment $\mathfrak{S}$ with geodesic vector $q$ and
affine parameter $s$, while
\begin{equation}
\nu ^{\left( q\right) }\equiv \frac{1}{2}\epsilon ^{\alpha \beta
}q^{\lambda }h_{\lambda [ \alpha ,\beta ] }.%
\label{XRef-Equation-66105741}
\end{equation}

When comparing eqs. (\ref{XRef-Equation-177445}) and (\ref{XRef-Equation-66105741}),
we see that, given a stationary metric, the rotation rate $\nu $
of eq. (\ref{XRef-Equation-177445}) is related to $\nu ^{(p)}$ as
$\nu =\nu ^{(p)}/p^{2}$, where $p$ is the scalar photon momentum.
The factor $1/p^{2}$ is due to different parametrization of the
geodesic curve. Thus, the results of Kopeikin and Mashoon referenced
in Section \ref{XRef-Subsection-66111547} correspond to the integral
$N^{(p)}( \mathfrak{P}_{a}) $. 

On the other hand, if we let $\Delta \tau $ denote the proper time
difference between the two observation events, the polarization
axis wiggle rate $\omega $ can be expressed in terms of $\Delta
\Omega $ as follows:
\[
\omega =\frac{\Delta \Omega }{\Delta \tau }.
\]

This gives an explicit relationship between the polarization axis
wiggle rate $\omega $ and the previous results of Kopeikin and Mashoon
\cite{Kopeikin-2001}.

It was remarked in Section \ref{XRef-Subsection-66111547} that we
should expect contributions from the motions of the frames of observation
and emission to an observable quantity such as $\Delta \Omega $.
This is now evident in eq. (\ref{XRef-Equation-66111845}), as the
term $-N^{(u)}( \mathfrak{U}_{O}) $ is the contribution to $\Delta
\Omega $ from motion of the frame of observation, while $N^{(u)}(
\mathfrak{U}_{E}) $ is the contribution to $\Delta \Omega $ from
motion of\ \ the frame of emission.

Notice that this demonstration of the relationship between the polarization
wiggle rate and the rotation rate of the Gravitational Faraday Effect
was for reasons of simplicity limited to the case of an observer
and emitter with infinitesimal separation, which can be represented
by a single quadrilateral of infinitesimal extent. 

\section{The Effect of Gravity on the Polarization of Composite
Electromagnetic Radiation}\label{XRef-Section-52794743}

So far we have limited our attention to plane electromagnetic waves,
representing coherent, monochromatic radiation, and how its polarization
is affected by gravity. Let us take a step further and consider
how gravity affects the state of polarization of a ray of composite
electromagnetic radiation with a geometric optics frequency spectrum
and in an arbitrary state of polarization. It is assumed that the
radiation is emitted with a constant frequency spectrum. We will
model the ray as an ensemble of photons, represented as plane wave
components. The ensemble definition is given in Section \ref{XRef-Subsection-52815235}.
We will assume that geometric optics applies to each plane wave
component of the ensemble, which allows us to draw on the results
of previous sections. All components of the ensemble therefore share
the same null geodesic and have the same scale invariant null tangent
vector field $n^{\mu }\equiv p^{\mu }/p=u^{\mu }+{\hat{p}}^{\mu
}$.

\subsection{Intensity Transport}

In geometric optics, the transport equation for the magnitude $a$
of an electromagnetic plane wave with 4-momentum $p^{\mu }$ along
its null geodesic is \cite{Misner} 
\begin{equation}
p^{\beta }\nabla _{\beta }a+\frac{1}{2}\left( \nabla _{\beta }p^{\beta
}\right) a=0.%
\label{XRef-Equation-58123044}
\end{equation}

Let $n^{\mu }\equiv p^{\mu }/p=u^{\mu }+{\hat{p}}^{\mu }$ be the
scale invariant null vector introduced in Section \ref{XRef-Subsection-58122719}
above. 
\begin{proposition}

The transport equation for the scalar intensity $I^{\mathrm{pw}}=a^{2}p^{2}$
of a plane electromagnetic wave is\label{XRef-Proposition-1267346}
\begin{equation}
\nabla _{n}I^{\mathrm{pw}}=-\beta  I^{\mathrm{pw}},%
\label{XRef-Equation-1267423}
\end{equation}

where
\begin{equation}
\beta \equiv \frac{1}{p}\nabla _{\alpha }p^{\alpha }-\frac{2}{p}\nabla
_{n}p.%
\label{XRef-Equation-1267030}
\end{equation}
\end{proposition}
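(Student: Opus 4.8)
The plan is to differentiate the definition $I^{\mathrm{pw}} = a^2 p^2$ directly along the scale invariant null vector $n^\mu$ and then eliminate the derivative of the magnitude $a$ using the geometric optics transport equation (\ref{XRef-Equation-58123044}). Since $n^\mu \equiv p^\mu/p$, the directional derivatives along $n$ and along $p$ are related by $\nabla_n = \tfrac{1}{p}\nabla_p$, which is what lets me pass between the form stated in (\ref{XRef-Equation-58123044}) (written with $p^\beta\nabla_\beta$) and the form of the claim (written with $\nabla_n$).

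First I would apply the product rule to the scalar $I^{\mathrm{pw}} = a^2 p^2$, giving
\[
\nabla_n I^{\mathrm{pw}} = 2ap^2\,\nabla_n a + 2a^2 p\,\nabla_n p .
\]
The essential step is to rewrite the first term. Equation (\ref{XRef-Equation-58123044}) reads $p^\beta\nabla_\beta a = -\tfrac{1}{2}(\nabla_\beta p^\beta)\,a$, i.e. $\nabla_p a = -\tfrac{1}{2}(\nabla_\beta p^\beta)\,a$, so dividing by $p$ yields $\nabla_n a = -\tfrac{1}{2p}(\nabla_\beta p^\beta)\,a$. Substituting this into the first term collapses it to $-a^2 p\,(\nabla_\beta p^\beta)$.

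I would then collect the two contributions and factor out $I^{\mathrm{pw}}$, using the fact that $a^2 p = I^{\mathrm{pw}}/p$ by the definition $I^{\mathrm{pw}}=a^2p^2$. This gives
\[
\nabla_n I^{\mathrm{pw}} = -\frac{I^{\mathrm{pw}}}{p}\nabla_\beta p^\beta + \frac{2\,I^{\mathrm{pw}}}{p}\nabla_n p = -\left(\frac{1}{p}\nabla_\alpha p^\alpha - \frac{2}{p}\nabla_n p\right)I^{\mathrm{pw}},
\]
which is precisely $-\beta\,I^{\mathrm{pw}}$ by the definition of $\beta$ in (\ref{XRef-Equation-1267030}), completing the argument.

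I do not expect a genuine obstacle here; the result is purely algebraic once the magnitude transport equation is invoked. The only points requiring care are the bookkeeping of the factors of $p$ when converting between $\nabla_p$ and $\nabla_n$, and the recognition that $a^2 p$ equals $I^{\mathrm{pw}}/p$, so that the two terms assemble into exactly the combination defining $\beta$.
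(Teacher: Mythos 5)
Your proposal is correct and follows essentially the same route as the paper, which simply cites the magnitude transport equation (\ref{XRef-Equation-58123044}) together with the decomposition $p^{\mu}=p\,n^{\mu}$; you have merely written out the product-rule bookkeeping that the paper leaves implicit. The algebra checks out, including the conversion $\nabla_n=\tfrac{1}{p}\nabla_p$ and the identification $a^2p=I^{\mathrm{pw}}/p$.
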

\begin{proof}

The result follows from using eq. (\ref{XRef-Equation-58123044})
and the $p^{\mu }$ decomposition $p^{\mu }=p n^{\mu }$ of eq. (\ref{XRef-Equation-62782635}).
\end{proof}

The term $\nabla _{\alpha }p^{\alpha }$ on the right-hand side of
eq. (\ref{XRef-Equation-1267030}) is the expansion of the null congruence
$p^{\mu }$, which measures the relative expansion rate of a small
cross sectional area of the beam along its null geodesic \cite{Poisson-2004}.
As expected, the intensity decreases if the beam expands. The second
term on the right-hand side of eq. (\ref{XRef-Equation-1267030})
measures the relative rate of change of the photon energy along
the beam. As expected, the intensity decreases if the photon energy
decreases along the geodesic.

We notice that the scalar $\beta $ is independent of the magnitude
$a$. By using eq. (\ref{XRef-Equation-58123722}) above, we find
that
\[
\beta =\nabla _{\alpha } n^{\alpha }+\kappa ,
\]

where $\kappa $ is defined by eq. (\ref{XRef-Equation-12272315}).
$\kappa \equiv {\hat{p}}^{\mu }{\hat{p}}^{\nu }\theta _{\mu \nu
}$ is the projection of the expansion tensor $\theta _{\mu \nu }$
of the timelike reference geodesic congruence along the direction
of propagation. Thus, $\beta $ only depends on ${\hat{p}}^{\mu }$
and $u^{\mu }$ and is independent of both photon energy $p$ and
magnitude $a$. 
\begin{lemma}

Given an ensemble of photons with shared null geodesic, each represented
by a plane wave component with intensity $I_{(j)}^{\mathrm{pw}}$,
the transport equation for the total intensity $I=\sum _{j}I_{(j)}^{\mathrm{pw}}$
of the ensemble along the geodesic is\label{XRef-Lemma-126173534}
\begin{equation}
 \nabla _{n}I=-\beta  I.%
\label{XRef-Equation-12617302}
\end{equation}
\end{lemma}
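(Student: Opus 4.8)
The plan is to reduce the ensemble statement to the single plane wave result already established in Proposition \ref{XRef-Proposition-1267346}, and then to exploit the linearity of the covariant derivative together with the fact that the scalar $\beta$ takes a common value across every component of the ensemble. The whole argument rests on the observation, made in the discussion preceding the lemma, that $\beta$ depends only on the shared kinematical data $n^\mu$, $\hat{p}^\mu$ and $u^\mu$, and not on the individual photon energies or magnitudes.

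First I would apply Proposition \ref{XRef-Proposition-1267346} to each plane wave component $j$ of the ensemble separately. This yields, for each $j$, the transport equation $\nabla_n I_{(j)}^{\mathrm{pw}} = -\beta_{(j)} I_{(j)}^{\mathrm{pw}}$, with $\beta_{(j)} \equiv \frac{1}{p_{(j)}}\nabla_\alpha p_{(j)}^\alpha - \frac{2}{p_{(j)}}\nabla_n p_{(j)}$. At this stage the $\beta_{(j)}$ carry an apparent dependence on the component index through the rest frame momenta $p_{(j)}$.

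The crucial step is to show that $\beta_{(j)}$ is in fact independent of $j$. Using the decomposition $p_{(j)}^\mu = p_{(j)} n^\mu$ and eq. (\ref{XRef-Equation-58123722}), I would rewrite each $\beta_{(j)}$ in the form $\beta_{(j)} = \nabla_\alpha n^\alpha + \kappa$, where $\kappa = \hat{p}^\mu \hat{p}^\nu \theta_{\mu\nu}$ is defined in eq. (\ref{XRef-Equation-12272315}). Since all components of the ensemble share the same scale invariant null vector $n^\mu = u^\mu + \hat{p}^\mu$, and since $\kappa$ involves only the shared direction of propagation $\hat{p}^\mu$ and the expansion tensor $\theta_{\mu\nu}$ of the reference geodesic congruence, both $\nabla_\alpha n^\alpha$ and $\kappa$ are common to every $j$. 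Hence $\beta_{(j)} = \beta$ for all $j$, with neither $p_{(j)}$ nor the magnitude $a_{(j)}$ entering.

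Finally I would sum the per-component equations over the ensemble index, using linearity of $\nabla_n$ acting on the scalar intensities and the common value of $\beta$:
\[
\nabla_n I = \nabla_n \sum_j I_{(j)}^{\mathrm{pw}} = \sum_j \nabla_n I_{(j)}^{\mathrm{pw}} = -\beta \sum_j I_{(j)}^{\mathrm{pw}} = -\beta I .
\]
The only genuine obstacle is confirming that $\beta$ is shared across the ensemble; once that is in hand, the result is immediate. Since precisely this fact was derived in the remarks just above the lemma, no real difficulty remains, and the statement follows directly.
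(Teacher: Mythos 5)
Your proposal is correct and follows essentially the same route as the paper: apply the single plane wave transport equation of Proposition \ref{XRef-Proposition-1267346} componentwise, note that $\beta=\nabla_\alpha n^\alpha+\kappa$ depends only on the shared quantities $n^\mu$, $\hat{p}^\mu$ and $u^\mu$ (not on $p_{(j)}$ or $a_{(j)}$), and sum over the ensemble by linearity. Your treatment is in fact slightly more explicit than the paper's, which compresses the argument into one sentence, but the content is identical.
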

\begin{proof}

The ensemble intensity is given by eq. (\ref{XRef-Equation-126172427})
as a sum of the component intensities. Eq. (\ref{XRef-Equation-12617302})
then follows from Proposition \ref{XRef-Proposition-1267346}, because
$\beta $ is independent of photon energy $p$ and magnitude $a$.
\end{proof}

Lemma \ref{XRef-Lemma-126173534} generalizes the transport equation
for the intensity of a plane electromagnetic wave, eq. (\ref{XRef-Equation-1267423})
of Proposition \ref{XRef-Proposition-1267346}, to a ray of composite
electromagnetic radiation with a geometric optics frequency spectrum
and arbitrary state of polarization. 

\subsection{Transport of the Polarization Tensor}

As defined in Section \ref{XRef-Subsection-52815235}, the state
of polarization of a ray of electromagnetic radiation can be represented
in a covariant way by the polarization tensor $P_{\mu \nu }$ defined
in eq. (\ref{XRef-Equation-112318139}) of Definition \ref{XRef-Definition-12771128}.\ \ Again,
we will represent the ray as an ensemble of plane wave components.
We will now derive the transport equation for the polarization tensor
$P_{\mu \nu }$ of this ensemble. 

Define the intensity fraction $i_{j}\equiv I_{(j)}^{\mathrm{pw}}/I$
of plane wave component $j$ of the ensemble. 
\begin{proposition}

Given an ensemble of photons with shared null geodesic, the intensity
fraction $i_{(j)}\equiv I_{(j)}^{\mathrm{pw}}/I$ of plane wave component
$j$ is conserved along the null geodesic and satisfies the transport
equation\label{XRef-Proposition-12763324}
\begin{equation}
\nabla _{n}i_{\left( j\right) }=0.%
\label{XRef-Equation-126174517}
\end{equation}
\end{proposition}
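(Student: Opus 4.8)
The plan is to prove this directly via the quotient rule for the directional derivative $\nabla_n$, leveraging the two transport equations already established earlier in the excerpt. The quantity $i_{(j)}$ is a ratio of two scalars, the component intensity $I_{(j)}^{\mathrm{pw}}$ in the numerator and the total ensemble intensity $I$ in the denominator, so its derivative along $n^\mu$ decomposes into the standard quotient-rule expression built from $\nabla_n I_{(j)}^{\mathrm{pw}}$ and $\nabla_n I$.

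First I would invoke Proposition \ref{XRef-Proposition-1267346}, which gives the transport equation for each plane wave component, $\nabla_n I_{(j)}^{\mathrm{pw}} = -\beta\, I_{(j)}^{\mathrm{pw}}$. Next I would invoke Lemma \ref{XRef-Lemma-126173534} for the total intensity, $\nabla_n I = -\beta\, I$. The crucial point to emphasize is that the decay factor $\beta$ appearing in both equations is one and the same scalar: as established in the discussion preceding Lemma \ref{XRef-Lemma-126173534}, $\beta = \nabla_\alpha n^\alpha + \kappa$ depends only on ${\hat{p}}^\mu$ and $u^\mu$, and is independent of the photon energy $p$ and the magnitude $a$. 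Since all components of the ensemble share the same scale invariant null vector $n^\mu = u^\mu + {\hat{p}}^\mu$, they all decay with the identical rate $\beta$.

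Substituting both transport equations into the quotient rule then yields
\[
\nabla_n i_{(j)} = \frac{(\nabla_n I_{(j)}^{\mathrm{pw}})\, I - I_{(j)}^{\mathrm{pw}}\, (\nabla_n I)}{I^2} = \frac{(-\beta I_{(j)}^{\mathrm{pw}})\, I - I_{(j)}^{\mathrm{pw}}\, (-\beta I)}{I^2} = 0,
\]
where the two terms cancel exactly precisely because $\beta$ is common to numerator and denominator.

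There is no serious obstacle here; the only subtlety worth flagging explicitly is the one just noted, namely that the cancellation hinges entirely on $\beta$ being shared across all components. If the components had different decay rates the fractions would not be conserved, so the argument really rests on the frequency-independence of $\beta$ rather than on any property of the individual intensities. Once that observation is in hand, the result is immediate.
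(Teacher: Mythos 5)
Your proposal is correct and follows essentially the same route as the paper: the paper's proof likewise evaluates $\nabla_n i_{(j)}$ by the quotient rule and applies Proposition \ref{XRef-Proposition-1267346} together with Lemma \ref{XRef-Lemma-126173534}, with the cancellation resting on the common, frequency-independent factor $\beta$. Your version simply makes explicit the cancellation and the key observation that $\beta$ is shared across all ensemble components.
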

\begin{proof}

Eq. (\ref{XRef-Equation-126174517}) follows from evaluating $\nabla
_{n}i_{(j)}$ and applying Proposition \ref{XRef-Proposition-1267346}
and Lemma \ref{XRef-Lemma-126173534}.
\end{proof}
\begin{proposition}

Given an ensemble of photons with shared null geodesic, each represented
by a plane wave component with relative coherency tensor $i_{\mu
\nu }^{\mathrm{pw}}$, the transport equation of the relative coherency
tensor $i_{\mu \nu }\equiv I_{\mu \nu }/I$ is\label{XRef-Proposition-12765814}
\begin{equation}
\mathcal{D}_{n}i_{\mu \nu }=0.
\end{equation}
\end{proposition}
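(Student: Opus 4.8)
The plan is to reduce the claim to two transport facts already in hand: the conservation of each intensity fraction along the ray, and the transverse-parallel propagation of each component's polarization vector.

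First I would write the relative coherency tensor of the ensemble explicitly. Combining the ensemble decomposition of the coherency tensor, eq.~(\ref{XRef-Equation-5232107}), with the plane-wave form $i_{\mu\nu}^{\mathrm{pw}}(\epsilon)=\epsilon_\mu\epsilon_\nu^*$ of Proposition~\ref{XRef-Proposition-12764641}, and dividing by the total intensity $I$, one gets
\[
i_{\mu \nu }=\frac{I_{\mu \nu }}{I}=\sum_{j} i_{(j)}\,\epsilon_{\mu }^{(j)}\epsilon_{\nu }^{(j)*},
\]
where $i_{(j)}\equiv I_{(j)}^{\mathrm{pw}}/I$ is the intensity fraction of component $j$. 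Because every $\epsilon^{(j)}$ is transverse, each summand is a transverse tensor and $\mathcal{D}_n$ can be applied termwise.

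Next I would use that, for transverse factors, $\mathcal{D}_n$ obeys the ordinary Leibniz rule. The mechanism is that a screen projector contracted with a spectator transverse index acts as the identity, i.e.\ $S_\mu^\alpha \epsilon_\alpha^{(j)}=\epsilon_\mu^{(j)}$ and $S_\nu^\beta \epsilon_\beta^{(j)*}=\epsilon_\nu^{(j)*}$. Regrouping the two projectors in $S_\mu^\alpha S_\nu^\beta n^\lambda \nabla_\lambda\!\big(i_{(j)}\epsilon_\alpha^{(j)}\epsilon_\beta^{(j)*}\big)$, and recalling that $\mathcal{D}_n$ of the scalar $i_{(j)}$ is just $\nabla_n i_{(j)}$, one obtains
\[
\mathcal{D}_n\!\big(i_{(j)}\epsilon_\mu^{(j)}\epsilon_\nu^{(j)*}\big)
=(\nabla_n i_{(j)})\,\epsilon_\mu^{(j)}\epsilon_\nu^{(j)*}
+i_{(j)}(\mathcal{D}_n \epsilon_\mu^{(j)})\,\epsilon_\nu^{(j)*}
+i_{(j)}\,\epsilon_\mu^{(j)}(\mathcal{D}_n \epsilon_\nu^{(j)*}).
\]

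Finally I would annihilate all three terms. The first vanishes by conservation of the intensity fraction, $\nabla_n i_{(j)}=0$, from Proposition~\ref{XRef-Proposition-12763324}. The remaining two vanish because $\mathcal{D}_n \epsilon^\mu=0$: this is Proposition~\ref{XRef-Proposition-112363540} rescaled, since $\mathcal{D}_p \epsilon^\mu = p\,\mathcal{D}_n \epsilon^\mu$ with $p\neq 0$, and conjugation commutes with the real operation $\mathcal{D}_n$ so that $\mathcal{D}_n \epsilon^{\mu*}=0$ as well. Summing over $j$ yields $\mathcal{D}_n i_{\mu\nu}=0$. The one genuinely delicate point, which I would flag explicitly, is the Leibniz step: it is valid here only because each differentiated factor is transverse, which is precisely what lets the spectator screen projectors act trivially.
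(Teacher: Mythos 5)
Your proposal is correct and follows essentially the same route as the paper's proof, which likewise combines the ensemble decomposition $i_{\mu\nu}=\sum_j i_{(j)}\,i^{\mathrm{pw}}_{\mu\nu}$ with Proposition~\ref{XRef-Proposition-12764641}, the conservation of the intensity fractions (Proposition~\ref{XRef-Proposition-12763324}), and $\mathcal{D}_p\epsilon^\mu=0$ (Proposition~\ref{XRef-Proposition-112363540}). You simply make explicit the Leibniz step and its justification via transversality, which the paper leaves implicit.
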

\begin{proof}

This follows from applying\ \ Proposition \ref{XRef-Proposition-112363540},
Proposition \ref{XRef-Proposition-12764641} and Proposition \ref{XRef-Proposition-12763324}
to the definition $i_{\mu \nu }\equiv I_{\mu \nu }/I=\sum _{j}i_{(j)}i_{\mu
\nu }^{\mathrm{pw}}$.
\end{proof}
\begin{proposition}

The coherency tensor $I_{\mu \nu }$ satisfies the transport equation
\[
\mathcal{D}_{n}I_{\mu \nu }+\beta  I_{\mu \nu }=0.
\]
\end{proposition}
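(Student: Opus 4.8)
The plan is to exploit the factorization $I_{\mu\nu} = I\, i_{\mu\nu}$ supplied by Definition \ref{XRef-Definition-121091947} and to differentiate it with the Leibniz rule, thereby reducing the claim to the two transport results already in hand: $\nabla_n I = -\beta I$ from Lemma \ref{XRef-Lemma-126173534}, and $\mathcal{D}_n i_{\mu\nu} = 0$ from Proposition \ref{XRef-Proposition-12765814}. The strategy is to peel off the scalar intensity factor, which carries all the $\beta$-dependence, while the relative coherency tensor, which carries the polarization information, is transported trivially.

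First I would write the transverse covariant directional derivative explicitly in its definitional form for a rank-two tensor, $\mathcal{D}_n I_{\mu\nu} = S_\mu^\alpha S_\nu^\beta\, n^\lambda \nabla_\lambda I_{\alpha\beta}$, then substitute $I_{\alpha\beta} = I\, i_{\alpha\beta}$ and apply the ordinary product rule inside $\nabla_\lambda$. Because $I$ is a scalar, the expression splits cleanly into two terms, $(\nabla_n I)\, S_\mu^\alpha S_\nu^\beta\, i_{\alpha\beta} + I\, S_\mu^\alpha S_\nu^\beta\, n^\lambda \nabla_\lambda i_{\alpha\beta}$.

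Next I would simplify each piece. The second term is exactly $I\, \mathcal{D}_n i_{\mu\nu}$, which vanishes by Proposition \ref{XRef-Proposition-12765814}. For the first term I invoke Proposition \ref{XRef-Proposition-121094841}, which establishes that $i_{\mu\nu}$ is transverse, so the screen projectors collapse to the identity on it: $S_\mu^\alpha S_\nu^\beta\, i_{\alpha\beta} = i_{\mu\nu}$. What survives is $(\nabla_n I)\, i_{\mu\nu}$; substituting $\nabla_n I = -\beta I$ from Lemma \ref{XRef-Lemma-126173534} and recombining $I\, i_{\mu\nu} = I_{\mu\nu}$ gives $\mathcal{D}_n I_{\mu\nu} = -\beta I_{\mu\nu}$, which is the asserted equation after rearrangement.

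There is no real obstacle here; the only point requiring care is the bookkeeping with the screen projectors, namely confirming that they pass through the scalar factor $I$ untouched and act as the identity on the already-transverse $i_{\mu\nu}$, rather than producing spurious terms. This is controlled entirely by the transversality of $i_{\mu\nu}$ together with the fact that in the definition of $\mathcal{D}_n$ the projectors sit outside the covariant derivative, so no differentiation of $S_{\mu\nu}$ itself enters and Proposition \ref{XRef-Proposition-11236541} is not even needed.
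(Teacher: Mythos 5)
Your proposal is correct and follows exactly the paper's own route: differentiate the factorization $I_{\mu\nu}=I\,i_{\mu\nu}$ and invoke Lemma \ref{XRef-Lemma-126173534} together with Proposition \ref{XRef-Proposition-12765814}. The extra bookkeeping you supply about the screen projectors acting as the identity on the transverse $i_{\mu\nu}$ is a correct elaboration of what the paper leaves implicit.
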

\begin{proof}

This follows from applying Lemma \ref{XRef-Lemma-126173534} and
Proposition \ref{XRef-Proposition-12765814} to the relationship
$I_{\mu \nu }=I i_{\mu \nu }$.
\end{proof}

We are now finally in a position to derive the transport equation
for the polarization tensor $P_{\mu \nu }$. The polarization tensor
is defined in\ \ Definition \ref{XRef-Definition-12771128} as $P_{\mu
\nu }\equiv i_{\mu \nu }-\frac{1}{2}S_{\mu \nu }$. 
\begin{lemma}

Given a ray of electromagnetic radiation, its polarization tensor
$P_{\mu \nu }$ satisfies the transport equation\label{XRef-Lemma-1271865}
\begin{equation}
\mathcal{D}_{n}P_{\mu \nu }=0.%
\label{XRef-Equation-12771725}
\end{equation}
\end{lemma}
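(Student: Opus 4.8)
The plan is to differentiate the defining relation for the polarization tensor directly, since everything needed has already been established. By Definition \ref{XRef-Definition-12771128}, the polarization tensor is the traceless part of the relative coherency tensor, $P_{\mu\nu} \equiv i_{\mu\nu} - \frac{1}{2}S_{\mu\nu}$. Because the transverse covariant directional derivative $\mathcal{D}_n$ acts by applying the screen projectors $S_\mu^\alpha S_\nu^\beta$ to the ordinary directional covariant derivative $n^\lambda \nabla_\lambda$, and both the covariant derivative and the projection are linear operations, $\mathcal{D}_n$ distributes over the difference. Hence I would write
\[
\mathcal{D}_n P_{\mu\nu} = \mathcal{D}_n i_{\mu\nu} - \frac{1}{2}\,\mathcal{D}_n S_{\mu\nu}.
\]

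The two terms on the right then vanish independently by results already proved. The first term vanishes by Proposition \ref{XRef-Proposition-12765814}, which gives the transport invariance of the relative coherency tensor, $\mathcal{D}_n i_{\mu\nu} = 0$. For the second term, Proposition \ref{XRef-Proposition-11236541} establishes that the transverse covariant derivative of the screen projector vanishes identically, $\mathcal{D}_\lambda S_{\alpha\beta} = 0$; contracting with $n^\lambda$ gives $\mathcal{D}_n S_{\mu\nu} = n^\lambda \mathcal{D}_\lambda S_{\mu\nu} = 0$. Substituting both results yields $\mathcal{D}_n P_{\mu\nu} = 0$, as claimed.

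There is no substantive obstacle here: the lemma is an immediate corollary of the two cited propositions together with the linearity of the transverse covariant derivative. The only point that warrants a moment's care is confirming that $\mathcal{D}_n$ genuinely distributes linearly across the subtraction in the definition of $P_{\mu\nu}$, which follows because the screen projectors act linearly on each slot and the linearity of $\nabla$ is inherited after projection. No curvature sourcing terms or additional transport equations beyond those already derived enter the argument.
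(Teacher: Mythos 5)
Your proposal is correct and follows essentially the same route as the paper: the paper's own proof likewise applies Proposition \ref{XRef-Proposition-11236541} (vanishing of $\mathcal{D}_\lambda S_{\alpha\beta}$) and Proposition \ref{XRef-Proposition-12765814} ($\mathcal{D}_n i_{\mu\nu}=0$) to the definition $P_{\mu\nu}=i_{\mu\nu}-\tfrac{1}{2}S_{\mu\nu}$. Your version merely spells out the linearity step that the paper leaves implicit.
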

\begin{proof}

By representing the ray as an ensemble of photons with a shared
null geodesic, eq. (\ref{XRef-Equation-12771725}) follows by applying
Proposition \ref{XRef-Proposition-11236541} and Proposition \ref{XRef-Proposition-12765814}
to the definition of the polarization tensor.
\end{proof}

The polarization degree $\mathcal{P}$ is the degree to which the
radiation is polarized. From its definition in Definition \ref{XRef-Definition-12773823},
it can be expressed in terms of the polarization tensor as
\begin{equation}
\mathcal{P}^{2}=2P^{\mu \nu }P_{\nu \mu }=2P^{\mu \nu }P_{\mu \nu
}^{*}.%
\label{XRef-Equation-7375614}
\end{equation}

The polarization degree is an observable scalar, and its transport
equation is given by the following theorem.
\begin{theorem}

Given a ray of composite electromagnetic radiation with a geometric
optics frequency spectrum and arbitrary state of polarization, its
polarization degree $\mathcal{P}$ is constant along the null geodesic:\label{XRef-Theorem-127181215}
\begin{equation}
\nabla _{n}\mathcal{P}^{2}=0.%
\label{XRef-Equation-12718937}
\end{equation}
\end{theorem}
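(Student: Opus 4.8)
The plan is to differentiate the scalar $\mathcal{P}^{2}=2P^{\mu \nu }P_{\nu \mu }$ (eq. \ref{XRef-Equation-7375614}) along the scale invariant null vector $n^{\mu }$ and reduce the result to the transverse covariant directional derivative $\mathcal{D}_{n}P_{\mu \nu }$, which vanishes by Lemma \ref{XRef-Lemma-1271865}. Since $\mathcal{P}^{2}$ is a scalar, its transverse covariant derivative coincides with its ordinary covariant derivative, so it suffices to show $\nabla _{n}\!\left( P^{\mu \nu }P_{\nu \mu }\right) =0$.

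First I would apply the Leibniz rule together with metric compatibility ($\nabla _{\alpha }g_{\mu \nu }=0$), using the latter to raise the indices through the derivative. The two terms produced by the product rule are equal after relabelling dummy indices (note that $P_{\mu \nu }$ need not be symmetric, but this relabelling does not rely on symmetry), giving
\begin{equation}
\nabla _{n}\!\left( P^{\mu \nu }P_{\nu \mu }\right) =2\,P^{\beta \alpha }\nabla _{n}P_{\alpha \beta }.
\end{equation}
The key step is then to convert this ordinary derivative into a transverse one. Because the polarization tensor is transverse by Proposition \ref{XRef-Proposition-121094841}, I may write $P^{\beta \alpha }=S^{\beta }_{\kappa }S^{\alpha }_{\lambda }P^{\kappa \lambda }$ and absorb the two screen projectors into the derivative factor:
\begin{equation}
P^{\beta \alpha }\nabla _{n}P_{\alpha \beta }=P^{\kappa \lambda }\,S^{\alpha }_{\lambda }S^{\beta }_{\kappa }\nabla _{n}P_{\alpha \beta }=P^{\kappa \lambda }\,\mathcal{D}_{n}P_{\lambda \kappa }.
\end{equation}
This identification uses only the definition of the transverse covariant directional derivative and is the crux of the argument.

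Finally, applying Lemma \ref{XRef-Lemma-1271865} gives $\mathcal{D}_{n}P_{\lambda \kappa }=0$, so the right-hand side vanishes and hence $\nabla _{n}\mathcal{P}^{2}=0$. The only point requiring care is ensuring that the screen projectors may be freely moved onto the $\nabla _{n}P$ factor; this is legitimate precisely because the transversality of $P$ lets one insert $S$ on the \emph{other} factor without altering the contraction, rather than attempting to commute $S$ past $\nabla _{n}$ (which would generate unwanted derivative-of-projector terms). I therefore do not anticipate any genuine obstacle beyond this index bookkeeping.
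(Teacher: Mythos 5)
Your proof is correct and follows essentially the same route as the paper: differentiate $\mathcal{P}^{2}=2P^{\mu\nu}P_{\nu\mu}$ along $n^{\mu}$ and reduce to $\mathcal{D}_{n}P_{\mu\nu}=0$ via Lemma \ref{XRef-Lemma-1271865}. The paper states this in one line; your version merely makes explicit the index bookkeeping (relabelling the two Leibniz terms and using transversality of $P_{\mu\nu}$ to insert the screen projectors), which is a legitimate and welcome elaboration rather than a different argument.
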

\begin{proof}

The ray can be represented as an ensemble of photons with a shared
null geodesic. Then, evaluating $\mathcal{D}_{n}\mathcal{P}^{2}$
and applying Lemma \ref{XRef-Lemma-1271865} to the definition of
$\mathcal{P}$ in eq. (\ref{XRef-Equation-7375614}) yields eq. (\ref{XRef-Equation-12718937}).
\end{proof}

Theorem \ref{XRef-Theorem-127181215} states that the polarization
degree $ \mathcal{P}$ is conserved along the null geodesic for any
electromagnetic radiation within the geometric optics limit. Notice
that the derivations of this section were done without any assumption
being made with respect to the spacetime geometry. Therefore, Lemma
\ref{XRef-Lemma-1271865} and Theorem \ref{XRef-Theorem-127181215}\ \ are
valid for any four-dimensional spacetime.

\subsection{Polarization Wiggling of Composite Radiation}

The aim of this section is to generalize the phenomenon of gravity-induced
polarization wiggling to composite electromagnetic radiation with
a geometric optics frequency spectrum and arbitrary state of polarization.
In order to do so, we need to isolate effects of gravity from other
effects that also may induce changes to the observed state of polarization.
For composite electromagnetic radiation, there are several aspects
related to the emission process that may cause the state of polarization
to change and therefore need to be controlled.

First, an emitter of electromagnetic radiation may in general emit
radiation with a frequency spectrum that varies with time. This
may cause the observed state of polarization to change. Since our
main objective is to study gravity-induced changes to the observed
state of polarization, we will assume that the radiation is emitted
with a constant frequency spectrum. 

The second concern is how to represent the emitter. In the above
sections, we have already chosen to represent a ray of composite
electromagnetic radiation as an ensemble of plane waves. When polarization
wiggling of a single plane wave was studied in Section \ref{XRef-Section-331204723},
we assumed the presence of a single emitter emitting a plane wave.
To generalize this to emission of composite radiation, we will assume
an ensemble of plane wave emitters. It should be stated that this
way of representing the emission process does not pretend to represent
the physical reality of how electromagnetic radiation is emitted,
just the desired result (an ensemble of plane waves).

Finally, to make sure that we study gravity-induced polarization
wiggling in isolation from other effects inducing changes to the
observed state of polarization, we will assume that each plane wave
emitter in the emitter ensemble emits plane waves with constant
polarization. Since the frequency spectrum is constant, this implies
that the radiation is emitted with a constant state of polarization.

The definition of the polarization wiggle rate given in Definition
\ref{XRef-Definition-121073534} is restricted to a single plane
electromagnetic wave. This definition needs to be generalized to
be relevant in the context of composite electromagnetic radiation
with arbitrary state of polarization.

The frequency independent form of the polarization wiggle equation,
eq. (\ref{XRef-Equation-1130185728}), tells us that the polarization
axis of a plane electromagnetic wave wiggles with a rate that is
independent of the photon energy. Intuitively, this indicates that
the polarization axis of a ray of partially coherent radiation,
defined as the major principal axis of the coherency tensor $I_{\mu
\nu }$, wiggles with the same rate as each of its plane wave components.
By Proposition \ref{XRef-Proposition-121095312}, a vector aligned
with the major principal axis of the coherency tensor is transverse.
This indicates that for a ray of composite radiation, a unit vector
along the major principal axis of the coherency tensor $I_{\mu \nu
}$ can be used as the transverse polarization vector $\epsilon ^{\mu
}$ in Definition \ref{XRef-Definition-121073534}. We will now prove
this.
\begin{definition}

For a given ray of composite electromagnetic radiation with a geometric
optics frequency spectrum and arbitrary state of polarization, define
the {\bfseries {\itshape transverse polarization vector}} $\epsilon
^{\mu }$ of the ray as a unit vector along the major principal axis
of the coherency tensor $I_{\mu \nu }$ of the radiation.\label{XRef-Definition-121084329}
\end{definition}
\begin{definition}

For a ray of composite electromagnetic radiation with a geometric
optics frequency spectrum and arbitrary state of polarization and
with transverse polarization vector $\epsilon ^{\mu }$, define the
{\bfseries {\itshape polarization wiggle rate}} as the scalar
\begin{equation}
\psi \equiv \epsilon ^{\gamma *}\epsilon _{\gamma \mu }\mathcal{D}_{u}\epsilon
^{\mu },%
\label{XRef-Equation-12108278}
\end{equation}

and the {\bfseries {\itshape mean helicity phase wiggle rate}} as
the scalar
\begin{equation}
\chi \equiv -i \epsilon ^{\mu *}\mathcal{D}_{u}\epsilon _{\mu }.
\end{equation}
\end{definition}
\begin{remark}

Defined this way, $\epsilon ^{\mu }$ is not uniquely defined, as
$e^{i \varphi }\epsilon ^{\mu }$ will also be a choice for any real
$\varphi $. However, it is evident from eq. (\ref{XRef-Equation-12108278})
that the definition of the polarization wiggle rate is\ \ independent
of the phase $\varphi $ and therefore uniquely defined.
\end{remark}
\begin{remark}

For a plane electromagnetic wave, the electric field vector is aligned
with the major principal axis of the coherency tensor, so the definition
of transverse polarization vector $\epsilon ^{\mu }$ given in Definition
\ref{XRef-Definition-121084329} is consistent with Definition \ref{XRef-Definition-121084119}.
\end{remark}
\begin{remark}

The helicity decomposition of the transverse polarization vector
$\epsilon ^{\mu }$ for a plane electromagnetic wave done in Section
\ref{XRef-Subsection-51152910} also applies to $\epsilon ^{\mu }$
defined according to Definition \ref{XRef-Definition-121084329}.
\label{XRef-Remark-11721320}
\end{remark}
\begin{remark}

In the following, we will compare quantities evaluated at different
time instants along the same timelike geodesic. We will do so informally,
for the sake of notational simplicity, but with the implicit understanding
that such comparisons are done by the use of push-forwards and pullbacks
along the geodesic. 
\end{remark}
\begin{proposition}

Assume the presence of a ray of partially coherent radiation, modeled
as an ensemble of plane waves, each emitted with constant polarization.
When evaluated at two events along the timelike geodesic of the
frame of observation, the transverse polarization vectors of the
components of this ensemble transform uniformly in the frame of
observation between these events by a transformation that can be
represented by a single unitary transormation $U_{\mu \nu }$. For
component $j$, the transverse polarization vectors ${\text{\boldmath
$\epsilon $}}_{(j)}^{\prime }$ and ${\text{\boldmath $\epsilon $}}_{(j)}$\ \ evaluated
at the events are therefore related by this unitary transformation
as\label{XRef-Proposition-121116255}
\[
{\text{\boldmath $\epsilon $}}_{\left( j\right) }^{\prime }=\text{\boldmath
$U$} {\text{\boldmath $\epsilon $}}_{\left( j\right) }.
\]
\end{proposition}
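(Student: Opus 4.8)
The plan is to exhibit the transformation relating the two observation events as a composition of transverse parallel transports and to show that this composition acts by the \emph{same} linear map on every ensemble component while preserving the Hermitian structure of the screen, so that it is unitary.

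First I would work within the spacetime quadrilateral of Section~\ref{XRef-Subsection-66111547}: emission events $\mathfrak{E}_1,\mathfrak{E}_2$ on the emitter worldline $\mathfrak{U}_E$, observation events $\mathfrak{O}_1,\mathfrak{O}_2$ on the observer worldline $\mathfrak{U}_O$, joined by the null segments $\mathfrak{P}_1$ and $\mathfrak{P}_2$. For each component $j$ the transverse polarization vector $\epsilon^\mu_{(j)}$ is a complex vector in the two-dimensional screen, and the map sending $\epsilon^\mu_{(j)}$ at $\mathfrak{O}_1$ to $\epsilon'^\mu_{(j)}$ at $\mathfrak{O}_2$ factors as: transverse transport back along $\mathfrak{P}_1$ to $\mathfrak{E}_1$, the constant-polarization identification of the emitted vector along $\mathfrak{U}_E$, and transverse transport forward along $\mathfrak{P}_2$ to $\mathfrak{O}_2$. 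The informal comparison of vectors at distinct events is made precise by the push-forwards and pullbacks along these geodesics.

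The decisive point is that each of these legs acts identically on all components. Along the null segments Proposition~\ref{XRef-Proposition-112363540} gives $\mathcal{D}_p\epsilon^\mu_{(j)}=0$; since the components share the scale-invariant null vector $n^\mu$ and $p^\mu=p\,n^\mu$, this reduces to $\mathcal{D}_n\epsilon^\mu_{(j)}=0$, a transport law that references only $n^\mu$ and is manifestly independent of the photon energy $p_{(j)}$ and of the polarization state of the component. The emitter identification along $\mathfrak{U}_E$ likewise uses only the common parallel-transported polarization basis. Hence the three legs are linear maps of the screen determined purely by the geometry of the shared geodesics, and their composition is a single linear map $U^\mu{}_\nu$ acting uniformly on every $\epsilon^\mu_{(j)}$; this is the asserted uniform transformation.

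Finally I would establish unitarity, which is the main step to nail down. By Proposition~\ref{XRef-Proposition-11236541} the transverse transport is compatible with the positive-definite screen metric $S_{\mu\nu}$, so along every leg the real inner product of screen vectors is preserved; by Proposition~\ref{XRef-Proposition-112219738} the screen rotator $\epsilon_{\mu\nu}$—the complex structure of the screen—is transverse-covariantly constant, so the transport commutes with it. A real-linear map of the screen that both is an isometry of $S_{\mu\nu}$ and commutes with $\epsilon_{\mu\nu}$ is a $\mathbb{C}$-linear isometry, i.e. a unitary map; hence the composite $U^\mu{}_\nu$ is unitary and preserves the Hermitian pairing $\epsilon^*_{(j)\mu}\epsilon^\mu_{(k)}$ of any two components, in particular the normalization $\epsilon^*_{(j)\mu}\epsilon^\mu_{(j)}=1$. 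The main obstacle I anticipate is the careful bookkeeping of the vertex identifications around the quadrilateral, together with the verification that the cross pairings between \emph{distinct} components (not only the norm of each individual vector) are transported isometrically, since it is unitarity of the common $U^\mu{}_\nu$—rather than mere norm preservation of single vectors—that subsequently allows $U^\mu{}_\nu$ to act consistently on the coherency tensor of the whole ensemble.
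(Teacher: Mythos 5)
Your proof is correct, and it reaches the conclusion by a route that is more explicit than the paper's own argument. The paper's proof is two sentences of a different flavor: it asserts unitarity from the bare fact that $\boldsymbol{\epsilon}'_{(j)}$ and $\boldsymbol{\epsilon}_{(j)}$ are both unit vectors, and it gets uniformity by observing that the evolution equation for $\mathcal{D}_u\epsilon_\mu$, eq.~(\ref{XRef-Equation-22594812}), can be put in a frequency-independent form by factoring out the rest frame momentum, exactly as in Corollary~\ref{XRef-Corollary-12485247}. You instead factor the map through the emitter--observer quadrilateral as a composition of transverse parallel transports governed by $\mathcal{D}_n\epsilon^\mu=0$ (Proposition~\ref{XRef-Proposition-112363540} divided by $p$) together with the constant-polarization identification along the emitter worldline, and you derive unitarity from metric compatibility of the transverse transport (Propositions~\ref{XRef-Proposition-11236541} and~\ref{XRef-Proposition-112219738}). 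What your version buys is a genuine proof of the two points the paper leaves soft: ``both are unit vectors, hence related by a unitary transformation'' only shows that \emph{some} unitary exists for each $j$ separately, whereas you show the \emph{canonical} evolution map is an isometry of the Hermitian screen pairing, and you explicitly note that cross-pairings $\epsilon^{*}_{(j)\mu}\epsilon^{\mu}_{(k)}$ between distinct components are preserved --- which is precisely what is needed for $U$ to act consistently on the coherency tensor in Proposition~\ref{XRef-Proposition-1211165341}. What the paper's version buys is brevity and a direct tie-in to the already-established frequency independence of the wiggle equations; it works with the first-order transport of $\mathcal{D}_u\epsilon^\mu$ along the ray rather than a holonomy-style loop construction. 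One small remark: preservation of $S_{\mu\nu}$ by the (real) transport already implies unitarity of its complexification with respect to the induced Hermitian form, so your additional appeal to commutation with the screen rotator is not strictly needed for unitarity --- it is what guarantees the transport preserves orientation and hence the helicity decomposition, which is a welcome bonus but a separate point.
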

\begin{proof}

Since the radiation is emitted with constant polarization, the only
change to the polarization vector for plane wave component $j$ between
times $\tau $ and $\tau ^{\prime }$ is induced by gravity. Since
${\text{\boldmath $\epsilon $}}_{(j)}^{\prime }$ and ${\text{\boldmath
$\epsilon $}}_{(j)}$ are both unit vectors, they are related by
a unitary transformation.The equation that controls the evolution
of the transverse polarization vector, eq. (\ref{XRef-Equation-22594812}),
can be written in a frequency independent form by factoring out
the rest frame momentum, similar to how the polarization wiggle
equation can be written in the frequency independent form of Corollary
\ref{XRef-Corollary-12485247}. This shows that the evolution of
the transverse polarization vector ${\text{\boldmath $\epsilon $}}_{(j)}$
of each plane wave component is uniform, implying that the unitary
transfmation relating ${\text{\boldmath $\epsilon $}}_{(j)}^{\prime
}$ and ${\text{\boldmath $\epsilon $}}_{(j)}$ must be the same for
all plane wave components.
\end{proof}

This enables us to infer a similar proposition for the coherency
tensor:
\begin{proposition}

Assume the presence of a ray of partially coherent radiation emitted
with constant frequency spectrum and modeled as an ensemble of plane
waves, each emitted with constant polarization. When evaluated at
two events along the timelike geodesic of the frame of observation,
the corresponding values of the coherency tensor are related by
a unitary transformation $U_{\mu \nu }$ as\label{XRef-Proposition-1211165341}
\begin{equation}
I_{\mu \nu }^{\prime }=\frac{I^{\prime }}{I}{\left( \text{\boldmath
$U$} \text{\boldmath $I$} {\text{\boldmath $U$}}^{-1}\right) }_{\mu
\nu },
\end{equation}

where $I^{\prime }$ and $I$ are the intensities evaluated at the
two events, $\text{\boldmath $U$}$ is the unitary transformation
of\ \ Proposition \ref{XRef-Proposition-121116255} that relates
the polarization vectors of the plane wave components, and $\text{\boldmath
$I$}$ is the coherency tensor.
\end{proposition}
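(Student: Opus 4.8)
The plan is to build the unitary transformation for the coherency tensor directly from the componentwise unitary transformation established in Proposition \ref{XRef-Proposition-121116255}. First I would recall the ensemble decomposition of the coherency tensor from eq. (\ref{XRef-Equation-5232107}), namely $I_{\mu\nu}=\sum_j I_{(j)}^{\mathrm{pw}}\,\epsilon^{(j)}_\mu \epsilon^{(j)*}_\nu$, and write the corresponding expression $I'_{\mu\nu}=\sum_j I'^{\mathrm{pw}}_{(j)}\,\epsilon'^{(j)}_\mu \epsilon'^{(j)*}_\nu$ at the second event. The key input is that each component polarization vector transforms by the \emph{same} unitary $U$, so $\epsilon'^{(j)}_\mu = U_{\mu\alpha}\epsilon^{(j)\alpha}$ and $\epsilon'^{(j)*}_\nu = U^*_{\nu\beta}\epsilon^{(j)\beta*}$. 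Substituting these into the sum factors $U$ and $U^*$ out of the summation over $j$, because they are component-independent, leaving $\sum_j I'^{\mathrm{pw}}_{(j)}\,\epsilon^{(j)\alpha}\epsilon^{(j)\beta*}$ sandwiched between $U$ and $U^*$.

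The next step is to reconcile the intensity weights. By Proposition \ref{XRef-Proposition-12763324} the intensity fractions $i_{(j)}=I_{(j)}^{\mathrm{pw}}/I$ are conserved along the null geodesic, so $I'^{\mathrm{pw}}_{(j)}/I' = I^{\mathrm{pw}}_{(j)}/I$, giving $I'^{\mathrm{pw}}_{(j)} = (I'/I)\,I^{\mathrm{pw}}_{(j)}$. Pulling the common scalar factor $I'/I$ out of the sum turns the remaining summation back into the original coherency tensor $\sum_j I^{\mathrm{pw}}_{(j)}\,\epsilon^{(j)}_\alpha \epsilon^{(j)*}_\beta = I_{\alpha\beta}$. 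Putting the pieces together yields
\[
I'_{\mu\nu}=\frac{I'}{I}\,U_{\mu\alpha}\,I_{\alpha\beta}\,U^*_{\nu\beta},
\]
which, once the unitarity $U^* = U^{-1}$ (in the appropriate transverse sense) is invoked to write $U^*_{\nu\beta}=(U^{-1})_{\beta\nu}$, is exactly the claimed matrix relation $I'_{\mu\nu}=(I'/I)\,(U\,I\,U^{-1})_{\mu\nu}$.

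I expect the main obstacle to be bookkeeping around the transverse/complex structure: the transformation $U$ acts on the two-dimensional complex space of transverse vectors, so one must be careful that \emph{unitary} conjugation $U\,I\,U^{-1}$ coincides with the tensorial sandwich $U_{\mu\alpha}I_{\alpha\beta}U^*_{\nu\beta}$, i.e. that complex-conjugate-transpose equals inverse on the screen. Since the coherency tensor is built from outer products $\epsilon_\mu\epsilon^*_\nu$ and is Hermitian and transverse (Proposition \ref{XRef-Proposition-121094841}), and since $U$ maps unit transverse vectors to unit transverse vectors, this identification is exactly the statement that $U$ is unitary with respect to the Hermitian inner product $\epsilon^{*}_\mu \epsilon^\mu$ on the screen. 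The remaining steps—factoring the component-independent $U$ out of the sum, applying conservation of intensity fractions, and reassembling $I_{\alpha\beta}$—are routine linear algebra once this structural point is granted.
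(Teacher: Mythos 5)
Your proposal is correct and follows essentially the same route as the paper's own proof: expand the coherency tensor as the ensemble sum of eq. (\ref{XRef-Equation-5232107}), apply the uniform unitary transformation of Proposition \ref{XRef-Proposition-121116255} to each component's outer product, factor the component-independent $U$ out of the sum, and invoke conservation of the intensity fractions (Proposition \ref{XRef-Proposition-12763324}) to obtain the factor $I^{\prime}/I$. Your explicit remark that unitary conjugation on the screen requires identifying $U^{*}$ with $U^{-1}$ via the Hermitian transverse inner product is a point the paper passes over silently, but it is the same argument.
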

\begin{proof}

Using the expansion of $I_{\mu \nu }$ in terms of component intensities
given by eq. (\ref{XRef-Equation-5232107}), we get
\[
I_{\mu \nu }^{\prime }=\sum \limits_{j}{I_{\left( j\right) }^{\prime
}( \text{\boldmath $U$} {\text{\boldmath $\epsilon $}}^{\left( j\right)
}) }_{\mu }{\left( \text{\boldmath $U$} {\text{\boldmath $\epsilon
$}}^{\left( j\right) }\right) }_{\nu }^{*}=\sum \limits_{j}I_{\left(
j\right) }^{\prime } {\left( \text{\boldmath $U$} \text{\boldmath
$i$} {\text{\boldmath $U$}}^{-1}\right) }_{\mu \nu },
\]

where $\text{\boldmath $i$}$ denotes the relative coherency tensor.
The unitary transformation $\text{\boldmath $U$}$ can be factored
out of the sum on the right-hand side of the equation, yielding
\[
I_{\mu \nu }^{\prime }={\left( \text{\boldmath $U$}( \sum \limits_{j}I_{\left(
j\right) }^{\prime } \text{\boldmath $i$})  {\text{\boldmath $U$}}^{-1}\right)
}_{\mu \nu }.
\]

Since the radiation is emitted with constant frequency spectrum,
the intensity fraction of each component is constant, which implies
that $i_{(j)}^{\prime }=i_{(j)}$. This gives $\sum \limits_{j}I_{(j)}^{\prime
} \text{\boldmath $i$}=\frac{{I}^{\prime }}{I}\text{\boldmath $I$}$.
\end{proof}

The following proposition follows immediately.
\begin{proposition}

Assume the presence of a ray of partially coherent radiation emitted
with constant frequency spectrum. Let ${\text{\boldmath $I$}}^{\prime
}$ and {\bfseries I} denote the coherency tensor evaluated at two
different events along the geodesic of the frame of observation.
Then, if $\text{\boldmath $X$}$ is an eigenvector of $\text{\boldmath
$I$}$, $\text{\boldmath $U$} \text{\boldmath $X$}$ is an eigenvector
of ${\text{\boldmath $I$}}^{\prime }$, where $\text{\boldmath $U$}$
is the unitary transformation of Proposition \ref{XRef-Proposition-1211165341}
relating ${\text{\boldmath $I$}}^{\prime }$ to $\text{\boldmath
$I$}$. 
\end{proposition}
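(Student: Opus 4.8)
The plan is to obtain the result immediately from Proposition \ref{XRef-Proposition-1211165341} using nothing beyond the standard behavior of eigenvectors under a similarity transformation. That proposition, read in operator (mixed-index) form, states that the coherency tensors at the two events are related by $\text{\boldmath $I$}^{\prime} = \frac{I^{\prime}}{I}\,\text{\boldmath $U$}\,\text{\boldmath $I$}\,{\text{\boldmath $U$}}^{-1}$, where $\text{\boldmath $U$}$ is the single unitary map that simultaneously carries every component polarization vector from the first event to the second. Since the overall scalar factor $I^{\prime}/I$ only rescales eigenvalues and leaves eigendirections untouched, the entire content of the claim reduces to the algebra of conjugation, and I expect the proof to be essentially a one-line verification.

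Concretely, I would let $\text{\boldmath $X$}$ be an eigenvector of $\text{\boldmath $I$}$, so that $\text{\boldmath $I$}\,\text{\boldmath $X$} = \lambda\,\text{\boldmath $X$}$ for some eigenvalue $\lambda$, and then apply $\text{\boldmath $I$}^{\prime}$ to the candidate $\text{\boldmath $U$}\,\text{\boldmath $X$}$, substituting the transformation rule of Proposition \ref{XRef-Proposition-1211165341}:
\[
\text{\boldmath $I$}^{\prime}\,(\text{\boldmath $U$}\,\text{\boldmath $X$}) = \frac{I^{\prime}}{I}\,\text{\boldmath $U$}\,\text{\boldmath $I$}\,{\text{\boldmath $U$}}^{-1}\,\text{\boldmath $U$}\,\text{\boldmath $X$} = \frac{I^{\prime}}{I}\,\text{\boldmath $U$}\,\text{\boldmath $I$}\,\text{\boldmath $X$} = \frac{I^{\prime}}{I}\,\lambda\,(\text{\boldmath $U$}\,\text{\boldmath $X$}).
\]
The inverse pair ${\text{\boldmath $U$}}^{-1}\text{\boldmath $U$}$ cancels and the eigenvalue relation for $\text{\boldmath $X$}$ is invoked, so $\text{\boldmath $U$}\,\text{\boldmath $X$}$ is an eigenvector of $\text{\boldmath $I$}^{\prime}$ with eigenvalue $\frac{I^{\prime}}{I}\lambda$. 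That is the whole argument, and it would constitute the complete proof.

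There is essentially no obstacle here; the only points requiring a moment's care are bookkeeping rather than substance. First, one must make sure the abstract matrix products in Proposition \ref{XRef-Proposition-1211165341} are read with the same index placement used in the eigenvector relation $X^{\mu}\propto I^{\mu\nu}X_{\nu}$ of Proposition \ref{XRef-Proposition-121095312}, i.e. with one index of the coherency tensor raised so that it acts as a linear operator on transverse vectors. Second, because $\text{\boldmath $U$}$ is the unitary map on the transverse plane supplied by Proposition \ref{XRef-Proposition-121116255}, it preserves transversality, so $\text{\boldmath $U$}\,\text{\boldmath $X$}$ is again a transverse vector lying in the space on which $\text{\boldmath $I$}^{\prime}$ operates; this is fully consistent with Proposition \ref{XRef-Proposition-121095312}, which already guarantees that the eigenvectors of $\text{\boldmath $I$}^{\prime}$ are transverse. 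No further hypotheses are needed, and the constant-frequency-spectrum assumption enters only indirectly, through Proposition \ref{XRef-Proposition-1211165341}, which it was used to establish.
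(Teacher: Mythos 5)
Your proposal is correct and is essentially identical to the paper's own proof: both apply $\text{\boldmath $I$}^{\prime}$ to $\text{\boldmath $U$}\,\text{\boldmath $X$}$, substitute the conjugation relation of Proposition \ref{XRef-Proposition-1211165341}, cancel ${\text{\boldmath $U$}}^{-1}\text{\boldmath $U$}$, and read off the eigenvalue $\lambda I^{\prime}/I$. The remarks on index placement and transversality are sensible bookkeeping but add nothing beyond what the paper already assumes.
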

\begin{proof}

Since $\text{\boldmath $X$}$ is an eigenvector of $\text{\boldmath
$I$}$, it must satisfy\ \ $\text{\boldmath $I$} \text{\boldmath
$X$}=\lambda  \text{\boldmath $X$}$, where $\lambda $ is the eigenvaue.
Using Proposition \ref{XRef-Proposition-1211165341}, we get ${\text{\boldmath
$I$}}^{\prime }\text{\boldmath $U$} \text{\boldmath $X$}=\frac{I^{\prime
}}{I}\lambda  \text{\boldmath $U$} \text{\boldmath $X$}$, which
proves that $\text{\boldmath $U$} \text{\boldmath $X$}$ is an eigenvector
of ${\text{\boldmath $I$}}^{\prime }$ with eigenvalue $\lambda 
I^{\prime }/I$.
\end{proof}

Thus, as expected, the polarization axis of the ensemble, defined
as the major principal axis of the coherency tensor, transforms
in the same way, and with the same rate, as the polarization axes
of the components of the ensemble. This confirms the assertion made
initially in this section.

We can therefore conclude that the polarization wiggle equations,
eqs. (\ref{XRef-Equation-1130185728}) and (\ref{XRef-Equation-1029174955})
of Corollary \ref{XRef-Corollary-12485247}, and the corresponding
solutions found in Section \ref{XRef-Section-331204723} generalize
to a ray of electromagnetic radiation with arbitrary state of polarization
and frequency spectrum, provided it is emitted with constant frequency
spectrum and constant state of polarization. With the definitions
of the polarization wiggle rates stated in Lemma \ref{XRef-Lemma-1227028},
let us conclude this section by stating the transport equations
that follow from eqs. (\ref{XRef-Equation-1130185728}) and (\ref{XRef-Equation-1029174955}).
\begin{theorem}

Given a ray of electromagnetic radiation that is emitted with arbitrary,
but constant, frequency composition with a geometric optics spectrum
and arbitrary, but constant, state of polarization, the transport
equation for the polarization axis wiggle rate $\omega $ and the
mean helicity wiggle rate $\chi $ are\label{XRef-Theorem-1211224831}
\begin{gather}
\nabla _{n}\omega +\kappa  \omega =\mathcal{Z}%
\label{XRef-Equation-121122483}
\\\nabla _{n}\chi +\kappa  \chi =-\mathcal{V}\mathcal{Z},%
\label{XRef-Equation-117174555}
\end{gather}

while the transport equation for the circular polarization wiggle
rate $\alpha $ is
\begin{equation}
\nabla _{n}\alpha +\kappa  \alpha =0,%
\label{XRef-Equation-117174515}
\end{equation}

where $n^{\mu }\equiv u^{\mu }+{\hat{p}}^{\mu }$ is the scale invariant
null vector, while $\kappa $ is the projection of the expansion
tensor defined by eq. (\ref{XRef-Equation-12272315}).
\end{theorem}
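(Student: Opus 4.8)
The plan is to derive the three transport equations of Theorem~\ref{XRef-Theorem-1211224831} by promoting the plane-wave wiggle equations of Corollary~\ref{XRef-Corollary-12485247} to the composite-radiation setting. First I would note that the entire preceding subsection has already done the conceptual heavy lifting: by Definition~\ref{XRef-Definition-121084329} the transverse polarization vector $\epsilon^{\mu}$ of the ray is a unit vector along the major principal axis of the coherency tensor $I_{\mu\nu}$, and Proposition~\ref{XRef-Proposition-1211165341} together with its eigenvector corollary shows that this axis evolves along the timelike geodesic under the \emph{same} unitary transformation $U_{\mu\nu}$ that governs each individual plane-wave component. Consequently the composite $\epsilon^{\mu}$ satisfies the very same frequency-independent evolution as a plane-wave polarization vector, and the scalars $\psi$, $\chi$ defined in eq.~(\ref{XRef-Equation-12108278}) for the ray obey identical transport equations to their plane-wave counterparts.

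Concretely, I would argue that since $\epsilon^{\mu}$ for the ray transforms uniformly (same $U$) as every component, the derivation of Corollary~\ref{XRef-Corollary-12485247} carries over verbatim: eqs.~(\ref{XRef-Equation-1130185728}) and (\ref{XRef-Equation-1029174955}), namely $\nabla_{n}\psi+\kappa\psi=\mathcal{Z}$ and $\nabla_{n}\chi+\kappa\chi=-\mathcal{V}\mathcal{Z}$, hold for the composite ray. Taking the real part of the $\psi$-equation and invoking Lemma~\ref{XRef-Lemma-1227028}, which identifies $\omega\equiv\operatorname{Re}[\psi]$ and notes that the right-hand side $\mathcal{Z}$ is real, immediately yields eq.~(\ref{XRef-Equation-121122483}), $\nabla_{n}\omega+\kappa\omega=\mathcal{Z}$. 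The $\chi$-equation~(\ref{XRef-Equation-117174555}) is simply eq.~(\ref{XRef-Equation-1029174955}) restated for the ray. For the circular-polarization wiggle rate $\alpha\equiv\operatorname{Im}[\psi]$, taking the imaginary part of the same $\psi$-equation kills the real source term $\mathcal{Z}$, leaving $\nabla_{n}\alpha+\kappa\alpha=0$, which is eq.~(\ref{XRef-Equation-117174515}).

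The main obstacle is not computational but conceptual: one must be sure that the single transverse vector $\epsilon^{\mu}$ attached to the composite ray genuinely satisfies the plane-wave evolution law, rather than merely the individual components doing so. This is exactly what the chain Proposition~\ref{XRef-Proposition-121116255} $\to$ Proposition~\ref{XRef-Proposition-1211165341} $\to$ the eigenvector proposition secures, under the standing hypotheses that the radiation is emitted with constant frequency spectrum and constant state of polarization (so that the only source of change is gravity, and the intensity fractions $i_{(j)}$ are conserved by Proposition~\ref{XRef-Proposition-12763324}). I would therefore open the proof by explicitly citing these propositions to justify that the principal axis of $I_{\mu\nu}$ evolves by the common unitary $U$, and hence that $\psi$ and $\chi$ for the ray satisfy the plane-wave transport equations; the three displayed equations then follow by the real/imaginary-part decomposition together with Lemma~\ref{XRef-Lemma-1227028}. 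A final sanity remark is that $\kappa={\hat{p}}^{\mu}{\hat{p}}^{\nu}\theta_{\mu\nu}$ already encapsulates the $\nabla_{n}p/p$ factor via eq.~(\ref{XRef-Equation-58123722}), so no frequency dependence survives and the equations are manifestly well-defined for a ray with a spread of frequencies.
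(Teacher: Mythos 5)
Your proposal is correct and follows essentially the same route as the paper: the paper states this theorem as the direct conclusion of the chain Proposition~\ref{XRef-Proposition-121116255} $\to$ Proposition~\ref{XRef-Proposition-1211165341} $\to$ the eigenvector proposition, which shows the principal axis of $I_{\mu\nu}$ evolves under the same unitary as each plane-wave component, so that Corollary~\ref{XRef-Corollary-12485247} carries over to the composite ray and the three displayed equations follow by the real/imaginary decomposition of Lemma~\ref{XRef-Lemma-1227028}. Your reconstruction of that argument, including the observation that the real source $\mathcal{Z}$ drops out of the $\alpha$ equation, matches the paper's reasoning.
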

\begin{remark}

When the radiation is emitted with constant polarization ($\alpha
( \lambda _{*}) =0$), the solution to eq. (\ref{XRef-Equation-117174515})
is $\alpha ( \lambda ) =0$. See Corollary \ref{XRef-Corollary-11721050}.
\end{remark}
\begin{remark}

We may now generalize Lemma \ref{XRef-Lemma-1227028} to a ray of
electromagnetic radiation that is emitted with arbitrary, but constant,
frequency composition with a geometric optics spectrum and arbitrary,
but constant, state of polarization. The reason is that all results
that this lemma were derived from have been generalized to composite
radiation with a geometric optics spectrum, see Remark \ref{XRef-Remark-11721320}.
Since we assume the radiation is emitted with constant polarization,
Corollary \ref{XRef-Corollary-11721050} applies as well.
\end{remark}

\subsection{Transport of the Circular Polarization Degree}

From Definition \ref{XRef-Definition-11674543}, the circular polarization
degree $\mathcal{V}$ can be extracted from the polarization tensor
by contraction with the screen rotator:
\begin{equation}
\mathcal{V}=i \epsilon ^{\mu \nu }P_{\mu \nu }.%
\label{XRef-Equation-51411042}
\end{equation}

The circular polarization degree is an observable scalar, and its
transport equation is given by the following theorem. 
\begin{theorem}

Given a ray of composite electromagnetic radiation with a geometric
optics frequency spectrum and arbitrary state of polarization, its
circular polarization degree $\mathcal{V}$ is constant along the
null geodesic:\label{XRef-Theorem-1211224734}
\begin{equation}
\nabla _{n}\mathcal{V}=0.%
\label{XRef-Equation-12865236}
\end{equation}
\end{theorem}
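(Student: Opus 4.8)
The plan is to mirror the proof of Proposition~\ref{XRef-Proposition-112225615}, which established the analogous conservation law for a single plane wave, and to upgrade it to the composite case by invoking the transport equation for the full polarization tensor instead of the plane-wave result. Since $\mathcal{V}=i\,\epsilon^{\mu\nu}P_{\mu\nu}$ is a scalar (eq.~(\ref{XRef-Equation-51411042})), the transverse covariant directional derivative coincides with the ordinary one, so $\nabla_{n}\mathcal{V}=\mathcal{D}_{n}\mathcal{V}$. First I would therefore differentiate the definition of $\mathcal{V}$ with $\mathcal{D}_{n}$ rather than $\nabla_{n}$, which is the step that makes the surrounding machinery applicable.

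The next step is to apply the Leibniz rule to the contraction, writing
\begin{equation}
\mathcal{D}_{n}\mathcal{V}=i\,(\mathcal{D}_{n}\epsilon^{\mu\nu})\,P_{\mu\nu}+i\,\epsilon^{\mu\nu}\,(\mathcal{D}_{n}P_{\mu\nu}).
\end{equation}
The first term vanishes by Proposition~\ref{XRef-Proposition-112219738}, which gives $\mathcal{D}_{\alpha}\epsilon_{\mu\nu}=0$ and hence the same for the raised-index screen rotator, the metric being covariantly constant. The second term vanishes by Lemma~\ref{XRef-Lemma-1271865}, $\mathcal{D}_{n}P_{\mu\nu}=0$. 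Both factors are transverse---$\epsilon^{\mu\nu}$ by construction and $P_{\mu\nu}$ by Proposition~\ref{XRef-Proposition-121094841}---so the conclusion $\nabla_{n}\mathcal{V}=0$ would follow immediately once the split is justified.

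The only point that requires care, and the one I expect to be the main obstacle, is justifying that the transverse covariant derivative of the fully contracted scalar splits as displayed above. This is not automatic from the scalar identity $\mathcal{D}_{n}S=\nabla_{n}S$; it relies on the transversality of \emph{both} $\epsilon^{\mu\nu}$ and $P_{\mu\nu}$, which allows the screen projectors to be inserted on each factor so that $\epsilon^{\mu\nu}\nabla_{n}P_{\mu\nu}=\epsilon^{\alpha\beta}S^{\mu}_{\alpha}S^{\nu}_{\beta}\nabla_{n}P_{\mu\nu}=\epsilon^{\alpha\beta}\mathcal{D}_{n}P_{\alpha\beta}$, and symmetrically for the other term. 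Once this identification is made explicit, the result is a direct consequence of the two vanishing-derivative statements already proved, and, as in Lemma~\ref{XRef-Lemma-1271865}, no assumption on the spacetime geometry is needed.
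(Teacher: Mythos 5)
Your proposal is correct and follows essentially the same route as the paper's own proof: differentiate $\mathcal{V}=i\,\epsilon^{\mu\nu}P_{\mu\nu}$ with $\mathcal{D}_{n}$ and invoke Proposition~\ref{XRef-Proposition-112219738} together with Lemma~\ref{XRef-Lemma-1271865}. Your extra remark justifying the Leibniz split via the transversality of both factors is a sound elaboration of a step the paper leaves implicit, not a deviation from its argument.
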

\begin{proof}

By differentiating eq. (\ref{XRef-Equation-51411042}) with $\mathcal{D}_{n}$,
eq. (\ref{XRef-Equation-12865236}) is obtained by the use of Proposition
\ref{XRef-Proposition-112219738} and Lemma \ref{XRef-Lemma-1271865}.
\end{proof}
\begin{remark}

It should be mentioned here, as a check on consistency of our approach,
that the circular polarization wiggle equation for a single plane
wave, eq. (\ref{XRef-Equation-124101546}), follows by differentiating
eq. (\ref{XRef-Equation-12865236}) with $\mathcal{D}_{u}$ and using
the commutation relation of eq. (\ref{XRef-Equation-22593549}).
\end{remark}

\section{Curvature Twist}\label{XRef-Section-112692639}

Polarization wiggling is sourced by curvature twist. Let us take
a closer look at this scalar and try to relate it to other known
quantities.

The definition of the curvature twist scalar $\mathcal{Z}$ is given
in Definition \ref{XRef-Definition-121263034} in terms of the Riemann
tensor. The curvature twist scalar contracts the Riemann tensor
with the four basis vectors of a tetrad adapted to a given null
geodesic. This tetrad consists of the two unit vectors $u^{\mu }$
and ${\hat{p}}^{\mu }$ as well as an arbitrary choice of polarization
basis, represented by two transverse unit vectors $e_{A}^{\alpha
}, A=1,2$. The two transverse directions are implicitly present
in the screen rotator $\epsilon ^{\rho \sigma }$, because it can
be expressed in terms of the two transverse basis vectors as $\epsilon
^{\alpha \beta }=e_{A}^{\alpha }e_{B}^{\beta }\varepsilon ^{\mathrm{AB}}$.
\begin{remark}

The screen rotator $\epsilon ^{\alpha \beta }=e_{A}^{\alpha }e_{B}^{\beta
}\varepsilon ^{\mathrm{AB}}$ is invariant with respect to change
of polarization basis $e_{A}^{\alpha }$. This follows from its definition
in Definition \ref{XRef-Definition-112264556}, which implies that
the curvature twist scalar $\mathcal{Z}$ is invariant with respect
to rotation of the polarization basis. 
\end{remark}

This section is devoted to a closer examination of the curvature
twist scalar $\mathcal{Z}$. We will show that it can be related
to other known quantities; the rotation of the reference geodesic
congruence $u^{\mu }$, the second Weyl scalar and the magnetic part
of the Weyl tensor.\ \ 

Let us start by expressing it in terms of the Weyl tensor.
\begin{lemma}

The curvature twist about a spatial unit vector ${\hat{p}}^{\mu
}$ can be expressed in terms of the Weyl tensor $C_{\gamma \rho
\beta \sigma }$ as\label{XRef-Lemma-121265327}
\begin{equation}
\mathcal{Z}\equiv u^{\gamma }{\hat{p}}^{\beta }\epsilon ^{\rho \sigma
}C_{\gamma \rho \beta \sigma }=\frac{1}{2}u^{\gamma }{\hat{p}}^{\beta
}\epsilon ^{\rho \sigma }C_{\gamma \beta \rho \sigma }.%
\label{XRef-Equation-1026182823}
\end{equation}
\end{lemma}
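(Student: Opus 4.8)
The plan is to establish the two equalities of eq. (\ref{XRef-Equation-1026182823}) separately. The first equality asserts that the Riemann tensor in Definition \ref{XRef-Definition-121263034} may be replaced by the Weyl tensor under the contraction with $u^{\gamma}\hat{p}^{\beta}\epsilon^{\rho\sigma}$; the second is a purely algebraic rearrangement of the index ordering of the Weyl tensor under the same contraction.

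For the first equality, I would invoke the standard four-dimensional decomposition of the Riemann tensor into its Weyl part plus pieces built from the metric paired with the Ricci tensor $R_{\mu\nu}$ and the scalar curvature $R$, schematically $R_{\gamma\rho\beta\sigma}=C_{\gamma\rho\beta\sigma}+(\text{metric--Ricci terms})+(\text{metric--scalar term})$. The key observation is that every correction term is annihilated by the contraction with $u^{\gamma}\hat{p}^{\beta}\epsilon^{\rho\sigma}$. Each correction carries at least one explicit metric factor, and contracting that factor triggers exactly one of three vanishing mechanisms: the orthogonality $u_{\mu}\hat{p}^{\mu}=0$ from Section \ref{XRef-Subsection-2269858}; a contraction of the transverse screen rotator against $u$ or $\hat{p}$, which vanishes because $\epsilon_{\mu\nu}u^{\nu}=\epsilon_{\mu\nu}\hat{p}^{\nu}=0$ (Definition \ref{XRef-Definition-112264556}); or a contraction of the symmetric metric, or symmetric Ricci tensor, against the antisymmetric $\epsilon^{\rho\sigma}$. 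I would expand the antisymmetrizations in the decomposition term by term and verify that each contribution is killed by one of these three mechanisms, leaving $u^{\gamma}\hat{p}^{\beta}\epsilon^{\rho\sigma}R_{\gamma\rho\beta\sigma}=u^{\gamma}\hat{p}^{\beta}\epsilon^{\rho\sigma}C_{\gamma\rho\beta\sigma}$.

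For the second equality I would exploit that the Weyl tensor shares all algebraic symmetries of the Riemann tensor, in particular the first (cyclic) Bianchi identity $C_{\gamma\rho\beta\sigma}+C_{\gamma\beta\sigma\rho}+C_{\gamma\sigma\rho\beta}=0$ together with antisymmetry in the second index pair. Writing $A\equiv u^{\gamma}\hat{p}^{\beta}\epsilon^{\rho\sigma}C_{\gamma\rho\beta\sigma}$ and $B\equiv u^{\gamma}\hat{p}^{\beta}\epsilon^{\rho\sigma}C_{\gamma\beta\rho\sigma}$, I would contract the cyclic identity with $u^{\gamma}\hat{p}^{\beta}\epsilon^{\rho\sigma}$. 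The first term reproduces $A$; the second term equals $-B$ after using antisymmetry in the last index pair; and the third term equals $+A$ after relabelling the dummy indices $\rho\leftrightarrow\sigma$, which flips the sign of $\epsilon^{\rho\sigma}$, and again using antisymmetry in the last pair. The contracted identity therefore reads $2A-B=0$, i.e. $A=\tfrac{1}{2}B$, which is precisely the claimed relation between the two orderings.

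The first step is conceptually routine but demands disciplined index bookkeeping. The genuinely substantive step is the second: the main obstacle is recognizing that it is the cyclic Bianchi identity, rather than a brute-force symmetry count, that collapses the distinct index orderings into the single self-referential relation $2A-B=0$, and then tracking the sign flips induced by the antisymmetry of $\epsilon^{\rho\sigma}$ carefully enough that the relation closes with the factor $\tfrac{1}{2}$.
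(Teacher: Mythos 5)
Your proposal is correct and follows essentially the same route as the paper: the paper likewise substitutes the Weyl--Ricci--scalar decomposition of the Riemann tensor and kills every correction term via the orthogonality of $u$ and $\hat{p}$ and the antisymmetry and transversality of $\epsilon^{\rho\sigma}$, and it obtains the factor $\tfrac{1}{2}$ from the same cyclic (first Bianchi) identity combined with antisymmetry in the last index pair (stated for the Riemann tensor in its Appendix and carried over to the Weyl tensor, which shares those symmetries). Your sign bookkeeping in the contracted cyclic identity, yielding $2A-B=0$, checks out.
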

\begin{proof}

The proof of this is given in Appendix \ref{XRef-Subsection-11765546}.
\end{proof}

A conformally flat spacetime has vanishing Weyl tensor. The following
corollary then follows directly from Lemma \ref{XRef-Lemma-121265327}.
\begin{corollary}

Conformally flat spacetimes have vanishing curvature twist and do
not induce polarization wiggling.
\end{corollary}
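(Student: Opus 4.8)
The plan is to treat the two assertions of the corollary separately, both as immediate consequences of results already in hand. For the first assertion --- vanishing curvature twist --- I would simply invoke Lemma \ref{XRef-Lemma-121265327}, which expresses the curvature twist scalar purely in terms of the Weyl tensor as $\mathcal{Z} = u^{\gamma}\hat{p}^{\beta}\epsilon^{\rho\sigma}C_{\gamma\rho\beta\sigma}$. By definition, a conformally flat spacetime has identically vanishing Weyl tensor, $C_{\gamma\rho\beta\sigma} = 0$. Since $\mathcal{Z}$ is a single linear contraction of $C$ against the tetrad vectors $u^{\gamma}$, $\hat{p}^{\beta}$ and the screen rotator $\epsilon^{\rho\sigma}$, it follows at once that $\mathcal{Z} = 0$ at every point of such a spacetime, independently of the choice of null geodesic or observer.

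For the second assertion --- absence of induced polarization wiggling --- I would appeal to the line-of-sight solutions of the polarization wiggle equations. The source term in each transport equation of Lemma \ref{XRef-Lemma-122013152} and Corollary \ref{XRef-Corollary-12485247} is proportional to $\mathcal{Z}$. Substituting $\mathcal{Z}\equiv 0$ into the general solution of Theorem \ref{XRef-Theorem-124922} collapses the integral term, leaving $\omega(\lambda) = [p(\lambda)/p(\lambda_*)]\,\omega(\lambda_*)$; the analogous solutions for the mean helicity phase wiggle rate $\chi$ and, via Theorem \ref{XRef-Theorem-12481042}, for the circular polarization wiggle rate $\alpha$ reduce similarly to pure frequency rescalings of their emission values. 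Hence no new contribution to any wiggle rate is generated along the geodesic.

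To make the phrase \emph{do not induce polarization wiggling} precise, I would restrict, as is done elsewhere in the paper, to radiation emitted with constant polarization, so that the emission values satisfy $\omega(\lambda_*) = \chi(\lambda_*) = \alpha(\lambda_*) = 0$. The surviving homogeneous terms then vanish identically, yielding $\omega(\lambda) = \chi(\lambda) = \alpha(\lambda) = 0$ everywhere along the ray. By Lemma \ref{XRef-Lemma-1227028}, or equivalently Corollary \ref{XRef-Corollary-11721050}, all of the polarization wiggling observables $\dot{\Omega}$, $\dot{\phi}_{\pm}$ and $\dot{\mathcal{V}}$ then vanish, so the observed state of polarization is constant in the frame of observation.

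I expect no genuine mathematical obstacle here; the entire content is carried by Lemma \ref{XRef-Lemma-121265327}, and the corollary is labelled as following directly from it. The only point worth flagging is conceptual rather than computational: the residual factor $p(\lambda)/p(\lambda_*)$ appearing in the line-of-sight solutions is a redshift rescaling of whatever was already present at emission, not a gravitationally induced effect. The cleanest way to state the \emph{no wiggling} conclusion is therefore to isolate the gravitational contribution by assuming emission with constant polarization, after which the vanishing of $\mathcal{Z}$ immediately forces every wiggle rate to zero.
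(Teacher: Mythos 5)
Your proposal is correct and follows the same route as the paper, which derives the corollary directly from Lemma \ref{XRef-Lemma-121265327} together with the vanishing of the Weyl tensor in a conformally flat spacetime. Your additional elaboration of the second clause---substituting $\mathcal{Z}=0$ into the line-of-sight solutions and restricting to emission with constant polarization to make ``do not induce wiggling'' precise---is a sound and slightly more explicit version of what the paper leaves implicit.
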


Using the complex null tetrad, the curvature twist scalar can be
expressed in terms of the Weyl scalar $\Psi _{2}$:
\begin{corollary}

Given the complex null tetrad of Section \ref{XRef-Subsection-2269858},
the curvature twist scalar is related to the Weyl scalar $\Psi _{2}$
as follows:\label{XRef-Corollary-102874931}
\[
\mathcal{Z}=2 \operatorname{Im}[ \Psi _{2}] .
\]
\end{corollary}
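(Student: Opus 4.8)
The plan is to reduce the fully contracted curvature-twist scalar to a single Weyl-tensor contraction over the null tetrad, and then to use the first Bianchi identity together with the reality of the Weyl tensor to expose the combination $\Psi_2-\overline{\Psi_2}$.

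First I would start from the second form in Lemma \ref{XRef-Lemma-121265327}, $\mathcal{Z}=\tfrac{1}{2}u^{\gamma}{\hat{p}}^{\beta}\epsilon^{\rho\sigma}C_{\gamma\beta\rho\sigma}$, which is convenient because the Weyl tensor is antisymmetric in each of the pairs $(\gamma\beta)$ and $(\rho\sigma)$. Substituting $u^{\mu}=\tfrac{1}{\sqrt{2}}(k^{\mu}+l^{\mu})$, ${\hat{p}}^{\mu}=\tfrac{1}{\sqrt{2}}(k^{\mu}-l^{\mu})$ and the null-tetrad form $\epsilon^{\rho\sigma}=i(m_{+}^{\rho}m_{-}^{\sigma}-m_{-}^{\rho}m_{+}^{\sigma})$, the symmetric pieces $k^{\gamma}k^{\beta}$, $l^{\gamma}l^{\beta}$ and their analogues in the second pair drop out against the antisymmetry of $C$. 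A short computation gives $u^{\gamma}{\hat{p}}^{\beta}C_{\gamma\beta\rho\sigma}=-k^{\gamma}l^{\beta}C_{\gamma\beta\rho\sigma}$ and $\epsilon^{\rho\sigma}C_{\gamma\beta\rho\sigma}=2i\,m_{+}^{\rho}m_{-}^{\sigma}C_{\gamma\beta\rho\sigma}$, collapsing the scalar to the single contraction
\[
\mathcal{Z}=-i\,C_{\gamma\beta\rho\sigma}\,k^{\gamma}l^{\beta}m_{+}^{\rho}m_{-}^{\sigma}.
\]

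Next I would invoke the first Bianchi identity $C_{\gamma[\beta\rho\sigma]}=0$ contracted against $k^{\gamma}l^{\beta}m_{+}^{\rho}m_{-}^{\sigma}$, which in an obvious contraction shorthand $C(\cdot,\cdot,\cdot,\cdot)$ reads $C(k,l,m_{+},m_{-})+C(k,m_{+},m_{-},l)+C(k,m_{-},l,m_{+})=0$. The middle term is, by definition, the second Weyl scalar $\Psi_{2}=C_{\gamma\beta\rho\sigma}k^{\gamma}m_{+}^{\beta}m_{-}^{\rho}l^{\sigma}$. For the last term I would use that the Weyl tensor is real and that $\overline{m_{+}^{\mu}}=m_{-}^{\mu}$, so that $C(k,m_{-},m_{+},l)=\overline{\Psi_{2}}$; a single antisymmetry swap in the final index pair then gives $C(k,m_{-},l,m_{+})=-\overline{\Psi_{2}}$. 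Hence $C(k,l,m_{+},m_{-})=\overline{\Psi_{2}}-\Psi_{2}=-2i\,\operatorname{Im}[\Psi_{2}]$, and substituting into the displayed expression yields $\mathcal{Z}=2\,\operatorname{Im}[\Psi_{2}]$.

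The main obstacle is purely the sign and index-ordering bookkeeping: the antisymmetrizations that eliminate the $kk$/$ll$ and $m_{+}m_{+}$/$m_{-}m_{-}$ terms, the Bianchi rearrangement, and the identification of the residual contraction with $\overline{\Psi_{2}}$ each carry signs that must be tracked consistently. In particular, the final overall sign is fixed only once the sign conventions for the Riemann/Weyl tensor and for the Newman--Penrose scalar $\Psi_{2}$ of Ref.~\cite{Stephani-2003} are adopted; with those conventions the combination $\Psi_{2}-\overline{\Psi_{2}}=2i\,\operatorname{Im}[\Psi_{2}]$ produces the stated result. The vanishing of $\mathcal{Z}$ whenever $\Psi_{2}$ is real, as for static, spherically symmetric fields, provides a useful consistency check.
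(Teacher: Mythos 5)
Your route is the same as the paper's: you expand the Weyl-tensor form of $\mathcal{Z}$ from Lemma \ref{XRef-Lemma-121265327} in the complex null tetrad to obtain $\mathcal{Z}=-i\,C_{\gamma\beta\rho\sigma}k^{\gamma}l^{\beta}m_{+}^{\rho}m_{-}^{\sigma}$ --- which is exactly the intermediate expression appearing in the paper's proof --- and then close the argument with the first Bianchi (permutation) identity and the definition of $\Psi_{2}$. You have essentially written out in full the steps that the paper compresses into a single sentence; the elimination of the $kk$, $ll$, $m_{+}m_{+}$, $m_{-}m_{-}$ pieces and the Bianchi rearrangement are all handled correctly.

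The one place where the bookkeeping does not close as written is the identification of $\Psi_{2}$. The paper defines $\Psi_{2}=k^{\rho}m_{-}^{\nu}m_{+}^{\sigma}l^{\mu}C_{\rho\nu\sigma\mu}$, i.e.\ $C(k,m_{-},m_{+},l)$ in your contraction shorthand, whereas you take $\Psi_{2}=C(k,m_{+},m_{-},l)$; since the Weyl tensor is real and $\overline{m_{+}^{\mu}}=m_{-}^{\mu}$, these two quantities are complex conjugates of one another, so the two conventions differ precisely by the sign of $\operatorname{Im}[\Psi_{2}]$. Carrying \emph{your} definition consistently through \emph{your} algebra gives $C(k,l,m_{+},m_{-})=\overline{\Psi_{2}}-\Psi_{2}=-2i\operatorname{Im}[\Psi_{2}]$ and hence $\mathcal{Z}=(-i)(-2i)\operatorname{Im}[\Psi_{2}]=-2\operatorname{Im}[\Psi_{2}]$, not $+2\operatorname{Im}[\Psi_{2}]$: the final substitution as written contains a compensating sign slip rather than a derivation of the plus sign. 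With the paper's ordering the middle Bianchi term becomes $C(k,m_{+},m_{-},l)=\overline{\Psi_{2}}$ and the third becomes $C(k,m_{-},l,m_{+})=-\Psi_{2}$, so that $C(k,l,m_{+},m_{-})=\Psi_{2}-\overline{\Psi_{2}}=2i\operatorname{Im}[\Psi_{2}]$ and $\mathcal{Z}=2\operatorname{Im}[\Psi_{2}]$ follows cleanly. You do flag the convention dependence at the end, which is the right instinct, but to make the proof self-contained you should adopt the paper's index ordering for $\Psi_{2}$ (or explicitly track the conjugation it induces) rather than leaving the final sign to an appeal to conventions.
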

\begin{proof}

Expanding eq. (\ref{XRef-Equation-1026182823}) in terms of the complex
null tetrad of Section \ref{XRef-Subsection-2269858} yields
\[
\mathcal{Z}=-i k^{\rho }l^{\mu }m_{+}^{\sigma }m_{-}^{\nu }C_{\rho
\mu \sigma \nu }.
\]

The corollary follows from using the permutation relation of the
Weyl tensor and that the Weyl scalar $\Psi _{2}=k^{\rho }m_{-}^{\nu
}m_{+}^{\sigma }l^{\mu }C_{\rho \nu \sigma \mu }$.
\end{proof}

\subsection{Curvature Twist and the Magnetic Part of the Weyl Tensor}

With the choice of the timelike reference geodesic congruence $u^{\mu
}$ (see Section \ref{XRef-Subsection-2269858}) follows a decomposition
of the Weyl tensor into electric and magnetic parts \cite{Stephani-2003,Hervik:2012jn}.
We will now show that the curvature twist $\mathcal{Z}$ is related
to the magnetic part of the Weyl tensor. The notation of Stephani
\cite{Stephani-2003} is used. 

The {\itshape right-dual Weyl tensor} is defined as
\[
C_{\mu \nu \gamma \delta }^{\sim }\equiv \frac{1}{2}{\epsilon }_{\gamma
\delta \alpha \beta }{C_{\mu \nu }}^{\alpha \beta }.
\]

The {\itshape gravitoelectric tensor} is the electric part of the
Weyl tensor and defined as
\[
E_{\nu \beta }\equiv C_{\mu \nu \alpha \beta }u^{\mu }{u}^{\alpha
},
\]

while the {\itshape gravitomagnetic tensor} is the magnetic part
of the Weyl tensor and defined in terms of the right-dual Weyl tensor
as
\begin{equation}
H_{\nu \beta }\equiv C_{\mu \nu \alpha \beta }^{\sim }u^{\mu }{u}^{\alpha
}.%
\label{XRef-Equation-1114222754}
\end{equation}
\begin{theorem}

For a given spacelike unit vector ${\hat{p}}^{\mu }$, the curvature
twist about ${\hat{p}}^{\mu }$ equals the projection of the gravitomagnetic
tensor along ${\hat{p}}^{\mu }$:\label{XRef-Theorem-121265825}
\begin{equation}
\mathcal{Z}=H_{\nu \beta }{\hat{p}}^{\nu }{\hat{p}}^{\beta }.%
\label{XRef-Equation-12126553}
\end{equation}
\end{theorem}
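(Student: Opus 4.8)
The plan is to reduce the right-hand side of eq.~(\ref{XRef-Equation-12126553}) to the very contraction of the Weyl tensor with $u^{\mu}$, $\hat{p}^{\mu}$ and the screen rotator that appears in the Weyl form of $\mathcal{Z}$, and then read off the equality. First I would expand the gravitomagnetic tensor of eq.~(\ref{XRef-Equation-1114222754}) using the right-dual,
\[
H_{\nu\beta} = C^{\sim}_{\mu\nu\alpha\beta}u^{\mu}u^{\alpha} = \tfrac{1}{2}\,\epsilon_{\alpha\beta\rho\sigma}\,C_{\mu\nu}{}^{\rho\sigma}\,u^{\mu}u^{\alpha},
\]
being careful that the right-dual acts on the \emph{last} index pair $\alpha\beta$, as its definition dictates. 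Contracting twice with $\hat{p}$ gives
\[
H_{\nu\beta}\hat{p}^{\nu}\hat{p}^{\beta} = \tfrac{1}{2}\,\bigl(u^{\alpha}\hat{p}^{\beta}\epsilon_{\alpha\beta\rho\sigma}\bigr)\,u^{\mu}\hat{p}^{\nu}\,C_{\mu\nu}{}^{\rho\sigma}.
\]

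The key step is to recognize the parenthesized factor. By Definition~\ref{XRef-Definition-112264556} the screen rotator is precisely $\epsilon_{\rho\sigma} = u^{\alpha}\hat{p}^{\beta}\epsilon_{\alpha\beta\rho\sigma}$, and since $u$ and $\hat{p}$ already occupy the first two slots of the Levi-Civita tensor, the bracket collapses to $\epsilon_{\rho\sigma}$ with no permutation of indices required. This leaves
\[
H_{\nu\beta}\hat{p}^{\nu}\hat{p}^{\beta} = \tfrac{1}{2}\,u^{\mu}\hat{p}^{\nu}\,\epsilon_{\rho\sigma}\,C_{\mu\nu}{}^{\rho\sigma} = \tfrac{1}{2}\,u^{\mu}\hat{p}^{\nu}\,\epsilon^{\rho\sigma}\,C_{\mu\nu\rho\sigma},
\]
where in the last equality I raise $\rho\sigma$ on the screen rotator and lower them on the Weyl tensor; since the same contracted pair is raised on one factor and lowered on the other, the signature-dependent signs cancel and no stray factor survives.

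Finally I would invoke Lemma~\ref{XRef-Lemma-121265327}, which already supplies $\mathcal{Z} = \tfrac{1}{2}\,u^{\gamma}\hat{p}^{\beta}\epsilon^{\rho\sigma}C_{\gamma\beta\rho\sigma}$. Relabeling $\gamma\to\mu$, $\beta\to\nu$ makes this identical to the expression just derived for $H_{\nu\beta}\hat{p}^{\nu}\hat{p}^{\beta}$, which establishes eq.~(\ref{XRef-Equation-12126553}). I expect the only genuine obstacle to be careful index bookkeeping: correctly identifying that the right-dual in $C^{\sim}_{\mu\nu\alpha\beta}$ acts on $\alpha\beta$, matching the contraction $u^{\alpha}\hat{p}^{\beta}\epsilon_{\alpha\beta\rho\sigma}$ slot-for-slot against the screen-rotator definition, and verifying that raising and lowering the contracted pair $\rho\sigma$ leaves no uncancelled sign under the $(-+++)$ signature. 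These are routine, but must be executed in the right order to avoid a spurious sign.
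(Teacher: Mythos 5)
Your proposal is correct and follows essentially the same route as the paper: both rest on Lemma~\ref{XRef-Lemma-121265327} together with the definitions of the right-dual Weyl tensor, the gravitomagnetic tensor, and the screen rotator, the only difference being that you expand $H_{\nu\beta}\hat{p}^{\nu}\hat{p}^{\beta}$ down to the Weyl form of $\mathcal{Z}$ while the paper rewrites $\mathcal{Z}$ upward into the right-dual form. Your index bookkeeping (the dual acting on the last pair, the bracket collapsing to $\epsilon_{\rho\sigma}$, and the harmless raising/lowering of the contracted pair) checks out against the paper's conventions.
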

\begin{proof}

From Lemma \ref{XRef-Lemma-121265327} follows that the curvature
twist can be related to the right-dual Weyl tensor as follows:
\begin{equation}
\mathcal{Z}=u^{\mu }{u}^{\alpha }C_{\mu \nu \alpha \beta }^{\sim
}{\hat{p}}^{\nu }{\hat{p}}^{\beta }.%
\label{XRef-Equation-12126566}
\end{equation}

Eq. (\ref{XRef-Equation-12126553}) then follows directly from eq.
(\ref{XRef-Equation-12126566}) and the definition of the gravitomagnetic
tensor in eq. (\ref{XRef-Equation-1114222754}).
\end{proof}

Thus, the curvature twist depends exclusively on the gravitomagnetic
tensor.
\begin{corollary}

Spacetime regions with purely electric Weyl tensors $(H_{\nu \beta
}=0)$ do not induce polarization wiggling of passing electromagnetic
radiation.
\end{corollary}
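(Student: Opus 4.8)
The plan is to reduce the statement to the vanishing of the curvature twist scalar $\mathcal{Z}$, which has already been identified as the sole source term in the polarization wiggle equations. First I would invoke Theorem \ref{XRef-Theorem-121265825}, which expresses the curvature twist as the projection of the gravitomagnetic tensor along the spatial direction of propagation, $\mathcal{Z} = H_{\nu\beta}\hat{p}^\nu\hat{p}^\beta$. By hypothesis the Weyl tensor is purely electric throughout the region, meaning its magnetic part vanishes identically, $H_{\nu\beta} = 0$. It then follows immediately that $\mathcal{Z} = 0$ everywhere along any null geodesic traversing the region, independently of the direction of propagation $\hat{p}^\mu$.

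Next I would feed $\mathcal{Z} = 0$ into the polarization wiggle equations. By Lemma \ref{XRef-Lemma-12147341} the polarization axis wiggle rate obeys $\nabla_p(\omega/p) = \mathcal{Z}$, so $\mathcal{Z} = 0$ forces $\omega/p$ to be constant along the geodesic; equivalently, the line-of-sight solution of Theorem \ref{XRef-Theorem-124922} collapses to $\omega(\lambda) = \frac{p(\lambda)}{p(\lambda_*)}\omega(\lambda_*)$, with the integral term dropping out. The same substitution in eq. (\ref{XRef-Equation-1029173641}) gives $\nabla_p(\chi/p) = -\mathcal{V}\mathcal{Z} = 0$, so $\chi/p$ is likewise conserved. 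Hence no new wiggling is generated along the trajectory: any observed $\omega$ or $\chi$ is merely the emitted value rescaled by the frequency-shift factor, and vanishes identically for radiation emitted with constant polarization.

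Finally I would note that the circular polarization wiggle rate needs no separate curvature argument, since the conserved combination $\alpha/p$ (Theorem \ref{XRef-Theorem-12481042}) does not couple to $\mathcal{Z}$ at all and vanishes whenever $\mathcal{V}$ is constant at emission. Gathering these facts, a purely electric Weyl tensor produces $\mathcal{Z} = 0$, which switches off every source term in the wiggle equations, so no polarization wiggling is induced. I do not anticipate a genuine obstacle: the entire content of the corollary is carried by Theorem \ref{XRef-Theorem-121265825}, and the only care required is to phrase the conclusion for all of $\omega$, $\chi$ and $\alpha$ rather than for $\omega$ alone, so that the absence of \emph{all} polarization wiggling is established rather than just the absence of axis rotation.
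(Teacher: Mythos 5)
Your proposal is correct and follows the paper's own route: the paper's proof is simply that the claim "follows directly from Theorem \ref{XRef-Theorem-121265825} and the definition of purely electric Weyl tensors," i.e. $H_{\nu\beta}=0$ forces $\mathcal{Z}=H_{\nu\beta}\hat{p}^{\nu}\hat{p}^{\beta}=0$. Your additional step of tracing $\mathcal{Z}=0$ through the transport equations for $\omega$, $\chi$ and $\alpha$ is a harmless (and arguably clarifying) elaboration of what the paper leaves implicit.
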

\begin{proof}

This follows directly from Theorem \ref{XRef-Theorem-121265825}
and the definition of purely electric Weyl tensors.
\end{proof}

As we saw in Section \ref{XRef-Subsection-66111547}, previous results
have shown that the Gravitational Faraday Effect is present in spacetimes
with a non-zero gravitomagnetic field. Theorem \ref{XRef-Theorem-121265825}
generalizes this effect to any spacetime with a non-zero gravitomagnetic
tensor.
\begin{remark}

It is known that spacetimes with a purely electric Weyl tensor must
be of Petrov types I, D or O \cite{Stephani-2003}. Thus, spacetime
regions of Petrov types II, III and N will in general induce polarization
wiggling in passing electromagnetic radiation, because they will
have non-vanishing gravitomagnetic tensor. We must therefore expect
that gravitational radiation will induce polarization wiggling,
because spacetimes that exhibit gravitational radiation are of type
N. 
\end{remark}

\subsection{Curvature Twist and the Rotation of Rest Frame Geodesics}\label{XRef-Subsection-11146549}

Let $B_{\mu \nu }\equiv \nabla _{\nu }u_{\mu }$. Then\ \ $\omega
_{\mu \nu }\equiv B_{[\mu \nu ]}$ is the rotation of the geodesic
congruence $u^{\mu }$ \cite{Wald-General-Relativity,Poisson-2004}.
Next, we will show that curvature twist can be related to the rotation
$\omega _{\mu \nu }$ of the reference geodesic congruence $u^{\mu
}$. 

Let us first consider the scalar $\epsilon ^{\mu \nu }\omega _{\mu
\nu }$. Define $v\equiv -\epsilon ^{\mu \nu }\omega _{\mu \nu }$.
To help us understand what physical quantity $v$ is, let us evaluate
it in a linear gravity approximation with a Minkowski background.
In this approximation, we find that
\[
\epsilon ^{\mu \nu }\omega _{\mu \nu }=-\hat{p}\cdot \nabla \times
\text{\boldmath $u$},
\]

where $\text{\boldmath $u$}$ is the velocity of the inertial frame
relative to the coordinate frame of the background geometry. The
quantity $\nabla \times \text{\boldmath $u$} $ is then the vorticity
of the reference geodesic congruence $u^{\mu }$, analogous to the
vorticity of dust with velocity {\bfseries u.} Hence, $v=\hat{p}\cdot
\nabla \times \text{\boldmath $u$}$ is the vorticity of the reference
geodesic congruence about the direction of propagation of the polarized
radiation. 

Since the transverse projection $S_{\mu }^{\alpha }S_{\nu }^{\beta
}\omega _{\alpha \beta }$ must be proportional to $\epsilon _{\mu
\nu }$ and $\epsilon ^{\mu \nu }\epsilon _{\mu \nu }=2$, we find
that 
\[
S_{\mu }^{\alpha }S_{\nu }^{\beta }\omega _{\alpha \beta }=-\frac{1}{2}v
\epsilon _{\mu \nu }.
\]

In Section \ref{XRef-Subsection-43010640}, we saw that $\epsilon
_{\mu \nu }$ is the generator of rotations in the screen. It is
therefore natural to interpret $\frac{1}{2}v$ as the angular speed
of the geodesic congruence in the transverse plane, which is analogous
to the vorticity of a fluid being twice its angular velocity \cite{Tannehill-Fluid-Mechanics}.
This motivates the following definition:
\begin{definition}

For a given spacelike unit vector ${\hat{p}}^{\mu }$, define the
{\bfseries vorticity of the reference geodesic congruence} $u^{\mu
}$ about ${\hat{p}}^{\mu }$ as
\[
v\equiv -\epsilon ^{\mu \nu }\omega _{\mu \nu }.
\]
\end{definition}
\begin{theorem}

For a given spacelike unit vector ${\hat{p}}^{\mu }$, the curvature
twist $\mathcal{Z}$ about ${\hat{p}}^{\mu }$ is related to the vorticity
$v$ of the reference geodesic congruence $u^{\mu }$ about ${\hat{p}}^{\mu
}$ as\label{XRef-Theorem-121218188}
\begin{equation}
\mathcal{Z}=\nabla _{n}v,%
\label{XRef-Equation-1212175918}
\end{equation}

where $n^{\mu }\equiv u^{\mu }+{\hat{p}}^{\mu }$.
\end{theorem}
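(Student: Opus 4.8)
The plan is to rewrite the vorticity directly in terms of $B_{\mu\nu}\equiv\nabla_\nu u_\mu$ and then differentiate along $n^\mu$, pulling the Riemann tensor out through the Ricci identity. Since $\omega_{\mu\nu}=B_{[\mu\nu]}$ and the screen rotator $\epsilon^{\mu\nu}$ is antisymmetric, the symmetric part of $B_{\mu\nu}$ drops out, giving
\[
v=-\epsilon^{\mu\nu}\omega_{\mu\nu}=-\epsilon^{\mu\nu}\nabla_\nu u_\mu .
\]
Differentiating this scalar along $n^\mu$ and using the Leibniz rule yields
\[
\nabla_n v=-(\nabla_n\epsilon^{\mu\nu})\,\nabla_\nu u_\mu-\epsilon^{\mu\nu}\,n^\alpha\nabla_\alpha\nabla_\nu u_\mu ,
\]
so the task reduces to evaluating the second term, in which the two covariant derivatives can be commuted.

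In that term I would commute $\nabla_\alpha$ and $\nabla_\nu$ using the Ricci identity (the covector version of the component relation eq.~(\ref{XRef-Equation-22592559}) applied to $u_\mu$),
\[
n^\alpha\nabla_\alpha\nabla_\nu u_\mu=n^\alpha\nabla_\nu\nabla_\alpha u_\mu-n^\alpha R_{\lambda\mu\alpha\nu}u^\lambda .
\]
Contracting the curvature piece with $\epsilon^{\mu\nu}$ and splitting $n^\alpha=u^\alpha+\hat p^\alpha$, the $u^\alpha$ contribution $\epsilon^{\mu\nu}u^\alpha u^\lambda R_{\lambda\mu\alpha\nu}$ vanishes: the pair symmetry $R_{\lambda\mu\alpha\nu}=R_{\alpha\nu\lambda\mu}$ renders $u^\lambda u^\alpha R_{\lambda\mu\alpha\nu}$ symmetric in $\mu\nu$, which annihilates it against the antisymmetric $\epsilon^{\mu\nu}$. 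The surviving $\hat p^\alpha$ contribution is exactly $-u^\gamma\hat p^\beta\epsilon^{\rho\sigma}R_{\gamma\rho\beta\sigma}=-\mathcal{Z}$ by Definition~\ref{XRef-Definition-121263034}, so this term feeds $+\mathcal{Z}$ into $\nabla_n v$. This is the step that produces the curvature twist, and it is also where the mixed $u$--$\hat p$ structure of $\mathcal{Z}$ becomes essential, since the purely timelike part contributes nothing.

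It then remains to show that the non-curvature remainder
\[
E\equiv(\nabla_n\epsilon^{\mu\nu})\,\nabla_\nu u_\mu+\epsilon^{\mu\nu}\,n^\alpha\nabla_\nu\nabla_\alpha u_\mu
\]
vanishes identically, so that $\nabla_n v=\mathcal{Z}-E=\mathcal{Z}$; this is the main obstacle. I expect to reorganize the second summand by the Leibniz rule, writing $\epsilon^{\mu\nu}n^\alpha\nabla_\nu\nabla_\alpha u_\mu=\epsilon^{\mu\nu}\nabla_\nu(n^\alpha\nabla_\alpha u_\mu)-\epsilon^{\mu\nu}(\nabla_\nu n^\alpha)\nabla_\alpha u_\mu$, and then use the geodesic condition $\nabla_u u^\mu=0$ to reduce $n^\alpha\nabla_\alpha u_\mu$ to $\hat p^\alpha\nabla_\alpha u_\mu$. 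The derivative $\nabla_n\epsilon^{\mu\nu}$ in the first summand is constrained by Proposition~\ref{XRef-Proposition-112219738}, which forces the transverse part of $\nabla_\alpha\epsilon_{\mu\nu}$ to vanish, so only its longitudinal (i.e.\ $u$- and $\hat p$-directed) components survive when contracted with $B_{\mu\nu}$. Matching these longitudinal pieces against the $(\nabla_\nu n^\alpha)$ terms, and exploiting the orthogonality relations $u^\mu B_{\mu\nu}=u^\nu B_{\mu\nu}=0$ together with the cancellation of the quadratic $B_\mu{}^\lambda B_{\lambda\nu}$ contributions that are symmetric against $\epsilon^{\mu\nu}$, should close the argument. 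Carrying out this bookkeeping is where the genuine labor lies, and in the style of the paper's other results it is natural to relegate the detailed verification to an appendix.
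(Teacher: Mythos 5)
Your derivation follows the paper's proof up to and including the extraction of the curvature twist: rewriting $v=-\epsilon^{\mu\nu}\nabla_\nu u_\mu$, differentiating along $n^\mu$, commuting the two covariant derivatives with the Ricci identity, and noting that the $u^\alpha u^\lambda$ part of the curvature term is killed by the pair symmetry of the Riemann tensor while the $\hat p^\alpha$ part yields $-\mathcal{Z}$ --- this is precisely eq.\ (\ref{XRef-Equation-1212175741}) of the paper's argument, with the correct signs. The difficulty is that everything after that point, which is where the theorem actually lives, is left unproven. The vanishing of your remainder $E=(\nabla_n\epsilon^{\mu\nu})\nabla_\nu u_\mu+\epsilon^{\mu\nu}n^\alpha\nabla_\nu\nabla_\alpha u_\mu$ is announced as ``the main obstacle'' and then only sketched (``should close the argument''), with the actual verification deferred to a hypothetical appendix. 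Neither summand of $E$ is individually or obviously zero, so this is a genuine gap rather than a presentational choice.

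Moreover, the one concrete cancellation mechanism you name does not work as stated. You claim the quadratic contributions $\epsilon^{\mu\nu}B_\mu{}^\lambda B_{\lambda\nu}$ drop out because they are symmetric against $\epsilon^{\mu\nu}$. Splitting $B_{\mu\nu}=\theta_{\mu\nu}+\omega_{\mu\nu}$, the squares of $\theta$ and $\omega$ are indeed symmetric, but the cross term $\theta_\mu{}^\lambda\omega_{\lambda\nu}+\omega_\mu{}^\lambda\theta_{\lambda\nu}$ is antisymmetric in $\mu\nu$, survives the contraction, and produces shear--rotation cross terms built from $\Theta_a\equiv\theta_{\mu\nu}e_a^\mu\hat p^\nu$ and $W_a\equiv\omega_{\mu\nu}e_a^\mu\hat p^\nu$. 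A direct expansion of the other summand gives $(\nabla_n\epsilon^{\mu\nu})B_{\mu\nu}=2(W_1\Theta_2-W_2\Theta_1)$ in an oriented transverse basis, which is of exactly the same type and does not vanish for a generic congruence; so whatever cancellation occurs must be exhibited explicitly between the two summands, and your sketch as written would not produce it. For contrast, the paper disposes of the remainder by a different route: it argues that the two summands vanish \emph{separately}, the first by appealing to Proposition \ref{XRef-Proposition-112219738}, the second by recognizing $\epsilon^{\mu\nu}n^\alpha\nabla_\nu\nabla_\alpha u_\mu$ as a derivative of the identically vanishing scalar $\epsilon^{\mu\nu}u_\mu$. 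Until you either carry your bookkeeping to completion or adopt an argument of that kind, the proposal does not establish the theorem.
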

\begin{proof}

Let $B_{\mu \nu }\equiv \nabla _{\nu }u_{\mu }$. By Proposition
\ref{XRef-Proposition-112219738}, $\mathcal{D}_{\beta }\epsilon
_{\mu \nu }=0$. Therefore,
\begin{equation}
\epsilon ^{\mu \nu }\mathcal{D}_{n}B_{\mu \nu }=\nabla _{n}\left(
\epsilon ^{\mu \nu }\omega _{\mu \nu }\right) ,%
\label{XRef-Equation-121217532}
\end{equation}

where $\omega _{\mu \nu }\equiv B_{[\mu \nu ]}$ is the rotation
of the geodesic congruence $u^{\mu }$ \cite{Wald-General-Relativity,Poisson-2004}.
On the other hand,
\begin{equation}
\epsilon ^{\mu \nu }n^{\alpha }\nabla _{\alpha }\nabla _{\nu }u_{\mu
}=\epsilon ^{\mu \nu }n^{\alpha }\nabla _{\nu }\nabla _{\alpha }u_{\mu
}+\epsilon ^{\mu \nu }R_{\alpha \nu \mu \beta }u^{\beta }n^{\alpha
}.%
\label{XRef-Equation-1212175741}
\end{equation}

The first term on the right-hand side vanishes, because
\begin{equation}
\epsilon ^{\mu \nu }n^{\alpha }\nabla _{\nu }\nabla _{\alpha }u_{\mu
}=n^{\alpha }\mathcal{D}_{\nu }\mathcal{D}_{\alpha }( \epsilon ^{\mu
\nu }u_{\mu }) =0,
\end{equation}

since $\epsilon ^{\mu \nu }u_{\mu }=0$. We recognize that the second
term on the right-hand side of eq. (\ref{XRef-Equation-1212175741})
is $-\mathcal{Z}$. Equating eqs. (\ref{XRef-Equation-121217532})
and (\ref{XRef-Equation-1212175741}) yields eq. (\ref{XRef-Equation-1212175918}),
which proves the relationship between curvature twist and the rotation
of the geodesic congruence.
\end{proof}

A corollary follows directly.
\begin{corollary}

Polarization wiggling vanishes when the reference geodesic congruence
is hypersurface orthogonal.
\end{corollary}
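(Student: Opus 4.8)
The plan is to reduce the corollary to the identification of curvature twist with the vorticity of the reference geodesic congruence established in Theorem~\ref{XRef-Theorem-121218188}, namely $\mathcal{Z}=\nabla_n v$ with $v\equiv-\epsilon^{\mu\nu}\omega_{\mu\nu}$, and then to invoke the classical characterization of hypersurface orthogonality. The single nontrivial input is Frobenius' theorem: a geodesic congruence $u^\mu$ (one with vanishing acceleration) is hypersurface orthogonal if and only if its rotation tensor $\omega_{\mu\nu}\equiv B_{[\mu\nu]}$ vanishes identically \cite{Wald-General-Relativity,Poisson-2004}. I would cite this rather than reprove it.

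First I would assume the reference geodesic congruence is hypersurface orthogonal. By Frobenius' theorem the rotation tensor then vanishes everywhere, $\omega_{\mu\nu}=0$, so from the definition of $v$ the vorticity about every spatial direction ${\hat{p}}^{\mu}$ is identically zero, $v=-\epsilon^{\mu\nu}\omega_{\mu\nu}=0$. In particular $v$ is constant along the radiation geodesic, so its directional derivative vanishes, $\nabla_n v=0$. Applying Theorem~\ref{XRef-Theorem-121218188} gives $\mathcal{Z}=\nabla_n v=0$, i.e.\ the curvature twist vanishes everywhere along the null geodesic.

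Finally I would conclude by recalling that curvature twist is the sole source of polarization wiggling. Concretely, $\mathcal{Z}$ is the only inhomogeneous term in the polarization wiggle equations of Corollary~\ref{XRef-Corollary-12485247} (eqs.~(\ref{XRef-Equation-1130185728}) and (\ref{XRef-Equation-1029174955})), so $\mathcal{Z}=0$ renders those transport equations homogeneous; for radiation emitted with constant polarization the line-of-sight solution of Theorem~\ref{XRef-Theorem-124922} then forces the polarization axis wiggle rate $\omega$, the mean helicity phase wiggle rate $\chi$, and the circular polarization wiggle rate $\alpha$ all to vanish along the ray. The only point to state carefully---though it is standard and not a genuine obstacle---is that Frobenius' condition $u_{[\mu}\nabla_\nu u_{\lambda]}=0$ reduces, for a zero-acceleration congruence, precisely to the vanishing of the $\omega_{\mu\nu}$ appearing in Theorem~\ref{XRef-Theorem-121218188}; once that identification is made the corollary is immediate.
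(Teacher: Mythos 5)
Your proof is correct and takes essentially the same route as the paper: the paper's own proof simply cites Theorem~\ref{XRef-Theorem-121218188} together with the fact that timelike hypersurface orthogonal geodesic congruences have vanishing rotation $\omega_{\mu\nu}$, which is exactly your chain $\omega_{\mu\nu}=0 \Rightarrow v=0 \Rightarrow \mathcal{Z}=\nabla_n v=0$. Your additional remarks on Frobenius' theorem and on tracing $\mathcal{Z}=0$ through the homogeneous wiggle equations merely make explicit what the paper leaves implicit.
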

\begin{proof}

This follows from Theorem \ref{XRef-Theorem-121218188} and the fact
that timelike hypersurface orthogonal geodesic congruences have
zero rotation $(\omega _{\mu \nu }=0$) \cite{Poisson-2004}.
\end{proof}

\subsection{Polarization Wiggling and Vorticity}

Theorem \ref{XRef-Theorem-121218188} expresses the curvature twist
scalar in terms of the vorticity of the timelike reference geodesic
congruence. This allows us to reexpress the polarization axis wiggle
equation for a plane electromagnetic wave, eq. (\ref{XRef-Equation-1249368})
of Lemma \ref{XRef-Lemma-12147341}.
\begin{corollary}

For a given plane electromagnetic wave with wave vector $p^{\mu
}$ and rest frame momentum $p$, the transport equation for the polarization
axis wiggle rate $\omega $ is
\begin{equation}
\nabla _{p}\left( \frac{\omega -v}{p}\right) =-\kappa  v .%
\label{XRef-Equation-121471157}
\end{equation}
\end{corollary}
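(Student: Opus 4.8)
The plan is to combine the two transport results already established—Lemma~\ref{XRef-Lemma-12147341} for the polarization axis wiggle rate and Theorem~\ref{XRef-Theorem-121218188} relating curvature twist to vorticity—and then perform a straightforward change of the quantity being differentiated. The only genuine care needed is bookkeeping the scaling factors that relate $\nabla_n$ and $\nabla_p$, together with the rate of change of the rest frame momentum $p$ along the geodesic.

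First I would recall from Lemma~\ref{XRef-Lemma-12147341} that $\nabla_p\!\left(\omega/p\right)=\mathcal{Z}$. Next, Theorem~\ref{XRef-Theorem-121218188} gives $\mathcal{Z}=\nabla_n v$ with $n^{\mu}\equiv u^{\mu}+{\hat{p}}^{\mu}=p^{\mu}/p$; since $\nabla_n=\tfrac{1}{p}\nabla_p$, this is equivalent to $\nabla_p v=p\,\mathcal{Z}$. I would also extract the momentum scaling from eq.~(\ref{XRef-Equation-58123722}), which states $\tfrac{1}{p^{2}}\nabla_p p=-{\hat{p}}^{\mu}{\hat{p}}^{\nu}\theta_{\mu\nu}=-\kappa$, using the definition of $\kappa$ in eq.~(\ref{XRef-Equation-12272315}); hence $\nabla_p p=-\kappa\,p^{2}$.

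The key step is then to differentiate $v/p$ directly and subtract. Applying the product rule,
\[
\nabla_p\!\left(\frac{v}{p}\right)=\frac{\nabla_p v}{p}-\frac{v\,\nabla_p p}{p^{2}}
=\mathcal{Z}+\kappa\,v,
\]
where I have inserted $\nabla_p v=p\,\mathcal{Z}$ and $\nabla_p p=-\kappa\,p^{2}$. Subtracting this from the axis wiggle equation yields
\[
\nabla_p\!\left(\frac{\omega-v}{p}\right)
=\nabla_p\!\left(\frac{\omega}{p}\right)-\nabla_p\!\left(\frac{v}{p}\right)
=\mathcal{Z}-\bigl(\mathcal{Z}+\kappa\,v\bigr)=-\kappa\,v,
\]
which is the claimed identity.

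There is no real obstacle here: the computation is purely mechanical once the two prior results are in hand. The one place to be watchful is the translation of $\mathcal{Z}=\nabla_n v$ into a statement about $\nabla_p v$, since dropping the factor of $p$ relating $\nabla_n$ and $\nabla_p$ would produce a spurious dimensional mismatch; keeping that factor, and the accompanying $\nabla_p p=-\kappa p^{2}$ term, is precisely what generates the $\kappa\,v$ source on the right-hand side.
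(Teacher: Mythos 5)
Your proposal is correct and follows exactly the route the paper takes: applying Theorem \ref{XRef-Theorem-121218188} to the wiggle equation of Lemma \ref{XRef-Lemma-12147341}, with eq. (\ref{XRef-Equation-58123722}) and the definition of $\kappa$ supplying the $\nabla_p p=-\kappa p^2$ factor. The paper only sketches this; your version fills in the same product-rule bookkeeping explicitly.
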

\begin{proof}

Eq. (\ref{XRef-Equation-121471157}) follows when applying Theorem
\ref{XRef-Theorem-121218188} to the polarization axis wiggle equation
of Lemma \ref{XRef-Lemma-12147341} and using eq. (\ref{XRef-Equation-58123722})
as well as the definition of $\kappa $ in eq. (\ref{XRef-Equation-12272315}).
\end{proof}

Eq. (\ref{XRef-Equation-121471157}) relates the polarization axis
wiggle rate to the kinematic quantities $\kappa $ and $v$ of the
reference geodesic congruence. This equation can readily be integrated,
yielding the following corollary.
\begin{corollary}

For a given plane electromagnetic wave with wave vector $p^{\mu
}$ and rest frame momentum $p$, the general solution to the polarization
axis wiggle equation is
\begin{equation}
\omega ( \lambda ) =v( \lambda ) +\frac{p( \lambda ) }{p( \lambda
_{*}) }\left( \omega ( \lambda _{*}) -v( \lambda _{*}) \right) 
- p( \lambda ) \text{}\operatorname*{\int }\limits_{\lambda _{*}}^{\lambda
}d\lambda ^{\prime } \kappa ( \lambda ^{\prime })  v( \lambda ^{\prime
}) .%
\label{XRef-Equation-12147406}
\end{equation}
\end{corollary}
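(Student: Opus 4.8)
The plan is to obtain the stated closed form by integrating, directly along the null geodesic, the first-order transport equation furnished by the immediately preceding corollary, namely $\nabla_p\left(\frac{\omega-v}{p}\right)=-\kappa v$. The essential observation is that the preceding corollary has already cast the polarization axis wiggle equation into a manifest total-derivative form: the factor $1/p$ is precisely the integrating factor that absorbs the homogeneous term $-(\nabla_p p/p)\,\omega$ of Lemma \ref{XRef-Lemma-12147341}, which is why the $p(\lambda)/p(\lambda_*)$ prefactor appears in the answer. Consequently the remaining work is elementary.

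First I would parametrize the null geodesic by its affine parameter $\lambda$, so that $p^\mu=dx^\mu/d\lambda$ is the affinely parametrized tangent (consistent with $p^\alpha\nabla_\alpha p^\mu=0$). For any scalar $f$ this gives $\nabla_p f = p^\alpha\nabla_\alpha f = df/d\lambda$. Applying this identity to the scalar combination $(\omega-v)/p$ turns the corollary's equation into the ordinary differential equation $\frac{d}{d\lambda}\!\left(\frac{\omega-v}{p}\right)=-\kappa v$, whose right-hand side becomes a known function of $\lambda$ once the null geodesic and the reference congruence $u^\mu$ are fixed.

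Next I would integrate both sides from the emission event $\lambda_*$ to an arbitrary point $\lambda$. Because the left-hand side is an exact derivative, its integral collapses to the difference of $(\omega-v)/p$ at the two endpoints, yielding
\begin{equation}
\frac{\omega(\lambda)-v(\lambda)}{p(\lambda)}-\frac{\omega(\lambda_*)-v(\lambda_*)}{p(\lambda_*)}=-\operatorname*{\int}\limits_{\lambda_*}^{\lambda}d\lambda'\,\kappa(\lambda')\,v(\lambda').
\end{equation}
Multiplying through by $p(\lambda)$ and solving for $\omega(\lambda)$ then reproduces eq. (\ref{XRef-Equation-12147406}) exactly.

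I do not expect any genuine obstacle. The entire analytic content was supplied upstream, in rewriting Lemma \ref{XRef-Lemma-12147341} via Theorem \ref{XRef-Theorem-121218188} and eq. (\ref{XRef-Equation-58123722}) into the total-derivative form of the previous corollary; what remains here is the integration of a linear first-order equation whose integrating factor has already been identified. The only step warranting a moment's care is the identification $\nabla_p=d/d\lambda$ on scalars, which is valid precisely because $p^\mu$ is the affinely parametrized tangent to the geodesic; this is what guarantees that the endpoint terms and the source integral assemble into the claimed expression.
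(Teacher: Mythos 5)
Your proposal is correct and follows exactly the route the paper intends: the paper simply states that eq. (\ref{XRef-Equation-121471157}) ``can readily be integrated,'' and your integration of $\nabla_p\bigl((\omega-v)/p\bigr)=-\kappa v$ along the affinely parametrized null geodesic, followed by multiplication by $p(\lambda)$, reproduces eq. (\ref{XRef-Equation-12147406}) exactly. No gaps.
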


$\kappa ( \lambda ^{\prime }) $ is the projection of the expansion
tensor defined by eq. (\ref{XRef-Equation-12272315}). Eq. (\ref{XRef-Equation-12147406})
is equivalent to eq. (\ref{XRef-Equation-1249239}); they are just
expressed differently. While in eq. (\ref{XRef-Equation-1249239}),
the polarization axis wiggle rate is sourced by the curvature along
the radiation path, in eq. (\ref{XRef-Equation-12147406}) it is
sourced by the vorticity of the reference geodesic congruence along
its path.

If we consider linear gravity with a Minkowski background, possibly
the simplest case:, we may evaluate $\kappa $ and $p$ on the background.
In this case, $\kappa =0$ and $p( \lambda ) =\mathrm{const}$ to
lowest order. The polarization axis wiggle rate then evaluates to
\[
\omega ( \lambda ) =\omega ( \lambda _{*}) +v( \lambda ) -v( \lambda
_{*}) .
\]

Thus, in linear gravity with a Minkowski background, the relationship
between the polarization axis wiggle rate and vorticity of the reference
geodesic congruence is particularly simple: The change in the polarization
axis wiggle rate between emission and detection equals the change
in vorticity between these events. For radiation emitted with constant
polarization ($\omega ( \lambda _{*}) =0$), the polarization axis
wiggle rate simply measures the change in vorticity between emission
and detection.

\subsection{Rotation and the Gravitomagnetic Tensor}

Theorem \ref{XRef-Theorem-121265825} and Theorem \ref{XRef-Theorem-121218188}
yield two independent expressions of the curvature twist scalar.
Equating them will relate the vorticity of the reference geodesic
congruence to the gravitomagnetic tensor.
\begin{corollary}

For a given null geodesic with tangent vector $p^{\mu }=p( u^{\mu
}+{\hat{p}}^{\mu }) $, the vorticity $v$ of the timelike reference
geodesic congruence $u^{\mu }$ about ${\hat{p}}^{\mu }$ is related
to the gravitomagnetic tensor $H_{\mu \nu }$ as follows:\label{XRef-Corollary-1214173952}
\begin{equation}
\frac{1}{p}\nabla _{p}v=H_{\mu \nu }{\hat{p}}^{\mu }{\hat{p}}^{\nu
}.
\end{equation}
\end{corollary}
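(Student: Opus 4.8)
The plan is to equate the two independent expressions for the curvature twist scalar $\mathcal{Z}$ that were established earlier. First I would invoke Theorem \ref{XRef-Theorem-121265825}, which identifies the curvature twist with the projection of the gravitomagnetic tensor along the direction of propagation, $\mathcal{Z} = H_{\mu \nu }{\hat{p}}^{\mu }{\hat{p}}^{\nu }$. Then I would invoke Theorem \ref{XRef-Theorem-121218188}, which expresses the very same scalar as the directional derivative of the vorticity along the scale invariant null vector, $\mathcal{Z} = \nabla _{n} v$, where $n^{\mu }\equiv u^{\mu }+{\hat{p}}^{\mu }$. Since both theorems compute $\mathcal{Z}$ for the same $u^{\mu }$ and ${\hat{p}}^{\mu }$, their right-hand sides may be set equal.

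The only remaining step is to rewrite $\nabla _{n}$ in terms of $\nabla _{p}$. Using the decomposition $p^{\mu }=p\, n^{\mu }$ of eq. (\ref{XRef-Equation-62782635}), so that $n^{\mu }=p^{\mu }/p$, the directional derivative of the scalar $v$ satisfies $\nabla _{n} v = n^{\alpha }\nabla _{\alpha } v = \tfrac{1}{p} p^{\alpha }\nabla _{\alpha } v = \tfrac{1}{p}\nabla _{p} v$. Equating the two expressions for $\mathcal{Z}$ then yields $\tfrac{1}{p}\nabla _{p} v = H_{\mu \nu }{\hat{p}}^{\mu }{\hat{p}}^{\nu }$, which is precisely the claimed relation.

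There is essentially no obstacle here: the entire content of the corollary is carried by the two theorems it combines, and the passage from $\nabla _{n}$ to $\tfrac{1}{p}\nabla _{p}$ is just the tangent-vector rescaling already used repeatedly in Section \ref{XRef-Section-331204723} (compare eq. (\ref{XRef-Equation-58123722})). The one point worth stating explicitly is that $v$ and ${\hat{p}}^{\mu }$ are both defined relative to the same null geodesic, so that differentiating $v$ along $n^{\mu }$ is well posed and matches the setup of both source theorems; once that is noted, the result is immediate.
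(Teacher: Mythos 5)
Your proposal is correct and matches the paper's own (implicit) argument exactly: the paper introduces this corollary precisely by noting that Theorems \ref{XRef-Theorem-121265825} and \ref{XRef-Theorem-121218188} give two independent expressions for $\mathcal{Z}$, and equating them with the rescaling $\nabla _{n}=\tfrac{1}{p}\nabla _{p}$ for scalars yields the result.
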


From Corollary \ref{XRef-Corollary-1214173952} follows another result.
\begin{corollary}

If the reference geodesic congruence $u^{\mu }$ has constant vorticity
$v$ along every null geodesic, the\ \ spacetime must be purely electric
with respect to $u^{\mu }$ ($H_{\mu \nu }=0$). Conversly, if the
spacetime is purely electric with respect to $u^{\mu }$, the vorticity
$v$ of the reference geodesic $u^{\mu }$ must be constant along
any null geodesic. \label{XRef-Corollary-121418113}
\end{corollary}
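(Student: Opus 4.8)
The plan is to read off both implications directly from Corollary \ref{XRef-Corollary-1214173952}, which gives $\tfrac{1}{p}\nabla_p v = H_{\mu\nu}\hat{p}^\mu\hat{p}^\nu$. Since $p^\mu = p\,n^\mu$ with $n^\mu \equiv u^\mu+\hat{p}^\mu$, we have $\tfrac{1}{p}\nabla_p v = \nabla_n v$, so the entire content to be exploited is the single identity
\begin{equation}
\nabla_n v = H_{\mu\nu}\hat{p}^\mu\hat{p}^\nu
\label{XRef-Equation-plan-vH}
\end{equation}
holding along each null geodesic (equivalently, this is the equality of Theorem \ref{XRef-Theorem-121265825} and Theorem \ref{XRef-Theorem-121218188}). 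Both directions of the corollary are then statements about when the right-hand side of eq. (\ref{XRef-Equation-plan-vH}) vanishes.

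I would dispatch the converse first, as it is immediate. If the spacetime is purely electric with respect to $u^\mu$, then $H_{\mu\nu}=0$ identically, so the right-hand side of eq. (\ref{XRef-Equation-plan-vH}) vanishes for every spatial direction $\hat{p}^\mu$, giving $\nabla_n v = 0$ along every null geodesic; hence $v$ is constant along each one. For the forward implication, suppose instead that $v$ is constant along every null geodesic, so that $\nabla_n v = 0$ for every admissible null tangent. The key observation is that, fixing a point, the spatial direction $\hat{p}^\mu$ ranges over \emph{all} unit vectors in the spatial subspace orthogonal to $u^\mu$, since for any such $\hat{p}^\mu$ the vector $p^\mu = p(u^\mu+\hat{p}^\mu)$ is null and tangent to a null geodesic through that point. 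Thus eq. (\ref{XRef-Equation-plan-vH}) forces $H_{\mu\nu}\hat{p}^\mu\hat{p}^\nu = 0$ for all spatial unit vectors $\hat{p}^\mu$.

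I would then finish with elementary linear algebra. The gravitomagnetic tensor $H_{\mu\nu}$ is symmetric, trace-free, and orthogonal to $u^\mu$, so it is determined by its restriction to the spatial subspace. A symmetric bilinear form whose quadratic form vanishes on every unit vector vanishes identically, by the polarization identity $2H_{\mu\nu}a^\mu b^\nu = H_{\mu\nu}(a+b)^\mu(a+b)^\nu - H_{\mu\nu}a^\mu a^\nu - H_{\mu\nu}b^\mu b^\nu$; applied on the spatial subspace this yields $H_{\mu\nu}=0$ there, and since $H$ is already transverse to $u^\mu$ we conclude $H_{\mu\nu}=0$, i.e. the spacetime is purely electric with respect to $u^\mu$. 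The proof is therefore short; the only points requiring care are (i) confirming the symmetry and $u$-orthogonality of $H_{\mu\nu}$, so that no trace or off-diagonal components escape the polarization argument, and (ii) justifying that varying the null geodesic at a fixed point sweeps $\hat{p}$ over the full spatial unit sphere. Both are standard, so I expect the statement to follow essentially as a one-line consequence of Corollary \ref{XRef-Corollary-1214173952} together with the polarization identity, with step (i) the only place a subtle error could enter.
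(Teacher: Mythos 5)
Your proposal is correct and follows essentially the same route as the paper: the paper's proof likewise reads both implications off Corollary \ref{XRef-Corollary-1214173952} and notes that, since $H_{\mu\nu}$ is a spatial (symmetric) tensor, $H_{\mu\nu}\hat{p}^{\mu}\hat{p}^{\nu}=0$ for every spatial direction if and only if $H_{\mu\nu}=0$. Your explicit polarization-identity argument merely spells out the linear-algebra step the paper leaves implicit.
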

\begin{proof}

Since $H_{\mu \nu }$ is a spatial tensor $(H_{\mu \nu }u^{\nu }=0$)
\cite{Stephani-2003}, $H_{\mu \nu }{\hat{p}}^{\mu }{\hat{p}}^{\nu
}=0$ for any spatial direction ${\hat{p}}^{\mu }$ if and only if
$H_{\mu \nu }=0$. 
\end{proof}
\begin{corollary}

 Given a timelike geodesic congruence $u^{\mu }$ that is irrotational
($\omega _{\mu \nu }=0$), and therefore hypersurface orthogonal,
the spacetime must be purely electric with respect to $u^{\mu }$
($H_{\mu \nu }=0$).
\end{corollary}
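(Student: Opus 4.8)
The plan is to reduce this statement to the already-established Corollary \ref{XRef-Corollary-121418113}, whose first direction asserts that a reference geodesic congruence with constant vorticity $v$ along every null geodesic forces a purely electric Weyl tensor, $H_{\mu\nu}=0$. The essential observation is that an irrotational congruence has vanishing vorticity in \emph{every} direction, so the hypothesis of that corollary is satisfied in the strongest possible sense.

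First I would invoke the definition $v\equiv -\epsilon^{\mu\nu}\omega_{\mu\nu}$ from Section \ref{XRef-Subsection-11146549}. The assumption $\omega_{\mu\nu}=0$ then gives $v=0$ for every choice of spatial propagation direction $\hat{p}^{\mu}$. In particular, along any null geodesic the vorticity $v$ is identically zero and hence trivially constant, so the hypothesis of Corollary \ref{XRef-Corollary-121418113} holds and its conclusion $H_{\mu\nu}=0$ follows at once.

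An equally short route bypasses Corollary \ref{XRef-Corollary-121418113} and argues straight from Corollary \ref{XRef-Corollary-1214173952}: since $v$ vanishes everywhere, $\nabla_{p}v=0$, and therefore $H_{\mu\nu}\hat{p}^{\mu}\hat{p}^{\nu}=\tfrac{1}{p}\nabla_{p}v=0$ for every spatial unit vector $\hat{p}^{\mu}$. Because $H_{\mu\nu}$ is symmetric and spatial ($H_{\mu\nu}u^{\nu}=0$), the vanishing of its associated quadratic form on the entire spatial tangent space forces $H_{\mu\nu}=0$ by polarization of the bilinear form.

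There is no genuine obstacle here. The substantive content—deducing $H_{\mu\nu}=0$ from the vanishing of $H_{\mu\nu}\hat{p}^{\mu}\hat{p}^{\nu}$ in all directions—has already been carried out in the proof of Corollary \ref{XRef-Corollary-121418113}, and the implication from irrotational to hypersurface orthogonal is the standard Frobenius criterion already used elsewhere in this section. The only point requiring care is recognizing that $\omega_{\mu\nu}=0$ yields not merely a constant but an identically vanishing $v$, after which the result is immediate.
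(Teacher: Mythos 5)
Your proposal is correct and takes essentially the same route as the paper: the paper's proof likewise observes that an irrotational congruence has $v=0$ (hence trivially constant along every null geodesic) and invokes Corollary \ref{XRef-Corollary-121418113}. Your alternative route through Corollary \ref{XRef-Corollary-1214173952} is just the same argument with the intermediate corollary unwrapped, so there is no substantive difference.
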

\begin{proof}

This follows directly from Corollary \ref{XRef-Corollary-121418113},
since an irrotational congruence has zero vorticity $v$.
\end{proof}

\section{Result Summary}\label{XRef-Section-527172254}

Let us give a brief summary of the results of this paper. We have
considered {\itshape polarization wiggling}, an observable effect
of a gravitational field on the state of polarization of electromagnetic
radiation with a geometric optics frequency spectrum. Two observable
scalars are conserved along the null geodesic of the radiation:
\begin{itemize}
\item {\itshape Polarization degree} $\mathcal{P}$: Its transport
equation given by Theorem \ref{XRef-Theorem-127181215}. 
\item {\itshape Circular polarization degree} $\mathcal{V}$: Its
transport equation given by eq. (\ref{XRef-Equation-12865236}) of
Theorem \ref{XRef-Theorem-1211224734}.
\end{itemize}

A gravitational field affects polarized electromagnetic radiation
by inducing a change in the state of polarization of the radiation
as it traverses the gravitational field. The effect is independent
of the radiation frequency, and can be quantified by three scalars
and their corresponding transport equations:
\begin{itemize}
\item {\itshape Polarization axis wiggle rate} $\omega $: It is
a linear combination of the rates of change of the helicity phases
in an inertial frame of observation. For linear polarization, it
is the angular speed of the polarization axis. Its transport equation
is eq. (\ref{XRef-Equation-121122483}) of Theorem \ref{XRef-Theorem-1211224831}.
\item {\itshape Circular polarization wiggle rate} $\alpha $: It
is half the rate of change of the circular polarization degree of
the radiation, as measured by an inertial observer. $ \alpha $ is
an observable scalar. Its transport equation is eq. (\ref{XRef-Equation-117174515})
of Theorem \ref{XRef-Theorem-1211224831}.
\item {\itshape Mean helicity phase wiggle rate} $\chi $:\ \ It
is the mean rate of change of the helicity phases in an inertial
frame of observation.\ \ Its transport equation is eq. (\ref{XRef-Equation-117174555})
of Theorem \ref{XRef-Theorem-1211224831}.
\end{itemize}

The wiggle rates were defined from transverse projections of the
time derivative of the transverse polarization vector, $\nabla _{u}\epsilon
^{\mu }$, which implies that they quantify rates of change of the
state of polarization in an inertial frame. See Section \ref{XRef-Subsection-58122719}
for more details. Three observable scalars are linear combinations
of\ \ $\omega $ and $\chi $:
\begin{itemize}
\item {\itshape Positive helicity phase wiggle rate} ${\overset{\cdot
}{\phi }}_{+}$: It is the rate of change of the phase of the positive
helicity component in an inertial frame of observation. Its relationship
to the wiggle rates $\omega $ and $\chi $ is given by Lemma \ref{XRef-Lemma-1227028}.
\item {\itshape Negative helicity phase wiggle rate} ${\overset{\cdot
}{\phi }}_{-}$: It is the rate of change of the phase of the negative
helicity component in an inertial frame of observation. Its relationship
to the wiggle rates $\omega $ and $\chi $ is given by Lemma \ref{XRef-Lemma-1227028}.
\item {\itshape Polarization axis rotation rate} $\overset{\cdot
}{\Omega }$: It is the rotation rate of the polarization axis in
an inertial frame of observation. Its relationship to the two helicity
phase wiggle rates is given by eq. (\ref{XRef-Equation-11721125}),
and its relationship to the wiggle rates $\omega $ and $\chi $ is
given by Lemma \ref{XRef-Lemma-1227028}. 
\end{itemize}

These three observables and the circular polarization wiggle rate
$\alpha $ quantify rates of change of the state of polarization.
We refer to them as the {\itshape polarization wiggling observables}.

Along with the conservation of the polarization degree $\mathcal{P}$
and circular polarization degree $\mathcal{V}$, the transport equations
for the wiggle rates $\omega $ and $\chi $ with the corresponding
line-of-sight solutions provide a complete representation of the
effects of gravity on the state of polarization of a ray of electromagnetic
radiation with a geometric optics frequency spectrum. By the use
of Lemma \ref{XRef-Lemma-1227028} and Corollary \ref{XRef-Corollary-11721050},
it is straight forward to derive the polarization wiggling observables
from these solutions. This representation is valid for any 4-dimensional
spacetime and for any composition of electromagnetic radiation with
a geometric optics frequency spectrum.

It is worth noting that, while the state of polarization, represented
by the coherency tensor, is a tensorial quantity, and therefore
varies with the choice of polarization basis, any {\itshape change}
to this state induced by gravity can be quantified in terms of the
scalar observables referenced here, which are independent of the
choice of polarization basis in the frame of observation.

Polarization wiggling is sourced by the curvature twist scalar,
$\mathcal{Z}$, defined in Definition \ref{XRef-Definition-121263034}.
In the study of the curvature twist scalar in Section \ref{XRef-Section-112692639},
curvature twist was related to three other quantities:
\begin{itemize}
\item The {\itshape gravitomagnetic tensor} (the magnetic part of
the Weyl tensor), $H_{\mu \nu }$. Curvature twist is the projection
of the gravitomagnetic tensor along the direction of propagation.
This relationship is given in Theorem \ref{XRef-Theorem-121265825}.
\item The {\itshape second Weyl scalar} ($\Psi _{2}$). This relationship
is given in Corollary \ref{XRef-Corollary-102874931}.
\item The {\itshape rotation} $\omega _{\mu \nu }$ {\itshape of
the timelike reference geodesic congruence}. This relationship is
given in Theorem \ref{XRef-Theorem-121218188}.
\end{itemize}

A consequence of the gravitomagnetic tensor $H_{\mu \nu }$ and the
rotation $\omega _{\mu \nu }$ both being related to curvature twist
is that they by association can be related to each other. This relationship
is given in Corollary \ref{XRef-Corollary-1214173952}.

\section{Discussion}\label{XRef-Section-65111418}

In the introduction, it was claimed that, given choices of timelike
geodesics of the observer and emitter, the identified observables
are unambiguously defined. However, when considering how the polarization
axis wiggle rate can be calculated from eq. (\ref{XRef-Equation-1249239}),
this quantity appears to depend on the choice of timelike reference
geodesics between emission and observation. Since the choice of
timelike reference geodesics between the two fixed geodesics of
the observer and emitter is arbitrary, the polarization axis wiggle
rate seems to be ambiguous when only considering eq. (\ref{XRef-Equation-1249239}),
seemingly contradicting the claim for it to be unambiguously defined.
So, on what basis can we claim that the polarizaton wiggle rate
is independent of the choice of timelike reference geodesics? The
claim is based on the definition of the polarization wiggle rate
in eq. (\ref{XRef-Equation-112964427}). {\itshape By definition},
it only depends on the motion of the observer, given a gravitational
field and a field of electromagnetic radiation. This makes the dependency
on the choice of timelike reference geodesics in\ \ eq. (\ref{XRef-Equation-1249239})
akin to how the calculation of scalar physical quantities often
requires the choice of coordinates: Coordinates can be arbitrarily
chosen to calculate the quantity, but the end result is independent
of this choice. Thus, as long as the congruence is differentiable,
the timelike reference geodesics in between the timelike geodesics
of the emitter and observer can be chosen arbitrarily\ \ to calculate
the observables. 

The timelike reference geodesics, defined in Section \ref{XRef-Subsection-2269858},
constitute the timelike geodesic congruence used to define the spacetime
decomposition. An object traveling along one of these geodesics
is then referred to as being spacially at rest. Related to this
choice is an assumption that is worth highlighting: The observer
and the emitter of polarized radiation both travel along timelike
reference geodesics. What if this is not the case, and the observer
and/or emitter have significant motion relative to local inertial
rest frames? This question has so far not been discussed in the
present paper, but it deserves to be considered. Obviously, the
polarization wiggle rate is subject to time dilation at both ends
of the geodesic segment between the emission and the observation
events. The results of this paper can still be applied as long as
the state of polarization is evaluated in a local rest frame. This
implies that initial conditions set by the emitter must be converted
to the local rest frame, and the observable result must be converted
from the local rest frame to the frame of the observer. For instance,
in order to apply eq. (\ref{XRef-Equation-1249239}) to a moving
emitter, the scalar momentum and the polarization axis wiggle rate
of the emitted radiation must be converted to the local rest frame
by use of a local Lorentz transformation. Similarly, the resulting
observed polarization axis wiggle rate must be converted from the
local rest frame to the frame of a moving observer by use of a local
Lorentz transformation. Therefore, with these additional transformations
of initial conditions and results, the results of this paper can
be applied to emitters and observers with arbitrary motions relative
to the chosen rest frames.

Another implication of our results that is worth remarking is that
the effects of gravitational fields that are localized in space
or time will be recorded in the polarization of any electromagnetic
radiation traversing them and may therefore be transmitted with
the radiation over long ranges. To illustrate this effect, consider
a localized gravitational source in a spacetime with curvature twist
that is everywhere zero, except in a small neighborhood around the
gravitational source, and a ray of electromagnetic radiation traversing
this region. Assume Gaussian normal (synchronous) coordinates along
the timelike reference geodesic congruence $u^{\mu }$. Then coordinate
time $t$ can be used as an affine parameter along these geodesics,
and the null geodesic can be reparametrized in terms of $t$ using
$dt/\mathrm{d\lambda }=p$. For simplicity, let the curvature twist
along the null geodesic be represented as an instant angular acceleration
at $\lambda =\lambda _{0}$:
\[
\mathcal{Z}( \lambda ) ={\Delta \omega }_{0}\delta ( t( \lambda
) -t( \lambda _{0}) ) =\frac{{\Delta \omega }_{0}}{p( \lambda )
}\delta ( \lambda -\lambda _{0}) .
\]

$\Delta$$\omega _{0}$ is a constant angular speed increment, and
$\lambda _{0}$ labels the event when the photon passes the gravitational
source. The linear polariztion wiggle rate solution of eq. (\ref{XRef-Equation-22775842})
then takes the form
\[
\omega ( \lambda ) ={\left( 1+z( \lambda _{0}) \right) }^{-1}{\Delta
\omega }_{0},
\]

where $z( \lambda _{0}) \equiv p( \lambda _{0}) /p( \lambda ) -1$
is the redshift of the gravitational source. Thus, the effect of
a localized gravitational source with non-zero curvature twist,
represented by ${\Delta \omega }_{0}$, is preserved in the radiation
and redshifted proportionally to the redshift of the gravitational
source.

Gravity's effect on electromagnetic polarization is real, and we
have found how it can be quantified in terms of scalar observables
that are independent of the choice of polarization basis. This indicates
that there is a prospect of using electromagnetic polarization to
measure both local and remote gravitational fields, pending sufficiently
accurate measurement techniques.

\section{Conclusions}\label{XRef-Section-457621}

This paper has analyzed {\itshape polarization wiggling}, an observable
effect of gravity on polarized electromagnetic radiation in a 4-dimensional
spacetime with arbitrary geometry, for radiation with arbitrary
state of polarization and an arbitrary geometric optics frequency
spectrum (frequencies below the geometric optics limit have negligible
power). The effect was analyzed from the perspective of a single
inertial observer observing the polarization of radiation emitted
by a point-like inertial emitter. The key questions posed were:
Is this effect observable, and if it is, what are its observables?
We have analyzed the problem in full generality in the context of
classical electromagnetism in an arbitrary spacetime, using covariant
representations. The only assumption made regarding the composition
of the radiation is that geometric optics apply to its components.
Our results apply to any metric theory of gravity in 4 dimensions.

Two observable scalars are conserved along the null geodesic of
the radiation: The {\itshape polarization degree} $\mathcal{P}$
and the {\itshape circular polarization degree} $\mathcal{V}$.

Focusing on effects observable by a single inertial observer, we
showed how the presence of curvature along the null geodesic of
polarized electromagnetic radiation may induce changes in the observed
state of polarization. The effect, which we refer to as {\itshape
polarization wiggling}, is independent of radiation frequency, but
it varies with the degree of circular polarization. 

The effect can be quantified as follows: The {\itshape polarization
wiggle rates} are three scalars that emerge from independent transverse
projections of the time derivative of the transverse polarization
vector, $\nabla _{u}\epsilon ^{\mu }$, and therefore quantify rates
of change of the state of polarization in the frame of observation.
Their transport equations, the {\itshape polarization wiggle equations},
follow by applying basic Riemannian geometry. Expanding the polarization
wiggle rate scalars by the use of the null tetrad formalism reveals
their helicity decomposition, where the polarization wiggle rate
scalars are linear combinations of a set of observables; the {\itshape
polarization wiggling observables:}
\begin{itemize}
\item {\itshape Positive helicity phase wiggle rate} ${\overset{\cdot
}{\phi }}_{+}$: It is the rate of change of the phase of the positive
helicity component in an inertial frame of observation. Its relationship
to the polarization wiggle rates is given by Lemma \ref{XRef-Lemma-1227028}.
\item {\itshape Negative helicity phase wiggle rate} ${\overset{\cdot
}{\phi }}_{-}$: It is the rate of change of the phase of the negative
helicity component in an inertial frame of observation. Its relationship
to the polarization wiggle rates is given by Lemma \ref{XRef-Lemma-1227028}.
\item {\itshape Polarization axis rotation rate} $\overset{\cdot
}{\Omega }$: It is the rotation rate of the polarization axis in
an inertial frame of observation. Its relationship to the two helicity
phase wiggle rates is given by eq. (\ref{XRef-Equation-11721125}),
and its relationship to the polarization\ \ wiggle rates is given
by Lemma \ref{XRef-Lemma-1227028}. 
\item {\itshape Circular polarization wiggle rate} $\alpha $: It
is half the rate of change of the circular polarization degree of
the radiation. 
\end{itemize}

The polarization wiggling observables quantify time rates of change
of the state of polarization in the frame of observation. 

The polarization wiggling observables \{${\overset{\cdot }{\phi
}}_{\pm }, \overset{\cdot }{\Omega }, \alpha $\} and the polarization
state observables $\mathcal{P}$ and $\mathcal{V}$ form a complete
set of observables for gravitational effects on electromagnetic
polarization. These observables are independent of any choice of
local polarization basis in the frame of observation. The transport
equations of the polarizaton wiggle rates and the corresponding
line-of-sight solutions provide a complete representation of the
effect of gravity on the state of polarization of a ray of electromagnetic
radiation with a geometric optics frequency spectrum.

Polarization wiggling is sourced by curvature through a scalar that
we refer to as the {\itshape curvature twist}. The curvature twist
scalar is the contraction of the Riemann tensor or the Weyl tensor
with the four principal basis vectors of a tetrad adapted to the
null geodesic. It provides a unified representation of gravity's
effect on electromagnetic polarization for any 4-dimensional spacetime
and any metric theory of gravity. Curvature twist enters the transport
equations for the polarization wiggle rates, and their line-of-sight
solutions contain integrals of the curvature twist over the radiation
null geodesic from emission to detection.

We showed how curvature twist is related to three known quantities;
the magnetic part of the Weyl tensor, the second Weyl scalar ($\Psi
_{2}$) and the rotation of the timelike rest frame geodesic congruence.
Curvature twist is the projection of the magnetic part of the Weyl
tensor along the direction of propagation of the radiation. It can
also be expressed in terms of the vorticity of the rest frame geodesic
congruence about the direction of propagation of the radiation.

Polarization wiggling offers a view and represention of gravity's
effect on electromagnetic polarization that is different from and
independent of the Gravitational Faraday Effect. The main advantages
of this approach are its generality and the observables it provides
for the effects of gravity on electromagnetic polarization. Our
results apply to 4-dimensional spacetimes of any geometry and composite
radiation fields with geometric optics frequency spectra and in
any state of polarization. The results are valid for any metric
theory of gravity. 

A noticeable implication of these results is that the effect of
gravitational fields that are localized in time or space will be
recorded in the polarization of any electromagnetic radiation traversing
these fields and transmitted over long ranges. This opens the prospect
of using this effect to measure gravitational fields directly. Pending
sufficiently accurate polarization measurement techniques, it is
therefore important to recognize the reality of gravity's effect
on electromagnetic polarization and how to quantify it in terms
of observables.

\section{Declarations}

{\bfseries Funding} \ \ \ No funding was received for this work.

{\bfseries Author Contribution} \ All work for this paper was done
by the author (K. Tangen). 

{\bfseries Conflict of Interest} \ \ The author declares that there
are no conflict of interests.

{\bfseries Data Availability Statement} No data were used or produced
in this work.

\appendix

\section{Miscellaneous Derivations}

\subsection{Transverse Covariant Derivative of the Screen Rotator}\label{XRef-Subsection-22575619}

The covariant derivative of the Levi-Civita tensor vanishes \cite{Poplawski:2009fb,Carrol-2004}:
\[
\nabla _{\gamma }\epsilon _{\alpha \beta \mu \nu }=0
\]

Given a null vector $p^{\mu }$ with an expansion $p^{\mu }=p( u^{\mu
}+{\hat{p}}^{\mu }) $ in terms of a timelike unit vector $u^{\mu
}$ and a spacelike unit-vector ${\hat{p}}^{\mu }$, the transverse
covariant derivative of the screen rotator $\epsilon _{\mu \nu }$,
as defined by eq. (\ref{XRef-Equation-11237352}), expands as follows:
\[
\mathcal{D}_{\rho }\epsilon _{\mu \nu }=S_{\mu }^{\lambda }S_{\nu
}^{\gamma }\epsilon _{\alpha \beta \lambda \gamma }\nabla _{\rho
}u^{\alpha }{\hat{p}}^{\beta }+S_{\mu }^{\lambda }S_{\nu }^{\gamma
}\epsilon _{\alpha \beta \lambda \gamma }u^{\alpha }\nabla _{\rho
}{\hat{p}}^{\beta }
\]

Consider the first term, $S_{\mu }^{\lambda }S_{\nu }^{\gamma }\epsilon
_{\alpha \beta \lambda \gamma }\nabla _{\rho }u^{\alpha }{\hat{p}}^{\beta
}$:\ \ $S_{\mu }^{\lambda }S_{\nu }^{\gamma }\epsilon _{\alpha \beta
\lambda \gamma }\nabla _{\rho }u^{\alpha }{\hat{p}}^{\beta }\neq
0$ if and only if $u_{\alpha }\nabla _{\rho }u^{\alpha }\neq 0$.
However, 
\[
u_{\alpha }\nabla _{\rho }u^{\alpha }=\frac{1}{2}\nabla _{\rho }\left(
u_{\alpha }u^{\alpha }\right) =0
\]

since $u^{\alpha }$ is a unit vector. This implies that
\[
S_{\mu }^{\lambda }S_{\nu }^{\gamma }\epsilon _{\alpha \beta \lambda
\gamma }\nabla _{\rho }u^{\alpha }{\hat{p}}^{\beta }=0.
\]

By the same argument,
\[
S_{\mu }^{\lambda }S_{\nu }^{\gamma }\epsilon _{\alpha \beta \lambda
\gamma }u^{\alpha }\nabla _{\rho }{\hat{p}}^{\beta }=0.
\]

Therefore, the transverse covariant derivative of the screen rotator
vanishes:
\begin{equation}
\mathcal{D}_{\rho }\epsilon _{\mu \nu }=0.%
\label{XRef-Equation-22085013}
\end{equation}

\subsection{The Commutator $[p,u]$}\label{XRef-Subsection-22593618}

Let us evaluate the commutator $[p,u]$ for a null vector $p^{\mu
}$ and a timelike unit vector $u^{\mu }$. Its definition is
\begin{equation}
{\left[ p,u\right] }^{\beta }\equiv \nabla _{p}u^{\beta }-\nabla
_{u}p^{\beta }.%
\label{XRef-Equation-22585334}
\end{equation}

It can be decomposed into transverse and non-transverse terms by
using the screen projector:
\[
\nabla _{p}u^{\beta }-\nabla _{u}p^{\beta }=\mathcal{D}_{p}u^{\beta
}-\mathcal{D}_{u}p^{\beta }-u^{\beta }u_{\gamma }\nabla _{p}u^{\gamma
}+u^{\beta }u_{\gamma }\nabla _{u}p^{\gamma }+{\hat{p}}^{\beta }{\hat{p}}_{\gamma
}\nabla _{p}u^{\gamma }-{\hat{p}}^{\beta }{\hat{p}}_{\gamma }\nabla
_{u}p^{\gamma }.
\]

First, we find that the transverse terms vanish:
\begin{gather*}
\mathcal{D}_{p}u^{\beta }=S_{\alpha }^{\beta }\mathcal{D}_{p}u^{\alpha
}=\mathcal{D}_{p}( S_{\alpha }^{\beta }u^{\alpha }) -u^{\alpha }\mathcal{D}_{p}S_{\alpha
}^{\beta }=0
\\\mathcal{D}_{u}p^{\beta }=S_{\alpha }^{\beta }\mathcal{D}_{u}p^{\alpha
}=\mathcal{D}_{u}( S_{\alpha }^{\beta }p^{\alpha }) -p^{\alpha }\mathcal{D}_{u}S_{\alpha
}^{\beta }=0.
\end{gather*}

The remaining terms become
\begin{multline*}
-u^{\beta }u_{\gamma }\nabla _{p}u^{\gamma }+u^{\beta }u_{\gamma
}\nabla _{u}p^{\gamma }+{\hat{p}}^{\beta }{\hat{p}}_{\gamma }\nabla
_{p}u^{\gamma }-{\hat{p}}^{\beta }{\hat{p}}_{\gamma }\nabla _{u}p^{\gamma
}\\
=-\frac{p^{\beta }}{p}\nabla _{u}p+\left( \frac{p^{\beta }}{p}-u^{\beta
}\right) {\hat{p}}_{\gamma }\nabla _{p}u^{\gamma }.
\end{multline*}

Now,
\[
{\hat{p}}_{\gamma }\nabla _{p}u^{\gamma }=-\frac{\nabla _{p}p}{p},
\]

where $p\equiv -u_{\alpha }p^{\alpha }$ is the scalar momentum of
the null vector. Thus, we find that the commutator $[p,u]$ is
\begin{equation}
{\left[ p,u\right] }^{\beta }=-\frac{p^{\beta }}{p}\left( \nabla
_{u}p+\frac{\nabla _{p}p}{p}\right) +u^{\beta }\frac{\nabla _{p}p}{p}.%
\label{XRef-Equation-22593549}
\end{equation}

\subsection{Curvature Twist}\label{XRef-Subsection-9108468}

 Using that 
\[
\epsilon ^{\lambda \mu }R_{\mu \alpha \beta \gamma }\epsilon ^{\alpha
}=\epsilon ^{\lambda \mu }S_{\mu }^{\rho }S_{\alpha }^{\sigma }\epsilon
^{\alpha }R_{\rho \sigma \beta \gamma }=\frac{1}{2}\epsilon ^{\lambda
\mu }( S_{\mu }^{\rho }S_{\alpha }^{\sigma }-S_{\mu }^{\sigma }S_{\alpha
}^{\rho }) \epsilon ^{\alpha }R_{\rho \sigma \beta \gamma },
\]

we get by the use of Proposition \ref{XRef-Proposition-122083225}:
\[
\epsilon _{\lambda }^{*}\epsilon ^{\lambda \mu }R_{\mu \alpha \beta
\gamma }\epsilon ^{\alpha }=\frac{1}{2}\epsilon _{\lambda }^{*}\epsilon
^{\lambda \mu }\epsilon _{\mu \alpha }\epsilon ^{\rho \sigma }R_{\rho
\sigma \beta \gamma }\epsilon ^{\alpha }=\frac{1}{2}\epsilon _{\lambda
}^{*}( -S_{\alpha }^{\lambda }) \epsilon ^{\alpha }\epsilon ^{\rho
\sigma }R_{\rho \sigma \beta \gamma }.
\]

Then, by using that $\epsilon _{\lambda }^{*}S_{\alpha }^{\lambda
}\epsilon ^{\alpha }=1$ and applying the decomposition of $p^{\alpha
}$ from eq. (\ref{XRef-Equation-62782635}), using the antisymmetry
of the Riemann tensor in the two last indices, we obtain the following
identity:
\begin{equation}
\epsilon _{\gamma }^{*}\epsilon ^{\gamma \mu }R_{\mu \lambda \alpha
\beta }p^{\alpha }u^{\beta }\epsilon ^{\lambda }=-\frac{p}{2}\epsilon
^{\rho \sigma }R_{\rho \sigma \beta \gamma }{\hat{p}}^{\beta }u^{\gamma
}.%
\label{XRef-Equation-225163422}
\end{equation}

The Riemann tensor has several index symmetries. Among these are
the antisymmetry of the first and last two indices
\begin{equation}
R_{\mu \alpha \beta \gamma }=R_{\left[ \mu \alpha \right] \beta
\gamma }=R_{\mu \alpha [ \beta \gamma ] },%
\label{XRef-Equation-22516335}
\end{equation}

the switching relation
\begin{equation}
R_{\mu \alpha \beta \gamma }=R_{\beta \gamma \mu \alpha }%
\label{XRef-Equation-225163346}
\end{equation}

and the permutation relation, which can be written
\begin{equation}
R_{\gamma \beta \rho \sigma }=-R_{\gamma \rho \sigma \beta }-R_{\gamma
\sigma \beta \rho }=R_{\gamma \rho \beta \sigma }-R_{\gamma \sigma
\beta \rho }.%
\label{XRef-Equation-22516420}
\end{equation}

The right-hand side of eq. (\ref{XRef-Equation-225163422}) can be
rewritten by using the antisymmetry of eq. (\ref{XRef-Equation-22516335})
and the switching relation of eq. (\ref{XRef-Equation-225163346}).
Eq. (\ref{XRef-Equation-225163422}) then takes the form
\begin{equation}
\epsilon _{\gamma }^{*}\epsilon ^{\mu \gamma }R_{\mu \lambda \alpha
\beta }p^{\alpha }u^{\beta }\epsilon ^{\lambda }=\frac{p}{2}u^{\gamma
}{\hat{p}}^{\beta }\epsilon ^{\rho \sigma }R_{\gamma \beta \rho
\sigma }.%
\label{XRef-Equation-225163748}
\end{equation}

By using the permutation symmetry of eq. (\ref{XRef-Equation-22516420}),
we obtain the following identity:
\begin{equation}
\epsilon _{\gamma }^{*}\epsilon ^{\mu \gamma }R_{\mu \lambda \alpha
\beta }p^{\alpha }u^{\beta }\epsilon ^{\lambda }=p \mathcal{Z},
\end{equation}

where
\begin{equation}
\mathcal{Z}\equiv u^{\gamma }{\hat{p}}^{\beta }\epsilon ^{\rho \sigma
}R_{\gamma \rho \beta \sigma }=\frac{1}{2}u^{\gamma }{\hat{p}}^{\beta
}\epsilon ^{\rho \sigma }R_{\gamma \beta \rho \sigma }%
\label{XRef-Equation-11762721}
\end{equation}

is the curvature twist scalar.

\subsection{Curvature Twist in Terms of the Weyl Tensor}\label{XRef-Subsection-11765546}

 $\mathcal{Z}$ can be expressed in terms of the Weyl tensor by applying
the relationship between the Weyl and Riemann tensors \cite{Carrol-2004}:
\[
R_{\rho \sigma \mu \nu }=C_{\rho \sigma \mu \nu }+\left( g_{\rho
[\mu }R_{\left. \nu \right] \sigma }-g_{\sigma [\mu }R_{\left. \nu
\right] \rho }\right) -\frac{1}{3}g_{\rho [\mu }g_{\left. \nu \right]
\sigma }R,
\]

where $R$ is the Ricci scalar. Let us first evaluate the right-hand
side of eq. (\ref{XRef-Equation-11762721}) term by term. First,
\begin{multline*}
u^{\rho }{\hat{p}}^{\mu }\epsilon ^{\sigma \nu }( g_{\rho [\mu }R_{\left.
\nu \right] \sigma }-g_{\sigma [\mu }R_{\left. \nu \right] \rho
}) =\\
\frac{1}{2}\left( u^{\rho }{\hat{p}}^{\mu }g_{\rho \mu }\epsilon
^{\sigma \nu }R_{\nu \sigma }-{\hat{p}}^{\mu }\epsilon ^{\sigma
\nu }{u}_{\nu }R_{\mu \sigma }-u^{\rho }\epsilon ^{\sigma \nu }{\hat{p}}_{\sigma
}R_{\nu \rho }+u^{\rho }{\hat{p}}^{\mu }\epsilon ^{\sigma \nu }g_{\sigma
\nu }R_{\mu \rho }\right) .
\end{multline*}

All terms on the right-hand side vanish because of the antisymmetry
and transversality of the screen rotator, $\epsilon ^{\sigma \nu
}$. Next,
\[
u^{\rho }{\hat{p}}^{\mu }\epsilon ^{\sigma \nu }\frac{1}{3}g_{\rho
[\mu }g_{\left. \nu \right] \sigma }R=\frac{R}{6}\left( u^{\rho
}{\hat{p}}^{\mu }g_{\rho \mu }\epsilon ^{\sigma \nu }g_{\nu \sigma
}-u^{\rho }{\hat{p}}_{\sigma }\epsilon ^{\sigma \nu }g_{\rho \nu
}\right) .
\]

Also here, the terms on the right-hand side vanish due to the antisymmetry
and transversality of the screen rotator, $\epsilon ^{\sigma \nu
}$. Thus, curvature twist can be expressed in terms of the Weyl
tensor as follows:
\begin{equation}
\mathcal{Z}\equiv u^{\rho }{\hat{p}}^{\mu }\epsilon ^{\sigma \nu
}R_{\rho \sigma \mu \nu }=u^{\rho }{\hat{p}}^{\mu }\epsilon ^{\sigma
\nu }C_{\rho \sigma \mu \nu }.
\end{equation}

It is then straight forward to prove the alternative expression
\begin{equation}
\mathcal{Z}=\frac{1}{2}u^{\gamma }{\hat{p}}^{\beta }\epsilon ^{\rho
\sigma }R_{\gamma \beta \rho \sigma }=\frac{1}{2}u^{\gamma }{\hat{p}}^{\beta
}\epsilon ^{\rho \sigma }C_{\gamma \beta \rho \sigma }.%
\label{XRef-Equation-1026182310}
\end{equation}

\section{Representing the State of Polarization of Electromagnetic
Radiation}
\label{XRef-AppendixSection-430133648}

An arbitrary state of polarization can be defined in terms of the
four Stokes parameters. We will now review the polarization representations
applied in this paper. It uses the nomenclature and definitions
of Section \ref{XRef-Section-44183620} above.

\subsection{The Polarization Matrix}

The polarization state of a classical field of electromagnetic radiation
is completely characterized by the four Stokes parameters \cite{Born-Wolf-Optics,Jackson-Electrodynamics}.
They are measurable quantities that can be expressed in terms of
the electric field vector and its projections against two mutually
orthogonal unit vectors that both are transverse to the direction
of propagation of the radiation, $\hat{p}$. Let $e_{A}^{\mu }, A=1,2$
denote the two transverse unit vectors, the {\itshape polarization
basis}. They form a basis for any transverse field. Since the electric
field vector $E^{\mu }$ is transverse to the direction of propagation,
it can be defined by its projection onto this basis: $E_{A}\equiv
e_{A}^{\mu }E_{\mu }$. We can then define the {\itshape coherency
matrix} $I_{\mathrm{AB}}$ in terms of $E_{A}$ as the expectation
value
\begin{equation}
I_{\mathrm{AB}}\equiv \left\langle  E_{A}E_{B}^{*}\right\rangle
\label{XRef-Equation-515165650}
\end{equation}

evaluated as a time average over a large number of radiation cylces
\cite{Born-Wolf-Optics}. We notice that the use of a complex representation
of the electric field in eq. (\ref{XRef-Equation-515165650}) is
permissible as long as the average is taken over a large number
of radiaton cycles, see Born and Wolf for details \cite{Born-Wolf-Optics}.\ \ \ $I_{\mathrm{AB}}$
is a $2\times 2$ Hermitian matrix. Written in matrix form, it can
be expressed in terms of the four Stokes parameters $I, Q, U, V$
as
\[
I_{\mathrm{AB}}=\frac{1}{2}{\left( \begin{array}{cc}
 I+Q & U-i V \\
 U+i V & I-Q
\end{array}\right) }_{\mathrm{AB}}.
\]

We define the polarization matrix $P_{\mathrm{AB}}$ as the traceless
part of the relative coherency matrix $i_{\mathrm{AB}}\equiv I_{\mathrm{AB}}/I$.
Written in terms of the relative Stokes parameters $\mathcal{Q}\equiv
Q/I$, $\mathcal{U}\equiv U/I$ and $\mathcal{V}\equiv V/I$, it takes
the form
\begin{equation}
P_{\mathrm{AB}}\equiv i_{\mathrm{AB}}-\frac{1}{2}\delta _{\mathrm{AB}}=\frac{1}{2}{\left(
\begin{array}{cc}
 \mathcal{Q} & \mathcal{U}-i \mathcal{V} \\
 \mathcal{U}+i \mathcal{V} & -\mathcal{Q}
\end{array}\right) }_{\mathrm{AB}}.%
\label{XRef-Equation-51785922}
\end{equation}

The polarization degree $\mathcal{P}\equiv \sqrt{\mathcal{Q}^{2}+\mathcal{U}^{2}+\mathcal{V}^{2}}$
can be expressed in terms of the polarization matrix as 
\begin{equation}
\mathcal{P}^{2}=2P^{\mathrm{BA}}P_{\mathrm{AB}}=2P^{\mathrm{AB}}P_{\mathrm{AB}}^{*}.%
\label{XRef-Equation-62820155}
\end{equation}

\subsection{The Coherency and Polarization Tensors}

Define the coherency tensor for an electromagnetic field, similar
to the definition of the coherency matrix of eq. (\ref{XRef-Equation-515165650}):
\begin{equation}
I_{\mu \nu }\equiv \left\langle  E_{\mu }E_{\nu }^{*}\right\rangle
.%
\label{XRef-Equation-515173059}
\end{equation}

The expectation value $\langle E_{\mu }E_{\nu }^{*}\rangle $ of
eq. (\ref{XRef-Equation-515173059}) is assumed to be a time average
over a large number of radiation cycles. The intensity $I$ is the
scalar
\[
I\equiv g^{\mu \nu }I_{\mu \nu }=S^{\mu \nu }I_{\mu \nu }.
\]

We retain the coherency matrix $I_{\mathrm{AB}}$ of\ \ eq. (\ref{XRef-Equation-515165650})
by projecting $I_{\mu \nu }$ with the polarization basis $e_{A}^{\mu
}$:
\[
I_{\mathrm{AB}}=e_{A}^{\mu }e_{B}^{\nu }I_{\mu \nu }.
\]

In the geometric optics limit, the amplitude $\mathcal{E}^{\mu }\equiv
i a p \epsilon ^{\mu }$ of a plane electromagnetic wave can be assumed
constant over a cycle. This implies that the Stokes parameters,
expressed in terms of time-averaged squares of the electric field,
$\langle E_{A}E_{B}^{*}\rangle $ are additive when adding up contributions
from multiple wave components. Therefore, the Stokes parameters
of an electromagnetic field expressed as a composition of plane
waves can be computed by summing up the Stokes parameters of each
component \cite{Born-Wolf-Optics}. This makes the Stokes parameters
useful for characterizing any radiation field of arbitrary composition
and arbitrary degree of polarization. Hence, the coherency tensor
$I_{\mu \nu }$ of eq. (\ref{XRef-Equation-515173059}) is a convenient
covariant and gauge invariant field representation of any electromagnetic
radiation field.

Define the relative coherency tensor $i_{\mu \nu }$ as
\[
i_{\mu \nu }\equiv \frac{I_{\mu \nu }}{I}.
\]

For a single plane wave with polarization vector $\epsilon ^{\mu
}$,  the relative coherency tensor is 
\[
i_{\mu \nu }^{\mathrm{pw}}( \epsilon ) =\epsilon _{\mu }\epsilon
_{\nu }^{*}.
\]

Following the definition of the polarization matrix in eq. (\ref{XRef-Equation-51785922}),
the polarization tensor $P_{\mu \nu }$ is defined as
\begin{equation}
P_{\mu \nu }=i_{\mu \nu }-\frac{1}{2}S_{\mu \nu }.%
\label{XRef-Equation-430132253}
\end{equation}

The polarization degree $\mathcal{P}$ can be expressed in terms
of the polarization tensor as
\begin{equation}
\mathcal{P}^{2}=2P^{\nu \mu }P_{\mu \nu }=2P^{\mu \nu }P_{\mu \nu
}^{*}.%
\label{XRef-Equation-510181610}
\end{equation}

\subsection{Stokes Parameters}\label{XRef-Subsection-4309528}

From eq. (\ref{XRef-Equation-51785922}), the polarization matrix
$P_{\mathrm{AB}}$ can be expanded in terms of the Pauli matrices
$\sigma _{a},a=1,2,3$. The polarization matrix then expands as 
\[
P_{\mathrm{AB}}=\frac{1}{2}\mathcal{S}_{a}\sigma _{\mathrm{AB}}^{a},
\]

where the relative Stokes parameters are $\mathcal{S}_{1}=\mathcal{U},
\mathcal{S}_{2}=\mathcal{V},\mathcal{S}_{3}=\mathcal{Q}$. Conversely,
the relative Stokes parameters can be found by contracting the polarization
matrix with the Pauli matrices:
\[
\mathcal{S}_{a}=\sigma _{a}^{\mathrm{AB}}P_{\mathrm{AB}}.
\]

Then, by selecting a transverse polarization basis $e_{\mu }^{A},
A=1,2$, we can define a Pauli tensor basis
\[
\sigma _{\mu \nu }^{a}=\sigma _{\mathrm{AB}}^{a}e_{\mu }^{A}e_{\nu
}^{B}.
\]

Based on the relationship between Pauli matrices, this basis satisfies
\[
{\left( \sigma ^{a}\sigma ^{b}\right) }_{\mu \nu }=\delta ^{\mathrm{ab}}S_{\mu
\nu }+i \varepsilon ^{\mathrm{abc}}\sigma _{\mu \nu }^{c}.
\]

This allows us to expand the polarization tensor in the Pauli tensor
basis as
\[
P_{\mu \nu }=\frac{1}{2}\mathcal{S}_{a}\sigma _{\mu \nu }^{a},
\]

Conversely, the relative Stokes parameters can be found by contracting
the polarization tensor with the Pauli tensors:
\[
\mathcal{S}_{a}=\sigma _{a}^{\mu \nu }P_{\mu \nu }.
\]

Let $s^{q}, q=Q,U$, denote the two symmetric Pauli matrices, with
$s^{Q}\equiv \sigma ^{3}$ and $s^{U}\equiv \sigma ^{1}$. Defining
$\varepsilon ^{\mathrm{QU}}=-\varepsilon ^{\mathrm{UQ}}=\varepsilon
^{312}=1$, the relationship between the two symmetric Pauli matrices
can be written
\[
{\left( s^{p}s^{q}\right) }_{\mu \nu }=\delta ^{\mathrm{pq}}S_{\mu
\nu }+ \varepsilon ^{\mathrm{pq}}\epsilon _{\mu \nu },
\]

and the antisymmetric Pauli matrix $\sigma ^{2}$ can be expressed
as 
\[
\sigma _{\mu \nu }^{2}=-i \epsilon _{\mu \nu }. 
\]

The polarization tensor then takes a form that separates the symmetric
and antisymmetric parts:
\begin{equation}
P_{\mu \nu }=-\frac{i}{2} \mathcal{V} \epsilon _{\mu \nu }+\frac{1}{2}S_{q}s_{\mu
\nu }^{q}, q=Q,U.%
\label{XRef-Equation-430131121}
\end{equation}

Eq. (\ref{XRef-Equation-430131121}) can be inverted to obtain the
circular polarization degree in terms of the polarization tensor:
\begin{equation}
\mathcal{V}=i \epsilon ^{\mu \nu }P_{\mu \nu }.%
\label{XRef-Equation-1029171427}
\end{equation}

\end{document}